\newlength{\punctuationfootlength}
\newcommand{\treecompalgo}{\textsc{Proton}\xspace}
\title{Strongly Hyperbolic Unit Disk Graphs} %TODO Please add
\author{Thomas Bläsius}{Karlsruhe Institute of Technology\\{Karlsruhe, Germany}}{thomas.blaesius@kit.edu}{https://orcid.org/0000-0003-2450-744X}{}%mandatory, please use full name; only 1 author per \author macro; first two parameters are mandatory, other parameters can be empty.
\author{Tobias Friedrich}{Hasso Plattner Institute, University of Potsdam\\{Potsdam, Germany}}{tobias.friedrich@hpi.de}{https://orcid.org/0000-0003-0076-6308}{}%mandatory, please use full name; only 1 author per \author macro; first two parameters are mandatory, other parameters can be empty.
\author{Maximilian Katzmann}{Karlsruhe Institute of Technology\\{Karlsruhe, Germany}}{maximilian.katzmann@kit.edu}{https://orcid.org/0000-0002-9302-5527}{}%mandatory, please use full name; only 1 author per \author macro; first two parameters are mandatory, other parameters can be empty.
\author{Daniel Stephan}{GSV Algorithm Consulting UG (haftungsbeschränkt) \& Co. KG\\{Potsdam, Germany}}{stephan@algorithm-consulting.com}{}{}%mandatory, please use full name; only 1 author per \author macro; first two parameters are mandatory, other parameters can be empty.
\authorrunning{Bläsius et al.} %TODO mandatory. First: Use abbreviated first/middle names. Second (only in severe cases): Use first author plus 'et al.'
\keywords{hyperbolic geometry, unit disk graphs, greedy routing,
  hyperbolic random graphs, graph
  classes} %TODO mandatory; please add comma-separated list of keywords
\DeclareMathOperator{\diam}{diam}
\newcommand{\dist}{d}
\DeclareMathOperator{\acos}{acos}
\DeclareMathOperator{\acosh}{acosh}
\DeclareMathOperator{\asinh}{asinh}
\newcommand{\DiskRep}{\phi}
\newcommand{\Additive}{\ell}
\newcommand{\LabelSize}{\lambda}
\begin{document}

\maketitle

\begin{abstract}
  The class of Euclidean unit disk graphs is one of the most
  fundamental and well-studied graph classes with underlying geometry.
  In this paper, we identify this class as a special case in the
  broader class of \emph{hyperbolic unit disk graphs} and introduce
  \emph{strongly hyperbolic unit disk graphs} as a natural counterpart
  to the Euclidean variant.  In contrast to the grid-like structures
  exhibited by Euclidean unit disk graphs, strongly hyperbolic
  networks feature hierarchical structures, which are also observed in
  complex real-world networks.

  We investigate basic properties of strongly hyperbolic unit disk
  graphs, including adjacencies and the formation of cliques, and
  utilize the derived insights to demonstrate that the class is useful
  for the development and analysis of graph algorithms.  Specifically,
  we develop a simple greedy routing scheme and analyze its
  performance on strongly hyperbolic unit disk graphs in order to
  prove that routing can be performed more efficiently on such
  networks than in general.
\end{abstract}

\newpage

\section{Introduction}

% Why graph classes?
Studying networks in terms of \emph{graph classes} based on certain
properties is a fundamental tool in graph theory.  Instead of having
to consider all possible graphs, we can focus on the ones in a certain
class, which allows us to get a more fine-grained understanding of
their structural properties and the complexity of graph problems.
Additionally, it facilitates the development of more efficient
algorithms that are tailored towards the characteristics of the
considered networks.

% Certain graph classes are used for certain tasks
Different classes can be utilized in different contexts.  For example,
the characteristics of wireless communication networks are captured
naturally in \emph{Euclidean unit disk graphs}~\cite{ccj-u-90,
  hs-b-95}, i.e., graphs where vertices can be identified with disks
of equal size in the Euclidean plane and any two are adjacent if and
only if their disks intersect.  In this paper, we use the following
formalization.  Let $G = (V, E)$ be an undirected graph.  A
\emph{(Euclidean) unit disk representation} of $G$ is a mapping
$\DiskRep \colon V \to \mathbb R^2$ together with a \emph{threshold
  radius} $R$ such that $\{u, v\} \in E$ if and only if the distance
between $\DiskRep(u)$ and $\DiskRep(v)$ is at most~$R$.  Then, the
graph~$G$ is a \emph{(Euclidean) unit disk graph} if it has a unit
disk representation.  In such graphs, the generally NP-complete
problem of finding a maximum clique can be solved in polynomial
time~\cite{ccj-u-90, rs-r-03}, and routing can be performed more
efficiently than in general graphs~\cite{kmrs-rudg-18}.

% A new natural graph class
In this paper, we study a related graph class where the Euclidean
ground space is replaced with the hyperbolic plane.  The result is a
generalization of the Euclidean variant, containing networks with a
broader range of structural properties.  Formally, a graph $G$ is a
\emph{hyperbolic unit disk graph}, if there exists a \emph{hyperbolic
  unit disk representation}
$\DiskRep \colon V \rightarrow \mathbb{H}^2$ together with a threshold
radius $R$, such that $\{u, v\} \in E$, if and only if the hyperbolic
distance between the vertex representations is at most
$\dist_{\mathbb{H}^2}(\DiskRep(u), \DiskRep(v)) \le R$.  We note that
the threshold radius~$R$ is part of the representation and can thus
depend on the graph.  The choice of $R$ does not matter in Euclidean
space, as scaling $R$ and all coordinates $\DiskRep(\cdot)$ by the
same factor yields the same adjacencies.  In contrast, there is no
scaling operation in the hyperbolic plane that leaves relative
distances intact.\footnote{Under the common assumption that the
  curvature is $-1$, such a scaling operation does not exist. The term
  ``unit disk'' is still justified as we could instead fix $R = 1$ and
  allow for different curvatures.}  As a result, the size of the
considered region and the threshold radius \emph{do} have an impact on
the structure of the graphs in the hyperbolic setting.

To understand this effect, which is visualized in
Figure~\ref{fig:hyperbolic-unit-disk-graphs}, first consider some
region, say a disk $D$ of radius $R'$, in the Euclidean plane and
assume we distribute vertices evenly in~$D$.  Then the resulting
Euclidean unit disk graph resembles a \emph{grid-like} structure (with
a density depending on the threshold radius $R$ and the radius $R'$ of
$D$).  That is, in the sparse setting, we only find small cliques,
while separators and treewidth as well as the diameter are large, and
we observe a homogeneity among the vertices, in the sense that all
neighborhoods feature similar characteristics.  Essentially, as in a
grid, the graph looks the same no matter from which vertex it is
viewed.

As the hyperbolic plane resembles the Euclidean plane locally, we can
achieve the same grid-like structures by choosing a very small radius
$R'$ and an even smaller threshold radius~$R$
(Figure~\ref{fig:hyperbolic-unit-disk-graphs}~(left)).  In fact, by
scaling the Euclidean unit disk representation of a graph into a
sufficiently small region, we can realize the same adjacencies in the
hyperbolic plane and obtain the following.

\begin{theorem}\label{thm:euclidean-is-hyperbolic}
  Every Euclidean unit disk graph is a hyperbolic unit disk graph.
\end{theorem}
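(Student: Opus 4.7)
The plan is to exploit the fact that the hyperbolic plane is locally Euclidean: by compressing a Euclidean unit disk representation into a sufficiently small neighbourhood of a basepoint of $\mathbb{H}^2$, the hyperbolic metric on that neighbourhood agrees with the Euclidean one up to an arbitrarily small multiplicative error, which will be enough to preserve the finitely many edge/non-edge distinctions encoded by the given graph.

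Concretely, let $(\DiskRep, R)$ be a Euclidean unit disk representation of a graph $G = (V, E)$. Since $V$ is finite, I set $D_{\le} := \max\{\,\|\DiskRep(u) - \DiskRep(v)\|_2 : \{u,v\} \in E\,\}$ and $D_{>} := \min\{\,\|\DiskRep(u) - \DiskRep(v)\|_2 : u \neq v,\; \{u,v\} \notin E\,\}$, with the conventions $D_{\le} := 0$ when $E = \emptyset$ and $D_{>} := +\infty$ when $G$ is complete. In either case $D_{\le} \le R < D_{>}$. I would work in the Poincaré disk model of $\mathbb{H}^2$, whose metric is
\[
\dist_{\mathbb{H}^2}(z, w) \;=\; 2\,\asinh\!\left(\frac{|z - w|}{\sqrt{(1 - |z|^2)(1 - |w|^2)}}\right),
\]
and, for a small parameter $\varepsilon > 0$, define the candidate representation $\DiskRep'_\varepsilon(v) := \varepsilon \cdot \DiskRep(v)$; for $\varepsilon$ sufficiently small all image points lie in the open unit disk.

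The key estimate is a uniform Taylor expansion of the distance formula around $z = w = 0$: using the boundedness of the $\DiskRep(v)$, one obtains
\[
\dist_{\mathbb{H}^2}(\DiskRep'_\varepsilon(u), \DiskRep'_\varepsilon(v))
\;=\; 2\varepsilon\,\|\DiskRep(u) - \DiskRep(v)\|_2 \cdot \bigl(1 + O(\varepsilon^2)\bigr),
\]
uniformly over all pairs $(u, v)$. For $\varepsilon$ small enough, every edge pair then has hyperbolic distance at most $2\varepsilon D_{\le}(1 + o(1))$ while every non-edge pair has hyperbolic distance at least $2\varepsilon D_{>}(1 - o(1))$; because $D_{\le} < D_{>}$, these two ranges are disjoint. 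Choosing any threshold $R'$ in the remaining gap (for instance $R' := \varepsilon(D_{\le} + D_{>})$) then certifies $(\DiskRep'_\varepsilon, R')$ as a hyperbolic unit disk representation of $G$. The main bit of work is the uniform Taylor estimate, but it is routine given that the image stays inside a compact subset of the open Poincaré disk, so I expect no substantial obstacle beyond bookkeeping.
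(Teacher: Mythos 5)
Your proposal is correct and follows essentially the same route as the paper's proof: scale the Euclidean representation into a small neighbourhood of the origin in the Poincar\'e disk, show that there the hyperbolic distance equals the (scaled) Euclidean distance up to a multiplicative error vanishing with the scaling parameter, and use the finite, strictly positive gap between the largest edge distance and the smallest non-edge distance to place the hyperbolic threshold in between. The only cosmetic difference is that you make the gap explicit via $D_{\le} < D_{>}$ while the paper phrases it through a factor $\tau > 1$ and a continuity argument on its two bounding functions; also note that your concrete choice $R' = \varepsilon(D_{\le} + D_{>})$ needs the trivial adjustment of picking a finite threshold when the graph is complete.
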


\begin{figure}[t!]
  \centering
  \includegraphics{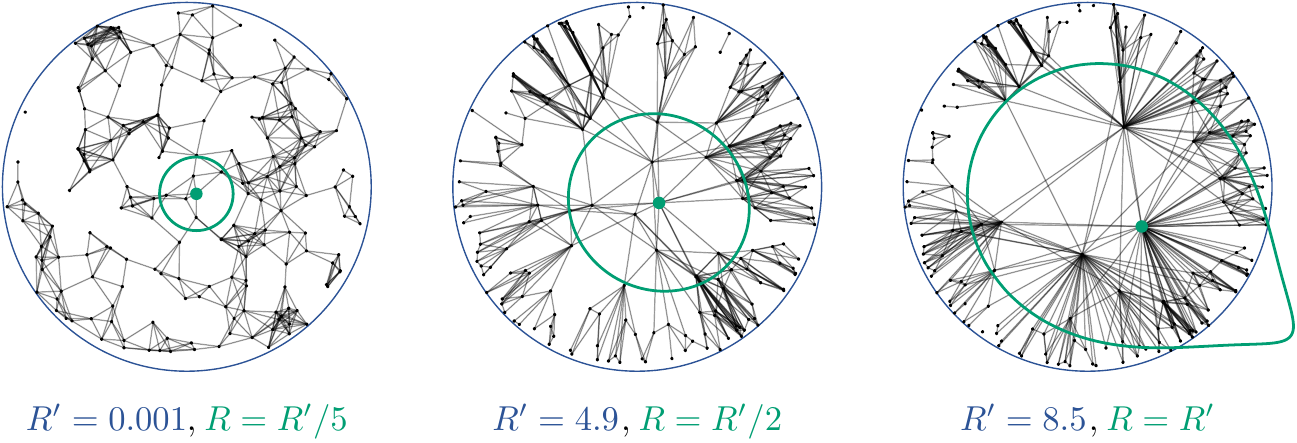}
  \caption{Hyperbolic unit disk graphs with different ground space and
    threshold radii.  The representations have been scaled such that
    the ground spaces \emph{appear} to have the same size, while their
    actual sizes are denoted by $R'$.  \textbf{(Left)} The ground
    space is very small and the threshold radius even smaller, leading
    to grid-like structures. \textbf{(Center)} Ground space and
    threshold radius are increased, hierarchies start form but grid
    like structures remain.  \textbf{(Right)} Ground space and
    threshold have the same large value, leading to hierarchical
    structures.}
  \label{fig:hyperbolic-unit-disk-graphs}
\end{figure}

Beyond that, we can increase the radii $R'$ and $R$.  Then, the
grid-like structures start to vanish and \emph{hierarchical}
structures begin to form
(Figure~\ref{fig:hyperbolic-unit-disk-graphs}~(center)).  Eventually,
we reach the \emph{strongly hyperbolic} setting where only
hierarchical and no grid-like structures remain
(Figure~\ref{fig:hyperbolic-unit-disk-graphs}~(right)).  There,
vertices are rather heterogeneous, with respect to their degree and
what neighborhoods look like.  The diameter is small, while large
cliques can form.  We note that the treewidth is also large, just as
in grid-like graphs.  However, in hierarchical graphs this is an
artifact of the large cliques, while in grid-like graphs we observe
large treewidth \emph{despite} the fact that only small cliques form.
In hierarchical networks vertices connect via hubs, which connect via
larger hubs, and so on.  The hubs explicitly exhibit the hierarchy in
the graph structure.  As a result, the graph looks very different when
viewed from vertices on different levels in the hierarchy.

Formally, we say that a graph is a \emph{strongly hyperbolic unit disk
  graph} if it admits a hyperbolic unit disk representation in which
$\DiskRep$ maps all vertices to points within a disk whose radius
matches the threshold~($R' = R$).  For better understanding, we
recommend using the interactive
visualization\footnote{\url{https://thobl.github.io/hyperbolic-unit-disk-graph/}},
which lets the user change the size of the ground space, allowing to
smoothly transition between Euclidean and strongly hyperbolic unit
disk graphs.

To paint the big picture, hyperbolic unit disk graphs comprise two
extremes: Euclidean unit disk graphs with grid-like structures on one
side and strongly hyperbolic unit disk graphs with hierarchical
structures on the other.  Therefore, if we want to design algorithms
for grid-like structures, it makes sense to analyze them on Euclidean
unit disk graphs.  For hierarchical structures, strongly hyperbolic
unit disk graphs are a good choice.

\paragraph*{Related Concepts.}

To the best of our knowledge, intersection graphs of hyperbolic unit
disks, or hyperbolic unit balls, have so far only been considered by
Kisfaludi-Bak~\cite{k-higqpt-20}.  There, for every $\rho > 0$, a
graph is said to be in the graph class
$\text{UBG}_{\mathbb H^d}(\rho)$ (\emph{UBG = unit ball graph}) if its
vertices can be mapped into $\mathbb H^d$ such that vertices have
distance at most $2\rho$ if and only if they are adjacent.  There are
two core differences compared to our definition of hyperbolic unit
disk graphs.  First, it allows for higher dimensions.  Secondly, it is
parameterized by the radius, i.e., $\text{UBG}_{\mathbb H^d}(\rho)$
describes an infinite family of graph classes rather than a single
class.

This second difference is somewhat subtle but rather important.
Consider the class $\text{UBG}_{\mathbb H^d}(\rho)$ for a fixed radius
$\rho$.  Moreover, assume we want to study graphs in
$\text{UBG}_{\mathbb H^d}(\rho)$ that are sparse; for the sake of
argument, assume constant average degree.  Then, for an increasing
number of vertices $n$, the region of $\mathbb H^d$ spanned by the
vertices has to grow, as otherwise the density of the graph grows with
$n$.  Thus, for sufficiently large $n$, the radius $\rho$ is
arbitrarily small compared to the region spanned by the vertices,
yielding grid-like structures (see discussion above).  Thus, for fixed
$\rho$, large graphs in $\text{UBG}_{\mathbb H^d}(\rho)$ are grid-like
rather than hierarchical.  This means that asymptotic statements for
the classes $\text{UBG}_{\mathbb H^d}(\rho)$ do not translate to the
hierarchical structures in the class of strongly hyperbolic unit disk
graphs.

A second related concept are hyperbolic random graphs~\cite{kpk-h-10},
which are basically random strongly hyperbolic unit disk graphs.  A
hyperbolic random graph is obtained by assigning each vertex a random
point in a disk of radius $R \approx 2 \log(n)$ and connecting two
vertices if and only if their distance is at most $R$.  This yields
graphs that resemble certain real-world networks, as they have small
diameter, high clustering, and a power-law degree distribution whose
exponent can be adjusted using the probability distribution of the
vertex positions.  This is not the case for graphs in
$\text{UBG}_{\mathbb H^d}(\rho)$ as the connection radius cannot
increase with the graph size.  Nevertheless, every hyperbolic random
graph is a strongly hyperbolic unit disk graph and thus any statement
shown for the latter also holds for the former.

Hyperbolic random graphs have also been studied in a noisy setting,
where, with some small probability, distant vertices are adjacent and
close vertices are not adjacent.  Similarly,
Kisfaludi-Bak~\cite{k-higqpt-20} also studies a noisy variant of the
class $\text{UBG}_{\mathbb H^d}(\rho)$.  It would be interesting to
also study (strongly) hyperbolic unit disk graphs in a noisy setting.
This is, however, beyond the scope of this paper and left for future
research.

\paragraph*{Contribution.}

Beyond the generalization of Euclidean unit disk graphs to hyperbolic
unit disk graphs, we identify strongly hyperbolic unit disk graphs as
a natural counterpart to the Euclidean special case and provide the
first insights into their structural and algorithmic properties.  In
particular, we study fundamental criteria relating the coordinates of
vertices to their adjacency and investigate the formation of cliques
(Section~\ref{sec:shdg}).

Using these insights, we follow up on prior empirical efforts towards
understanding how an underlying hyperbolic geometry facilitates
efficient routing on internet-like networks~\cite{bpk-si-10,
  pkbv-gfdsf-10}, and utilize strongly hyperbolic unit disk graphs to
obtain theoretical performance guarantees, proving that routing in
such networks can be performed more efficiently than in general
(Section~\ref{sec:shdg-routing}).  While similar results have been
obtained on the grid-like Euclidean unit disk
graphs~\cite{kmrs-rudg-18}, our analysis covers networks with
hierarchical structures.

In particular, it includes hyperbolic random graphs, which are used to
represent real-world complex networks like the
internet~\cite{bpk-si-10}, where routing plays an important role.  By
developing a simple routing scheme, which is interesting in its own
right, we show that greedy routing on such graphs can be performed
with a stretch of at most $3$, while asymptotically almost surely
requiring at most $\mathcal{O}(\log^4 n)$ bits of storage per vertex
and taking $\mathcal{O}(\log^2 n)$ time per routing decision.

\section{(Strongly) Hyperbolic Unit Disk Graphs}
\label{sec:shdg}

Throughout the paper we consider the \emph{polar-coordinate model} of
the hyperbolic plane $\mathbb{H}^2$.  There, we have a designated
\emph{pole} $O \in \mathbb{H}^2$, together with a \emph{polar axis},
i.e., a reference ray starting at $O$.  A point $p$ is identified by
its \emph{radius} $r(p)$, denoting the hyperbolic distance to~$O$, and
its \emph{angle}~$\varphi(p)$, denoting the angular distance between
the polar axis and the ray from~$O$ through $p$.  In our figures we
interpret these values as polar coordinates in the Euclidean plane.
The disk of radius $R$ centered at $p$ is denoted by $D_R(p)$.  When
$p = O$ we simply write $D_R$.  The hyperbolic distance between points
$p$ and $q$ is given by
\begin{align}
  \label{eq:hyperbolic-distance}
  \dist_{\mathbb{H}^2}(p, q) = \acosh\Big(\cosh\big( r(p) \big)\cosh\big( r(q) \big) - \sinh\big( r(p) \big)\sinh\big( r(q) \big) \cos\big(\Delta_\varphi(p, q)\big)\Big),
\end{align}
where $\cosh(x) = (e^x + e^{-x}) / 2$, $\sinh(x) = (e^x - e^{-x}) /
2$, and $\Delta_\varphi(p, q) = \pi - | \pi - | \varphi(p) -
\varphi(q) ||$ denotes the angular distance between $p$ and $q$.
Without loss of generality, we assume that the representation
$\DiskRep$ of a strongly hyperbolic unit disk graph maps the vertices
into a disk of radius $R$ that is centered at $O$.  For the sake of
readability, we typically associate a vertex $v$ with its mapping
$\DiskRep(v)$ and denote the set of vertices lying in a region
$A \subseteq D_R$ with $V(A)$.

\subsection{Relation to Euclidean Unit Disk Graphs}

We start with the proof of Theorem~\ref{thm:euclidean-is-hyperbolic},
stating that every Euclidean unit disk graph is also a hyperbolic unit
disk graph.  To this end, we utilize the \emph{Poincaré disk} model of
hyperbolic space.  There, the infinite two-dimensional hyperbolic
plane is mapped to the interior of the unit circle in the Euclidean
plane, which is referred to as the Poincaré disk $\mathbb{D}$.  In
this model, points are identified using Cartesian coordinates.  Given
two points $p, q \in \mathbb{D}$, we can compute their hyperbolic
distance by interpreting them as vectors and computing
\begin{align}
  \label{eq:poincare-distance}
  \dist_{\mathbb{D}}(p, q) = 2 \asinh \left( \frac{||p - q||}{\sqrt{(1 - ||p||^2)(1 - ||q||^2)}} \right),
\end{align}
where $|| \cdot ||$ denotes the Euclidean norm.

\begin{proof}[Proof of Theorem~\ref{thm:euclidean-is-hyperbolic}]
  Let $G = (V, E)$ be a Euclidean unit disk graph with representation
  $\DiskRep_E \colon V \rightarrow \mathbb{R}^2$ and threshold radius
  $R_E$.  To prove that $G$ is a hyperbolic unit disk graph, we show
  that there exists a hyperbolic unit disk representation
  $\DiskRep_H \colon V \rightarrow \mathbb{D}$ of $G$ with threshold
  radius $R_H$.  First, note that we can take a disk of radius
  $\rho \in (0, 1)$ in the Euclidean plane and scale the coordinates
  $\DiskRep_E$ and threshold $R_E$ by a positive factor such that all
  vertices lie in this disk, while maintaining a valid Euclidean unit
  disk representation of $G$ with coordinates $\DiskRep_{E}^{\rho}$
  and threshold radius $R_E^{\rho}$.  We now set
  $\DiskRep_H \coloneqq \DiskRep_E^\rho$ for a sufficiently
  small~$\rho$.  In the following, we identify a vertex $v \in V$ with
  its coordinate $\DiskRep_H(v)$, for the sake of readability.
  
  To conclude the proof, it remains to show that there exists an $R_H$
  such that for two vertices $u, v \in V$, we have
  $\dist_{\mathbb{D}}(u, v) \le R_H$ if and only if their Euclidean
  distance is at most $\dist_E(u, v) \le R_E^\rho$.  Note that when
  $u$ and $v$ are not adjacent, there exists a $\tau > 1$ such that
  $\dist_E(u, v) > \tau \cdot R_E^\rho$.  In the following, we
  determine upper and lower bounds on $\dist_{\mathbb{D}}(u, v)$ in
  terms of $\dist_E(u, v)$ and show that, with decreasing $\rho$, they
  approach each other faster than~$\tau$ approaches $1$.  Eventually,
  the bounds are sufficiently tight, such that scaling the lower bound
  by $\tau$ yields something larger than the upper bound, allowing us
  to find a threshold~$R_H$ that fits between the two.

  The hyperbolic distance between $u$ and $v$ can be computed via
  Equation~\eqref{eq:poincare-distance}.  Note that
  $||u - v|| = \dist_E(u, v)$.  Moreover, since $\DiskRep_H$ maps all
  vertices to points in a disk of radius $\rho$, we have
  $0 \le ||u||, ||v|| \le \rho$ and thus
  \begin{align*}
    (1 - \rho^2) = \sqrt{(1 - \rho^2)(1 - \rho^2)} \le \sqrt{(1 - ||u||^2)(1 - ||v||^2)} \le 1,
  \end{align*}
  from which we can derive that
  \begin{align*}
    2 \asinh(\dist_E(u, v)) \le \dist_{\mathbb{D}}(u, v) \le 2 \asinh\left( \frac{1}{1 - \rho^2} \dist_E(u, v)\right).
  \end{align*}
  Since $\asinh(x) \le x$ for all $x \ge 0$, we can set
  $\hat{g}(\rho) \coloneqq 2/(1 - \rho^2)$ and simplify the upper
  bound to
  $\dist_{\mathbb{D}}(u, v) \le \hat{g}(\rho) \cdot \dist_E(u, v)$.
  In order to simplify the lower bound, we use a Taylor-approximation
  of $\asinh(x)$ around $0$ and express the remainder using the
  Lagrange form (see~\cite[Equations 25.2.24 and 25.2.25]{as-hmf-65}),
  in which case there exists a $\xi \in (0, x)$ such that
  \begin{align*}
    \asinh(x) = x - \frac{\xi}{2(1 + \xi^2)^{3/2}} \cdot x^2.
  \end{align*}
  Since the factor is monotonically increasing for $\xi \in [0, 1/2]$,
  we can choose $\rho \le 1/2$ sufficiently small such that
  $0 \le \xi \le \dist_E(u, v) \le 2\rho$, allowing us to bound
  \begin{align*}
    2 \asinh(\dist_E(u, v)) &\ge 2 \cdot \left(\dist_E(u, v) - \frac{\rho}{(1 + 4\rho^2)^{3/2}} \dist_E(u, v)^2 \right) \\
                            &\ge 2\cdot \left(1 - \frac{\rho}{(1 + 4 \rho^2)^{3/2}} \right) \dist_E(u, v) \eqqcolon \check{g}(\rho) \cdot \dist_E(u, v).
  \end{align*}
  where the second inequality follows from the fact that for
  $\rho \le 1/2$ we have $\dist_E(u, v) \le 2\rho \le 1$ and thus
  $\dist_E(u, v)^2 \le \dist_E(u, v)$. We obtain
  \begin{align*}
    \check{g}(\rho) \cdot \dist_E(u, v) \le \dist_{\mathbb{D}}(u, v) \le \hat{g}(\rho) \cdot \dist_E(u, v).
  \end{align*}
  
  Now consider the case where $\{u, v\} \in E$.  Then, we have
  \begin{align*}
    &&\dist_E(u, v) &\le R_E^\rho  \\
    \Leftrightarrow && \hat{g}(\rho) \cdot \dist_E(u, v) &\le \hat{g}(\rho) \cdot R_E^\rho \\
    \Rightarrow && \dist_{\mathbb{D}}(u, v) &\le \hat{g}(\rho) \cdot R_E^\rho,
  \end{align*}
  where the first step follows from the fact that $\hat{g}(\rho) > 0$
  for $\rho < 1$, and the second step is due to the above inequality.
  On the other hand, if $\{u, v\} \notin E$, we have
  \begin{align*}
    && \dist_E(u, v) &> \tau \cdot R_E \\
    \Leftrightarrow && \check{g}(\rho) \cdot \dist_E(u, v) &> \tau \cdot \check{g}(\rho) \cdot R_E^\rho \\
    \Rightarrow && \dist_{\mathbb{D}}(u, v) &> \tau \cdot \check{g}(\rho) \cdot R_E^\rho, 
  \end{align*}
  where the first step is valid since $\check{g}(\rho) > 0$ for
  $\rho > 0$, and the second step, again, follows from the above
  inequality.  It follows that, if there exists a $\rho^* > 0$ such
  that $\hat{g}(\rho^*) < \tau \cdot \check{g}(\rho^*)$, then there
  also exists an
  $R_H \in [\hat{g}(\rho^*), \tau \cdot \check{g}(\rho^*)] \cdot
  R_E^{\rho^*}$, such that $\DiskRep_H$ and $R_H$ yield a valid
  hyperbolic unit disk graph representation of $G$.
  
  Note that $\hat{g}(\rho) < \tau \cdot \check{g}(\rho)$ holds for
  $\rho = 0$.  Since both functions are continuous on $[0, 1)$, so is
  the function $h(\rho) = \tau \cdot \check{g}(\rho) - \hat{g}(\rho)$,
  with $h(0) > 0$.  By applying the $(\varepsilon, \delta)$-definition
  of continuity, we can derive that there exist a $\delta > 0$ such
  that for all $\rho \in (-\delta, \delta)$ we have
  $|h(0) - h(\rho)| \le \varepsilon$ for $\varepsilon = h(0) / 2 > 0$.
  In particular, this means that there exists a $\rho^* > 0$ such that
  $h(\rho^*) > 0$, implying that
  $\hat{g}(\rho^*) < \tau \cdot \check{g}(\rho^*)$.
\end{proof}

\subsection{Adjacency}
\label{sec:shdg-adjacency}

Similar results to the ones described in this subsection have been
determined on hyperbolic random graphs before (see, e.g.,
\cite{gugelmann2012random}).  Here we verify under which requirements
they also hold on strongly hyperbolic unit disk graphs.  By
definition, two vertices in a strongly hyperbolic unit disk graph $G$
are adjacent, if and only if their hyperbolic distance is at most $R$.
Consequently, we can imagine that each vertex~$v$ is equipped with a
neighborhood disk $D_R(v)$.  That is, $N(v) = V(D_R(v))$.  The
following lemma shows that moving such a neighborhood disk closer to
the center of $D_R$ only increases the region of $D_R$ that it covers.

\begin{lemma}
  \label{lem:smaller-radius-increases-disk-cover}
  Let $R$ be a radius and let $p_1, p_2 \in D_R$ be points with
  $r(p_1) \le r(p_2)$ and $\varphi(p_1) = \varphi(p_2)$.  Then,
  $D_R(p_1) \supseteq D_R(p_2) \cap D_R$.
\end{lemma}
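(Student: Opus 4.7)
The plan is to reduce the claim to a one-variable inequality about the $\cosh$ of the hyperbolic distance, and then to invoke a short strict-convexity argument. I would start by fixing an arbitrary $q \in D_R(p_2) \cap D_R$ and setting $r \coloneqq r(q)$ and $\Delta \coloneqq \Delta_\varphi(p_1, q)$. Because $\varphi(p_1) = \varphi(p_2)$, the angular distance from $q$ is the same to both centers, so~\eqref{eq:hyperbolic-distance} reads $\cosh\dist_{\mathbb{H}^2}(p_i,q) = h(r(p_i))$ with
\[
  h(x) \coloneqq \cosh(x)\cosh(r) - \sinh(x)\sinh(r)\cos(\Delta).
\]
Under this rewriting, the two hypotheses $q \in D_R(p_2)$ and $q \in D_R$ translate into $h(r(p_2)) \le \cosh R$ and $h(0) = \cosh(r) \le \cosh R$ respectively, and the task reduces to deducing $h(r(p_1)) \le \cosh R$ from $0 \le r(p_1) \le r(p_2)$.

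The key analytic step is that a direct differentiation yields $h''(x) = h(x)$. Moreover, for every $x \ge 0$ the value $h(x)$ is itself the $\cosh$ of a hyperbolic distance --- namely, from $q$ to the point with radius $x$ on the ray through $p_1$ --- and is therefore at least $1$. Hence $h'' > 0$ on $[0,\infty)$, so $h$ is strictly convex there, and a strictly convex function attains its maximum on any compact interval at an endpoint. Applying this to $[0, r(p_2)]$, which by hypothesis contains $r(p_1)$, gives
\[
  h(r(p_1)) \;\le\; \max\{h(0),\, h(r(p_2))\} \;\le\; \cosh R,
\]
which is equivalent to $\dist_{\mathbb{H}^2}(p_1, q) \le R$, i.e.\ $q \in D_R(p_1)$, completing the argument.

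The step worth pausing over -- and the point where a more naive approach fails -- is that monotonicity of $h$ alone is not enough: when $\cos(\Delta) > 0$, the function $h$ has an interior minimum and genuinely decreases before increasing again, so one cannot simply bound $h(r(p_1))$ by $h(r(p_2))$. The decisive observation is that the assumption $q \in D_R$ pins down the value at the \emph{other} endpoint, $h(0) = \cosh(r) \le \cosh R$; once both endpoint values are controlled, the convexity of $h$ propagates the bound across the whole interval. I do not expect any further obstacle beyond tracking that both hypotheses $r(q) \le R$ and $\dist_{\mathbb{H}^2}(p_2,q) \le R$ are used exactly once, each to control one endpoint.
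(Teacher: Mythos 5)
Your proof is correct. Conceptually it is the same argument as the paper's: both exploit that $p_1$ lies on the radial segment between $O$ and $p_2$, that the two hypotheses $q \in D_R(p_2)$ and $q \in D_R$ control the distance from $q$ to the two endpoints of that segment, and that a convexity property propagates the bound to every point in between. The difference is in how the convexity is supplied. The paper argues synthetically: since $\dist_{\mathbb{H}^2}(q, p_2) \le R$ and $r(q) \le R$, both $p_2$ and $O$ lie in $D_R(q)$, and because hyperbolic disks are (geodesically) convex, the whole triangle $q\, p_2\, O$ --- in particular $p_1$ --- lies in $D_R(q)$. You instead prove the needed convexity analytically from Equation~\eqref{eq:hyperbolic-distance}: with the angular distance to $q$ fixed, $h(x) = \cosh(x)\cosh(r) - \sinh(x)\sinh(r)\cos(\Delta)$ satisfies $h'' = h \ge 1$, so $\cosh$ of the distance is strictly convex along the ray and is maximized at an endpoint of $[0, r(p_2)]$. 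The paper's route is shorter but leans on the convexity of hyperbolic disks as a known geometric fact; yours is self-contained, using only the distance formula stated in the paper, at the cost of a short computation. Your closing remark is also on point: plain monotonicity in the radius would not suffice (when $\cos(\Delta) > 0$ the function $h$ dips before rising), which is exactly why the lemma intersects with $D_R$ and why the endpoint $h(0) = \cosh(r(q)) \le \cosh R$ must be used.
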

\begin{proof}
  \begin{figure}[t]
    \centering
    \includegraphics{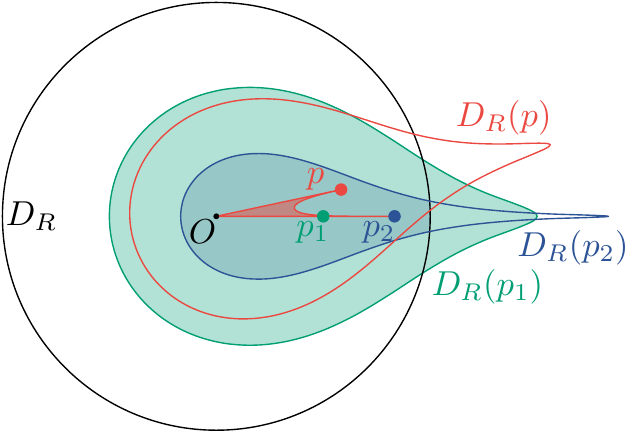}
    \caption{Visualization of the proof
      of~\Cref{lem:smaller-radius-increases-disk-cover}.  Point $p_1$
      has a smaller radius than~$p_2$, both having the same angular
      coordinate.  Consequently, $D_R(p_1)$ (green region) is a
      superset of $D_R(p_2) \cap D_R$ (blue region).  The triangle
      formed by the points $p, p_2,$ and~$O$ is contained in $D_R(p)$
      (both red).}
    \label{fig:smaller-radius-neighborhood}
  \end{figure}
  Let $p \in D_R(p_2) \cap D_R$ be a point and note that
  $\dist_{\mathbb{H}^2}(p, p_2) \le R$.  Now consider the triangle
  spanned by the points $p, p_2$, and the origin $O$.  This triangle
  is completely contained in the disk $D_R(p)$, as
  $\dist_{\mathbb{H}^2}(p, p_2) \le R$ and $r(p) \le R$, as shown
  in~\Cref{fig:smaller-radius-neighborhood}.  Since disks are convex
  and $p_1$ lies on the line from $O$ to $p_2$, it is part of the
  triangle and therefore also contained in the disk.  Consequently,
  $\dist_{\mathbb{H}^2}(p, p_1) \le R$ and thus $p \in D_R(p_1)$.
\end{proof}
Consequently, moving a vertex towards the center does not decrease its
neighborhood.

\begin{corollary}
  \label{lem:smallerRadiusIncreasesNeighborhood}
  Let $G$ be a strongly hyperbolic unit disk graph with radius $R$ and
  let $v_1, v_2$ be vertices with $r(v_1) \le r(v_2) \le R$ and
  $\varphi(v_1) = \varphi(v_2)$. Then, $N(v_1) \supseteq N(v_2)$.
\end{corollary}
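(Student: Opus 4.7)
The plan is to derive the corollary as an essentially immediate consequence of Lemma~\ref{lem:smaller-radius-increases-disk-cover} by translating from regions in the hyperbolic plane to vertex neighborhoods. First, I would invoke the defining property of a strongly hyperbolic unit disk graph: all vertices are mapped into the disk $D_R$ centered at the pole. Hence for any vertex $v$, its neighborhood can be written as $N(v) = V(D_R(v) \cap D_R)$, since intersecting $D_R(v)$ with $D_R$ does not discard any vertices.

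Next, I would apply Lemma~\ref{lem:smaller-radius-increases-disk-cover} directly to the two points $\DiskRep(v_1)$ and $\DiskRep(v_2)$, which satisfy the lemma's hypotheses by assumption: both lie in $D_R$, they share an angular coordinate, and $r(v_1) \le r(v_2)$. The lemma then yields
\[
  D_R(v_2) \cap D_R \;\subseteq\; D_R(v_1).
\]
Taking vertex sets on both sides (and intersecting the right side with $D_R$, which is vacuous) gives $N(v_2) = V(D_R(v_2) \cap D_R) \subseteq V(D_R(v_1) \cap D_R) = N(v_1)$, which is exactly the claim.

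There is essentially no obstacle here; the only minor point to be careful about is bookkeeping around the definition of $N(\cdot)$ (e.g., whether a vertex is considered its own neighbor), but since both sides of the inclusion are treated uniformly this is a non-issue. In short, the corollary is the vertex-level restatement of the point-level containment established in the preceding lemma.
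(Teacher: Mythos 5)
Your proposal is correct and matches the paper, which states this corollary as an immediate consequence of Lemma~\ref{lem:smaller-radius-increases-disk-cover} without further argument: since $N(v) = V(D_R(v) \cap D_R)$ for every vertex, the point-level containment $D_R(v_2) \cap D_R \subseteq D_R(v_1)$ directly yields $N(v_2) \subseteq N(v_1)$. Your bookkeeping via vertex sets is exactly the intended reasoning.
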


In the following, we investigate in greater detail under which
circumstances two vertices are adjacent.  Consider two vertices $v_1$
and $v_2$ in $G$ with radii $r_1$ and $r_2$, respectively.  Clearly,
the two are adjacent if $r_1 + r_2 \le R$.  When $r_1 + r_2 > R$, the
hyperbolic distance between them depends on their angular distance
$\Delta_\varphi(v_1, v_2)$.  More precisely, for vertices of fixed
radii, increasing the angular distance also increases the hyperbolic
distance.  Let $\theta(r_1, r_2)$ denote the angular distance, such
that the hyperbolic distance between $v_1$ and $v_2$ is exactly $R$.
That is, for $\Delta_{\varphi}(v_1, v_2) \le \theta(r_1, r_2)$ we have
$\dist_{\mathbb{H}^2}(v_1, v_2) \le R$, meaning $v_1$ and $v_2$ are
adjacent.  We can compute $\theta(r_1, r_2)$ by using the hyperbolic
distance function in~\Cref{eq:hyperbolic-distance}, setting the
distance equal to $R$, and solving for the angular distance.  That is,
\begin{align}
  \label{eq:theta-angle-definition}
  \theta(r_1, r_2) = \acos\left( \frac{\cosh(r_1)\cosh(r_2) - \cosh(R)}{\sinh(r_1)\sinh(r_2)} \right).
\end{align}

Tight asymptotic bounds on $\theta(r_1, r_2)$ have been derived before
\cite[Lemma 3.2]{k-sahrg-16}.  The following lemma holds for all
$R > 0$.

\begin{lemma}
  \label{lem:tight-max-angle-adjacent}
  Let $R > 0$ and $r_1, r_2 \in (0, R]$ with $r_1 + r_2 \ge R$ be
  given. Then,
  \begin{equation*}
    2\sqrt{e^{R - r_1 - r_2} + e^{-R - r_1 - r_2} - (e^{-2 r_1} + e^{-2 r_2})} \le \theta(r_1, r_2) \le \pi \sqrt{e^{R - r_1 - r_2}}.
  \end{equation*}
\end{lemma}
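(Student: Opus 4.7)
The plan is to reduce both inequalities to straightforward bounds on $(1-X)/2$ where $X$ is the argument of $\acos$ in \Cref{eq:theta-angle-definition}, using the half-angle identity $\acos(X) = 2\arcsin\!\sqrt{(1-X)/2}$ together with the standard sandwich $y \le \arcsin(y) \le \tfrac{\pi}{2} y$ valid for $y \in [0,1]$. Since $r_1 + r_2 \ge R$ guarantees $X \in [-1,1]$, both bounds are applicable. So once I estimate $(1-X)/2$ from above and below, the two target inequalities follow immediately.

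First, I would simplify $1 - X$. Using the identity $\cosh(r_1)\cosh(r_2) - \sinh(r_1)\sinh(r_2) = \cosh(r_1 - r_2)$, the numerator of $1-X$ becomes $\cosh(R) - \cosh(r_1 - r_2)$, so that
\[
  \frac{1-X}{2} \;=\; \frac{\cosh(R) - \cosh(r_1 - r_2)}{2\sinh(r_1)\sinh(r_2)}.
\]
I would set $a = r_1 + r_2$ and $b = |r_1 - r_2|$ so $a \ge R \ge b$ holds under the hypotheses.

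For the upper bound, I would apply the product-to-sum identity $\cosh(u) - \cosh(v) = 2\sinh\!\tfrac{u+v}{2}\sinh\!\tfrac{u-v}{2}$ to both numerator and denominator, yielding
\[
  \frac{1-X}{2} \;=\; \frac{\sinh\!\tfrac{R+b}{2}\sinh\!\tfrac{R-b}{2}}{\sinh\!\tfrac{a+b}{2}\sinh\!\tfrac{a-b}{2}}.
\]
Writing $\sinh(x) = \tfrac{1}{2}e^x(1 - e^{-2x})$ and comparing each numerator factor to the corresponding denominator factor, both ratios are at most $e^{(R-a)/2}$, because the exponential prefactors contribute $e^{(R-a)/2}$ and the $(1 - e^{-2\cdot})$ factors are increasing in their argument. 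Multiplying the two gives $(1-X)/2 \le e^{R-a}$, and $\arcsin(y) \le \tfrac{\pi}{2} y$ then yields $\theta \le \pi \sqrt{e^{R-r_1-r_2}}$.

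For the lower bound, I would use the crude but sufficient estimate $\sinh(r_i) \le e^{r_i}/2$, so that $2\sinh(r_1)\sinh(r_2) \le e^{a}/2$ and hence
\[
  \frac{1-X}{2} \;\ge\; \frac{2\bigl(\cosh(R) - \cosh(r_1 - r_2)\bigr)}{e^{a}}.
\]
Expanding the cosines and using $e^{\pm(r_1-r_2)-a} = e^{-2r_2}$ or $e^{-2r_1}$ (depending on sign), the right-hand side is precisely $e^{R-r_1-r_2} + e^{-R-r_1-r_2} - e^{-2r_1} - e^{-2r_2}$. Applying $\arcsin(y) \ge y$ gives the desired lower bound.

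The only delicate point is the upper bound: one needs both $\sinh$-ratios (not just one) to contribute a factor of $e^{(R-a)/2}$ in order to reach the target exponent $R - r_1 - r_2$; this is where the hypothesis $R \le r_1 + r_2$ is essential, since it makes both $(1 - e^{-(R\pm b)}) \le (1 - e^{-(a\pm b)})$ hold simultaneously. Everything else reduces to routine manipulations of hyperbolic functions.
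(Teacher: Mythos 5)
Your argument is correct, and half of it coincides with the paper's: the lower bound --- rewriting the numerator of $1-\cos(\theta(r_1,r_2))$ as $\cosh(R)-\cosh(r_1-r_2)$ and then bounding the denominator via $\sinh(r_i)\le e^{r_i}/2$ --- is the paper's lower-bound argument almost verbatim, and your half-angle identity combined with $y\le\arcsin(y)\le\tfrac{\pi}{2}y$ is exactly equivalent to the paper's estimates $\sqrt{2x}\le\acos(1-x)\le\pi\sqrt{x/2}$. The genuine difference is the upper bound. Starting from \Cref{eq:theta-angle-definition}, the paper expands everything into exponentials, replaces the denominator $\sinh(r_1)\sinh(r_2)$ by $e^{r_1+r_2}/4$, and then needs a separate argument that the four leftover terms may be dropped because their sum is nonnegative; you instead factor numerator and denominator of $(1-\cos\theta)/2$ using $\cosh u-\cosh v=2\sinh\tfrac{u+v}{2}\sinh\tfrac{u-v}{2}$ and bound each of the two resulting $\sinh$-ratios by $e^{(R-r_1-r_2)/2}$. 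This buys you two things: the computation stays exact until the very last step, and it is insensitive to the sign of $\cosh(r_1)\cosh(r_2)-\cosh(R)$ --- enlarging only the denominator, as the paper does, is justified as written only when that numerator is nonnegative (i.e.\ $\theta\le\pi/2$), whereas your factorization covers the obtuse regime (e.g.\ $r_1+r_2$ close to $R$, where $\theta$ approaches $\pi$) without any case distinction. Two small points you should state explicitly: the bound $|r_1-r_2|\le R$ (which follows from $r_1,r_2\in(0,R]$, not from $r_1+r_2\ge R$) is what guarantees that the argument of $\acos$ is at most $1$ and that $\sinh\tfrac{R-|r_1-r_2|}{2}\ge 0$, while the hypothesis $r_1+r_2\ge R$ is what gives both the lower endpoint $-1$ and the monotonicity used in the two ratio bounds.
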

\begin{proof}
  We start by applying the cosine function on both sides
  of~\Cref{eq:theta-angle-definition}, which makes it easier to deal
  with the right hand side for now.  This yields
  \begin{align}
    \label{eq:solve-distance-for-angle}
    \cos(\theta(r_1, r_2)) = \frac{\cosh(r_1)\cosh(r_2) - \cosh(R)}{\sinh(r_1)\sinh(r_2)}.
  \end{align}
  We consider the upper bound on $\theta(r_1, r_2)$ first.  Note that
  we aim to eventually apply the inverse cosine function to revert the
  above step.  Since this function is monotonically decreasing, we
  first determine a \emph{lower} bound on $\cos(\theta(r_1, r_2))$, in
  order to obtain an upper bound on $\theta(r_1, r_2)$.  Recall that
  $\cosh(x) = (e^{x} + e^{-x})/2$ and $\sinh(x) = (e^{x} - e^{-x})/2$,
  and note that $\sinh(x) \le e^{x} / 2$.  Thus, the above equation
  can be bounded by
  \begin{align*}
    \cos(\theta(r_1, r_2)) &\ge \frac{1/4(e^{r_1} + e^{-r_1})(e^{r_2} + e^{-r_2}) - 1/2(e^{R} + e^{-R})}{1/4e^{r_1 + r_2}} \\
                           &= \frac{e^{r_1 + r_2} + e^{r_1 - r_2} + e^{r_2 - r_1} + e^{-r_1 - r_2} - 2e^{R} - 2e^{-R}}{e^{r_1 + r_2}} \\
                           &= 1 - 2e^{R - r_1 - r_2} + e^{-2r_1} + e^{-2r_2} + e^{-2(r_1 + r_2)} - 2e^{-R - r_1 - r_2}.
  \end{align*}
  We now argue that the remaining expression can be bounded by
  dropping the last four terms since their sum is non-negative.  First
  note that $e^{x} \ge 0$ for all $x \in \mathbb{R}$.  Consequently,
  the second to last term is non-negative and it remains to show that
  $e^{-2r_1} + e^{-2r_2} \ge 2e^{-R - r_1 - r_2}$, which can be done
  by showing that $e^{-2r_1}, e^{-2r_2} \ge e^{-R - r_1 - r_2}$.  In
  the following, we show that this is the case for $e^{-2r_1}$.  The
  proof for $e^{-2r_2}$ is analogous.  Note that $r_1 - r_2 \le R$,
  since $r_1, r_2 \in (0, R]$ by assumption.  It follows that
  $r_1 \le R + r_2$ and thus $e^{-2r_1} \ge e^{-R - r_1 - r_2}$.  We
  can conclude that
  $\cos(\theta(r_1, r_2)) \ge 1 - 2e^{R - r_1 - r_2}$.  The claimed
  upper bound now follows by applying the inverse cosine and observing
  that $\acos(1 - x) \le \pi \sqrt{x / 2}$ holds for all
  $x \in [0, 2]$.

  It remains to prove that the claimed lower bound on
  $\theta(r_1, r_2)$ is valid.  Again, we start
  with~\Cref{eq:solve-distance-for-angle}.  However, this time we
  determine an \emph{upper} bound on $\cos(\theta(r_1, r_2))$.  First,
  we apply the identity
  \begin{align*}
    \cosh(x)\cosh(y) = \sinh(x)\sinh(y) + \cosh(x - y),
  \end{align*}
  which yields
  \begin{align*}
    \cos(\theta(r_1, r_2)) &= \frac{\sinh(r_1)\sinh(r_2) + \cosh(r_1 - r_2) - \cosh(R)}{\sinh(r_1)\sinh(r_2)} \\
                           &= 1 - \frac{\cosh(R) - \cosh(r_1 - r_2)}{\sinh(r_1)\sinh(r_2)}.
  \end{align*}
  Using the definition of $\cosh$ and the fact that
  $\sinh(x) \le e^{x}/2$ to conclude that
  \begin{align*}
    \cos(\theta(r_1, r_2)) &\le 1 - \frac{1/2(e^R + e^{-R}) - 1/2(e^{r_1 - r_2} + e^{r_2 - r_1})}{1/4 e^{r_1 + r_2}} \\
                           &= 1 - 2(e^{R - r_1 - r_2} + e^{-R - r_1 - r_2} - (e^{-2r_2} + e^{-2r_1})) \\
  \end{align*}
  The claim then follows by applying the inverse cosine function and
  observing that $\acos(1 - x) \ge \sqrt{2x}$ is valid for all
  $x \in [0, 2]$.
\end{proof}

We note that, while the above bounds are easier to work with than the
exact function and are generally applicable due to the few constraints
on the considered radii, the lower bound is still a bit tedious to
work with.  However, by introducing some minor requirements, we can
obtain a slightly weaker bound that can be worked with more easily.

\begin{corollary}
  \label{lem:max-angle-adjacent}
  Let $R \ge 1$ and $r_1, r_2 \in (0, R]$ with $r_1 + r_2 \ge R$ and
  $|r_1 - r_2| \le R - 1$ be given. Then,
  \begin{equation*}
    \sqrt{e^{R - r_1 - r_2}} \le \theta(r_1, r_2) \le \pi \sqrt{e^{R - r_1 - r_2}}.
  \end{equation*}
\end{corollary}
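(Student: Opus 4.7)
The upper bound is immediate: it is exactly the upper bound from \Cref{lem:tight-max-angle-adjacent}, which holds for all $R > 0$ and all $r_1, r_2 \in (0, R]$ with $r_1 + r_2 \ge R$. So the entire task is to upgrade the lower bound of \Cref{lem:tight-max-angle-adjacent} to the cleaner form $\sqrt{e^{R-r_1-r_2}}$ under the added hypotheses $R \ge 1$ and $|r_1 - r_2| \le R - 1$.

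My plan is to square both sides and reduce the desired inequality to
\begin{align*}
  3 e^{R - r_1 - r_2} + 4 e^{-R - r_1 - r_2} \ge 4\bigl(e^{-2 r_1} + e^{-2 r_2}\bigr),
\end{align*}
which is precisely what is needed to replace the expression under the square root in \Cref{lem:tight-max-angle-adjacent} by $e^{R-r_1-r_2}/4$ after accounting for the leading factor of $2$. I would then drop the non-negative term $4 e^{-R-r_1-r_2}$ on the left and, assuming without loss of generality that $r_1 \le r_2$, bound the right-hand side by $8 e^{-2 r_1}$. This reduces the task to verifying
\begin{align*}
  3 e^{R - r_1 - r_2} \ge 8 e^{-2 r_1}, \qquad \text{i.e.,}\qquad e^{r_2 - r_1 - R} \le \tfrac{3}{8}.
\end{align*}

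The hypothesis $|r_1 - r_2| \le R - 1$ gives $r_2 - r_1 - R \le -1$, hence $e^{r_2 - r_1 - R} \le 1/e$, and a direct numerical comparison $1/e < 3/8$ closes the argument. The main thing to watch out for is the WLOG step and keeping track of which of the three terms appearing in the square root of \Cref{lem:tight-max-angle-adjacent} can be safely discarded; everything else is a routine chain of exponential estimates. The condition $r_1 + r_2 \ge R$ is inherited from the hypotheses and is used only through \Cref{lem:tight-max-angle-adjacent}, while $R \ge 1$ is what allows $|r_1 - r_2| \le R - 1$ to be a non-trivial constraint producing the crucial factor of $1/e$.
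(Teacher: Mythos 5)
Your proposal is correct and follows essentially the same route as the paper: both take the lower bound of \Cref{lem:tight-max-angle-adjacent}, discard the non-negative term $e^{-R-r_1-r_2}$, and use $|r_1-r_2| \le R-1$ to show the remaining correction is at most a $3/4$-fraction of $e^{R-r_1-r_2}$ via the factor $2/e$. The only cosmetic difference is that you square and argue per term after a WLOG, whereas the paper factors out $e^{R-r_1-r_2}$ and bounds $2e^{-R}\cosh(r_1-r_2) \le 2/e \le 3/4$ using the monotonicity of $\cosh$.
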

\begin{proof}
  The upper bound immediately follows
  from~\Cref{lem:tight-max-angle-adjacent}.  By utilizing the lower
  bound from the same lemma, we obtain
  \begin{align*}
    \theta(r_1, r_2) &\ge 2\sqrt{e^{R - r_1 - r_2} + e^{-R - r_1 - r_2} - (e^{-2 r_1} + e^{-2 r_2})}. \\
                     &\ge 2\sqrt{e^{R - r_1 - r_2} - (e^{-2 r_1} + e^{-2 r_2})}. \\
                     &= 2\sqrt{e^{R - r_1 - r_2} \left(1 - e^{-R}(e^{r_1 - r_2} + e^{- (r_1 - r_2)})\right)},
  \end{align*}
  where the second inequality is valid since
  $e^{-R - r_1 - r_2} \ge 0$.  To prove the claim, it thus suffices to
  show that the remaining negative part is at most $3/4$, which can be
  done as follows.  First note that
  \begin{align*}
    e^{-R}(e^{r_1 - r_2} + e^{-(r_1 - r_2)}) &= 2e^{-R} \cdot \frac{1}{2} (e^{r_1 - r_2} + e^{-(r_1 - r_2)}) = 2e^{-R} \cdot \cosh(r_1 - r_2).
  \end{align*}
  Now note that $\cosh(x)$ is symmetric about the $y$-axis and thus
  $\cosh(r_1 - r_2) = \cosh(|r_1 - r_2|)$.  Moreover, since $\cosh(x)$
  is monotonically increasing for $x \ge 0$, we can utilize the
  assumption that $|r_1 - r_2| \le R - 1$ to conclude
  \begin{align*}
    e^{-R}(e^{r_1 - r_2} + e^{-(r_1 - r_2)}) &\le 2e^{-R} \cdot \cosh(R - 1).
  \end{align*}
  Finally, since $\cosh(x) = 1/2 (e^x + e^{-x}) \le e^x$
  for all $x \ge 0$, we obtain
  \begin{align*}
    e^{-R}(e^{r_1 - r_2} + e^{-(r_1 - r_2)}) &\le 2e^{-R} \cdot e^{R - 1} = 2/e \le 3/4. \qedhere
  \end{align*}
\end{proof}

Apart from the above bounds, we highlight another property of the
function $\theta(r_1, r_2)$, for the special case where $r_1 = r_2$.
\begin{lemma}
  \label{lem:theta-mono-increasing}
  The function $\theta(r, r)$ is monotonically decreasing for
  $r \ge 0$.
\end{lemma}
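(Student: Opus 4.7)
The plan is to reduce the claim to monotonicity of the argument of the $\acos$ in the definition of $\theta$. Specializing \Cref{eq:theta-angle-definition} to $r_1 = r_2 = r$ gives
$$\theta(r, r) = \acos\left( \frac{\cosh^2(r) - \cosh(R)}{\sinh^2(r)} \right).$$
Since $\acos$ is monotonically decreasing on $[-1, 1]$, it will suffice to show that the fraction inside is monotonically non-decreasing in $r$.

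The key algebraic step is to apply the identity $\cosh^2(r) = 1 + \sinh^2(r)$ to rewrite the fraction as
$$1 + \frac{1 - \cosh(R)}{\sinh^2(r)}.$$
From here the monotonicity is immediate: $1 - \cosh(R) \le 0$ since $R \ge 0$, while $\sinh^2(r)$ is strictly increasing on $r > 0$ (its derivative $\sinh(2r)$ is positive there). Hence $(1 - \cosh(R))/\sinh^2(r)$ is a non-positive quantity whose magnitude shrinks with $r$, so the whole expression is non-decreasing in $r$. Composing with the decreasing $\acos$ gives the claim.

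The only subtlety — and the closest thing to an obstacle — is the domain. For $2r < R$, the rewritten fraction falls below $-1$ (easy to check via $\cosh(2r) = 2\cosh^2(r) - 1 < \cosh(R)$), so the $\acos$ formula is not directly defined. The interpretation consistent with the meaning of $\theta$ is that every angular distance yields adjacency, i.e.\ $\theta(r,r) = \pi$ on $[0, R/2]$; the function then drops below $\pi$ at $r = R/2$ and continues to decrease. I would spell this case split out explicitly so that the statement covers the full range $r \ge 0$ given in the lemma, and otherwise the argument is a three-line computation.
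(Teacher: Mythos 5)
Your argument is internally consistent for the statement as you read it, but it is not the statement the paper actually proves or needs. You treat the threshold $R$ as a fixed parameter and show that, for that fixed $R$, the map $r \mapsto \theta(r,r)$ is decreasing, by rewriting the $\acos$ argument as $1 + (1-\cosh(R))/\sinh^2(r)$. The paper's proof instead sets $R = r$ as well, i.e.\ it studies the diagonal where the disk radius, the threshold, and both vertex radii coincide, simplifying
\begin{align*}
  \theta(r,r) = \acos\left(\frac{\cosh(r)^2 - \cosh(r)}{\sinh(r)^2}\right) = \acos\left(1 - \frac{1}{\cosh(r)+1}\right)
\end{align*}
and concluding from the monotonicity of $\cosh$ and of $\acos$. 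That the diagonal reading is the intended one is forced by the only application, in \Cref{lem:strongly-unit-disk-clique-cover}: for a graph of radius $R < \log(2)$ one needs $\theta(R,R) \ge \theta(\log(2),\log(2))$, where the right-hand quantity is the adjacency angle of two boundary points of a disk whose radius \emph{and} threshold are both $\log(2)$.

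Writing $\theta_R$ for the angle computed with threshold $R$, your fixed-threshold monotonicity yields $\theta_R(R,R) \ge \theta_R(\log(2),\log(2))$, but $\theta$ is increasing in the threshold, so $\theta_R(\log(2),\log(2)) \le \theta_{\log(2)}(\log(2),\log(2))$ and the comparison needed in the clique-cover proof does not follow; the two monotonicities pull in opposite directions. So the gap is not in your technique — reducing to monotonicity of the $\acos$ argument is exactly the paper's route, and your treatment of the regime $2r < R$ (where the formula's argument drops below $-1$ and $\theta$ should be read as $\pi$) is careful — but in the quantity analyzed. The lemma statement is admittedly ambiguous, yet the fix is a one-line change: substitute $R = r$ before simplifying, after which the argument $1 - 1/(\cosh(r)+1)$ always lies in $[0,1]$ and the domain issue you worried about disappears entirely.
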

\begin{proof}
  Consider the definition of $\theta(r_1, r_2)$
  in~\Cref{eq:theta-angle-definition}.  By utilizing the fact that
  $r_1 = r_2 = r$, the equation simplifies to
  \begin{align*}
    \theta(r, r) = \acos\left( \frac{\cosh(r)^2 - \cosh(r)}{\sinh(r)^2} \right).
  \end{align*}
  We can now apply the identities $\cosh(x)^2 = (\cosh(2x) + 1) / 2$
  and $\sinh(x)^2 = (\cosh(2x) - 1) / 2$, both being valid for
  $x \in \mathbb{R}$, to obtain
  \begin{align*}
    \theta(r, r) &= \acos\left( \frac{1/2(\cosh(2r) + 1) - \cosh(r)}{1/2(\cosh(2r) -1)} \right) \\
                 &= \acos\left( \frac{(\cosh(2r) + 1) - 2 \cosh(r)}{\cosh(2r) -1} \right) \\
                 &= \acos\left( \frac{\cosh(2r) - 1 + 2 - 2 \cosh(r)}{\cosh(2r) -1} \right) \\
                 &= \acos\left( 1 - 2 \frac{\cosh(r) - 1}{\cosh(2r) -1} \right).
  \end{align*}
  Further, utilizing the fact that
  \begin{align*}
    \frac{\cosh(x) - 1}{\cosh(2x) - 1} = \frac{1}{2 \cosh(x) + 2},
  \end{align*}
  which is valid for all $x \in \mathbb{R}$, the above term can be
  simplified to
  \begin{align*}
    \theta(r, r) &= \acos\left( 1 - \frac{1}{\cosh(r) + 1} \right).
  \end{align*}
  Note that $\cosh(x)$ is monotonically increasing for $x \ge 0$, and
  so is the argument in the inverted cosine.  The claim follows as
  $\acos$ is monotonically decreasing.
\end{proof}

\subsection{Cliques}
\label{sec:shdg-cliques}

In the following, we examine how the underlying geometry affects the
formation of cliques.  We start by showing that the vertices lying in
$D_R(p)$, having smaller radius than $p$, form two cliques.  More
precisely, we say that a vertex set $S \subseteq V$ can be
\emph{covered by $k$ cliques}, if there exists a partitioning
$S_1, \dots, S_k$ of $S$ such that the induced subgraphs $G[S_i]$ for
$i \in [k]$ are cliques.

\begin{lemma}
  \label{lem:inner-neighborhood-clique}
  Let $G$ be a strongly hyperbolic unit disk graph with radius~$R > 0$
  and let $p \in D_R$ be a point with $r(p) = r$.  Then,
  $V(D_R(p) \cap D_r)$ can be covered by two cliques.
\end{lemma}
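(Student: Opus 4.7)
The plan is to partition $V(D_R(p) \cap D_r)$ angularly around $p$ into two halves and show that each forms a clique. After rotating so that $\varphi(p) = 0$, let $S_+$ contain the vertices with $\varphi \in [0, \pi]$ and $S_-$ contain the rest, so that on $S_+$ we have $\Delta_\varphi(p, \cdot) = \varphi(\cdot)$. By the mirror symmetry $\varphi \mapsto -\varphi$ it suffices to show that $S_+$ is a clique.

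Given $u, v \in S_+$ with $\varphi(u) \le \varphi(v)$, the goal is $\dist_{\mathbb{H}^2}(u, v) \le R$. The main construction is the auxiliary point $u^\ast$ with $r(u^\ast) = r$ and $\varphi(u^\ast) = \varphi(u)$, i.e., the radial projection of $u$ outward onto the circle through $p$. Since $r(u) \le r = r(u^\ast)$ and $\varphi(u) = \varphi(u^\ast)$, \Cref{lem:smaller-radius-increases-disk-cover} yields $D_R(u^\ast) \cap D_R \subseteq D_R(u)$. Because $v \in D_r \subseteq D_R$, this reduces the task to showing $v \in D_R(u^\ast)$.

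To establish this, I would compare $u^\ast$ with $p$: both lie at radius $r$, and by construction
\[
  \Delta_\varphi(u^\ast, v) = \varphi(v) - \varphi(u) \le \varphi(v) = \Delta_\varphi(p, v).
\]
Inspecting~\Cref{eq:hyperbolic-distance}, the hyperbolic distance between two points of fixed radii is monotonically non-decreasing in the angular distance, since the coefficient $\sinh(r)\sinh(r(v))$ multiplying $-\cos(\Delta_\varphi)$ is non-negative and $\cos$ is decreasing on $[0, \pi]$. Therefore $\dist_{\mathbb{H}^2}(u^\ast, v) \le \dist_{\mathbb{H}^2}(p, v) \le R$, as desired.

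The main obstacle I anticipate is spotting the auxiliary construction in the first place; once one decides to project $u$ radially outward to the circle through $p$, \Cref{lem:smaller-radius-increases-disk-cover} does the heavy lifting and only a direct monotonicity check on~\Cref{eq:hyperbolic-distance} remains. Degenerate configurations, such as vertices at the pole (where the angular coordinate is undefined) or on the axis $\varphi = 0$, cause no trouble: the former is adjacent to every $v \in D_r$ by the triangle inequality, and the latter can be freely assigned to either angular half.
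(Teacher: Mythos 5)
Your proof is correct and follows essentially the same route as the paper: the same split of $D_R(p) \cap D_r$ into two angular halves, the same use of \Cref{lem:smaller-radius-increases-disk-cover}, and the same monotonicity of the distance in \Cref{eq:hyperbolic-distance} with respect to angular distance at fixed radii. The only difference is cosmetic: you project the vertex radially outward onto $p$'s circle, while the paper instead projects $p$ radially inward to the vertex's radius, a mirrored version of the same two-step argument.
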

\begin{proof}
  Assume without loss of generality that $\varphi(p) = 0$.  We divide
  the region $D_R(p) \cap D_r$ into two halves $A$ and $A'$ containing
  all points with angles in $[0, \pi)$ and $[\pi, 2\pi)$,
  respectively, as illustrated
  in~\Cref{fig:clique-cover-merged}~(left).  The goal now is to show
  that the vertices in~$V(A)$ and the ones in $V(A')$ induce a clique.
  More precisely, we show that this is the case for $A$.  For symmetry
  reasons this then also holds for~$A'$.  Consider two vertices
  $v_1, v_2 \in A$ and assume without loss of generality that
  $\varphi(v_1) \le \varphi(v_2)$.  Since $v_2 \in A \subseteq D_R(p)$
  and since by~\Cref{lem:smaller-radius-increases-disk-cover}
  moving~$D_R(p)$ towards the origin increases the region of $D_R$
  that it covers, we know that $v_2$ is contained in the disk
  $D_R(p')$ for $p' = (r(v_1), 0)$ (dark green
  in~\Cref{fig:clique-cover-merged}~(left)).  It follows that
  $\dist_{\mathbb{H}^2}(p', v_2) \le R$.  Note that $v_1$ has the same
  radius as $p'$ and that
  $\Delta_{\varphi}(p', v_2) \ge \Delta_{\varphi}(v_1, v_2)$.  As
  established above, decreasing the angular distance between two
  points with fixed radii decreases their hyperbolic distance.
  Therefore,
  $\dist_{\mathbb{H}^2}(v_1, v_2) \le \dist_{\mathbb{H}^2}(p', v_2)
  \le R$, meaning $v_1$ and $v_2$ are adjacent.
  \begin{figure}[t]
    \centering
    \includegraphics{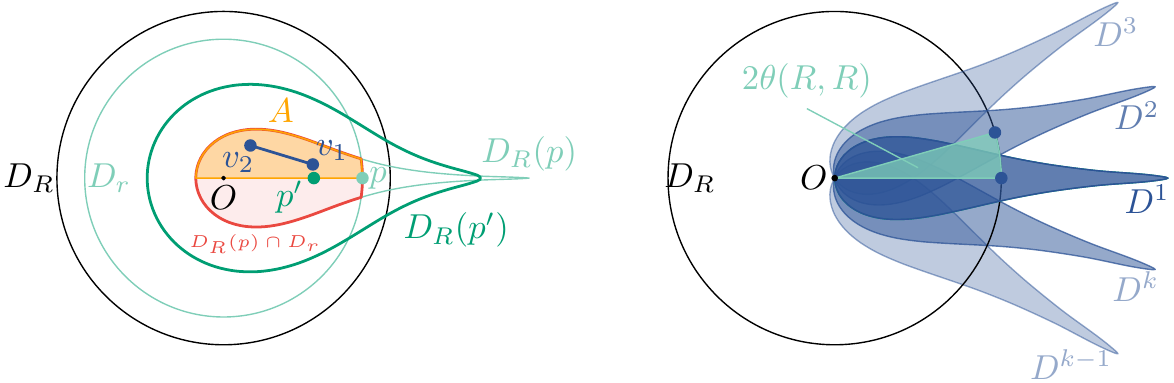}
    \caption{Covering strongly hyperbolic unit disk graphs with
      cliques. \textbf{(Left)} Visualization of the proof
      of~\Cref{lem:inner-neighborhood-clique}.  Vertices $v_1, v_2$
      (blue) are in the half~$A$ (orange) of the region
      $D_R(p) \cap D_r$ (red) and are adjacent. \textbf{(Right)}
      Visualization of the proof
      of~\Cref{lem:strongly-unit-disk-clique-cover}, showing five of
      the $k$ disks $D^1, \dots, D^k$ (blue) and the angular distance
      between two consecutive centers (green).}
    \label{fig:clique-cover-merged}
  \end{figure}
\end{proof}

We note that the above lemma implies that for a vertex $v$, the
neighbors with smaller radius than $v$ form two cliques.  We continue
by investigating the number of cliques required to cover a strongly
hyperbolic unit disk graph.

\begin{lemma}
  \label{lem:strongly-unit-disk-clique-cover}
  Let $G$ be a strongly hyperbolic unit disk graph with radius
  $R > 0$.  Then, $G$ can be covered by
  $\max\{2\pi \sqrt{2}, 2\pi e^{R/2}\}$ cliques.
\end{lemma}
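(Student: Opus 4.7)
The plan is to cover the ground disk $D_R$ using $k$ hyperbolic disks centered on its boundary, and then invoke~\Cref{lem:inner-neighborhood-clique} to split the vertices inside each cover disk into two cliques, giving at most $2k$ cliques in total.  Concretely, I would place centers $p^1, \ldots, p^k$ at radius $R$ and angles $2\pi i / k$, and set $D^i = D_R(p^i) \cap D_R$.  Because $r(p^i) = R$,~\Cref{lem:inner-neighborhood-clique} applies directly and bounds the clique cover of $V(D^i)$ by two.

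The first substantive step is to show that the disks $D^1, \ldots, D^k$ cover all of $D_R$, provided $\pi/k \le \theta(R, R)$.  Any point $q \in D_R$ has some nearest center $p^i$ with $\Delta_{\varphi}(q, p^i) \le \pi/k$.  If $r(q) = R$, then by definition of $\theta$ the inequality $\Delta_\varphi(q, p^i) \le \theta(R, R)$ directly gives $\dist_{\mathbb{H}^2}(q, p^i) \le R$, so $q \in D^i$.  For the general case $r(q) \le R$, I would move $q$ radially outward to $q^* = (R, \varphi(q))$, apply the boundary argument to $q^*$, and then transfer the coverage back via~\Cref{lem:smaller-radius-increases-disk-cover}: since $D_R(q) \supseteq D_R(q^*) \cap D_R$, the fact that $p^i \in D_R(q^*)$ yields $p^i \in D_R(q)$, equivalently $q \in D^i$.

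What remains is to choose $k$ so that $\pi/k \le \theta(R, R)$ and bound it appropriately.  Here I would use the closed form $\theta(r, r) = \acos\bigl(1 - 1/(\cosh r + 1)\bigr)$ derived inside the proof of~\Cref{lem:theta-mono-increasing}, combined with the elementary inequality $\acos(1 - y) \ge \sqrt{2y}$ (already used inside~\Cref{lem:tight-max-angle-adjacent}) and the identity $\cosh R + 1 = 2\cosh^2(R/2)$.  Together these give $\theta(R, R) \ge 1/\cosh(R/2)$, so $k = \lceil \pi \cosh(R/2) \rceil$ suffices, yielding a cover by at most about $2\pi \cosh(R/2)$ cliques.

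To match the statement I would finally split into two regimes.  For $R \ge \ln 2$ we have $\cosh(R/2) \le e^{R/2}$, giving a cover of size at most $2\pi e^{R/2}$; for $R \le \ln 2$ we have $\cosh(R/2) \le \cosh((\ln 2)/2) < \sqrt{2}$, giving a cover of size at most $2\pi \sqrt{2}$.  Taking the worse of the two regimes produces the stated bound $\max\{2\pi\sqrt{2}, 2\pi e^{R/2}\}$.  The step I expect to require the most care is the coverage reduction from arbitrary $q$ to the boundary case, since $\dist_{\mathbb{H}^2}(q, p^i)$ is not monotone in $r(q)$; routing that step through~\Cref{lem:smaller-radius-increases-disk-cover} rather than a direct distance computation is what keeps the proof short.
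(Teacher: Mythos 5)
Your proposal is correct and follows essentially the same route as the paper: cover $D_R$ with disks of radius $R$ centered on its boundary, apply \Cref{lem:inner-neighborhood-clique} to split each disk's vertices into two cliques, bound the number of disks by $\pi/\theta(R,R)$, and split into the regimes $R \le \ln 2$ and $R \ge \ln 2$. The only deviation is cosmetic: you lower-bound $\theta(R,R)$ by $1/\cosh(R/2)$ via the closed form from \Cref{lem:theta-mono-increasing} together with $\acos(1-y) \ge \sqrt{2y}$, whereas the paper uses \Cref{lem:tight-max-angle-adjacent} plus the monotonicity of $\theta(r,r)$; both bounds yield the stated constants, and the available slack also absorbs the rounding in your choice $k = \lceil \pi \cosh(R/2) \rceil$, which you glossed over with ``about.''
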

\begin{proof}
  To prove the claim, we utilize the underlying geometry by covering
  the ground space~$D_R$ with a set of $k$ disks $D^1, \dots, D^{k}$,
  such that each $V(D^i)$ for $i \in [k]$ can be covered by two
  cliques.  All of these disks have radius $R$ and their centers lie
  on the boundary of the disk~$D_R$.  The center of the first disk has
  an angular coordinate of $0$.  All other disks $D^i$ are placed at
  an angular distance of~$2\theta(R, R)$ to their predecessor
  $D^{i-1}$ in counterclockwise direction.
  See~\Cref{fig:clique-cover-merged}~(right) for an illustration.  As
  a consequence, the boundaries of two consecutive disks intersect on
  the boundary of $D_R$, which is therefore covered completely by
  the~$k$ disks.  It follows that each vertex is contained in at least
  one disk $D^i$.

  Since by~\Cref{lem:inner-neighborhood-clique} each $V(D^i)$ for
  $i \in [k]$ can be covered by two cliques, it suffices to show that
  $k \le \max\{\pi \sqrt{2}, \pi e^{R/2}\}$ in order to finish the
  proof.  To this end, recall that two consecutive disks are placed at
  an angular distance of~$2\theta(R, R)$.  Consequently, it takes
  $k = 2\pi / (2\theta(R, R)) = \pi / \theta(R, R)$ disks to cover the
  whole disk $D_R$.  Using~\Cref{lem:tight-max-angle-adjacent} we can
  conclude
  \begin{align*}
    \theta(R, R) &\ge 2 \sqrt{e^{-R} + e^{-3R} - 2e^{-2R}} = 2 \sqrt{\left( e^{-R/2} - e^{-3/2 \cdot R} \right)^2} % = 2 \left( e^{-R/2} - e^{-3/2 \cdot R} \right) \\
    = 2e^{-R/2} (1 - e^{-R} ).
  \end{align*}
  It follows that $k$ can be bounded by
  \begin{align}
    \label{eq:k-bound}
    k = \frac{\pi}{\theta(R, R)} \le \frac{\pi}{2e^{-R/2}\left( 1 - e^{-R} \right)} = \pi e^{R/2} \cdot \frac{1}{2\left(1 - e^{-R} \right)}.
  \end{align}
  We now distinguish between two cases depending on the size of $R$
  and start with~$R < \log(2)$.  Recall that the function
  $\theta(R, R)$ is monotonically decreasing
  in~$R$~(see~\Cref{lem:theta-mono-increasing}).  As a consequence, we
  have $\theta(R, R) \ge \theta(\log(2), \log(2))$.  Then, it follows
  that
  \begin{align*}
    k &\le \frac{\pi}{\theta(\log(2), \log(2))} \le \pi e^{\log(2)/2} \frac{1}{2 \left( 1 - e^{-\log(2)} \right)} = \pi \sqrt{2},
  \end{align*}
  which we account for with the first part of the maximum.  When
  $R \ge \log(2)$, note that we have $(1 - e^{-R}) \ge 1/2$.
  Consequently, we can bound the last fraction in~\Cref{eq:k-bound} by
  $1$, which yields the claim.
\end{proof}

\section{Routing}
\label{sec:shdg-routing}

While finding a path between two vertices in an undirected network is
typically rather simple, the internet is decentralized and does not
allow for the use of a central data structure.  Instead each vertex
can only use local information to perform a \emph{routing decision},
i.e., the decision to which vertex information is forwarded next such
that it eventually reaches the target.  This situation is further
complicated by the fact that the internet consists of billions of
vertices.  In order to be able to handle a network of this scale, a
routing scheme has to be optimized with respect to three criteria: the
\emph{space requirement} (the amount of information that the scheme
uses to forward information), the \emph{query time} (the time it takes
to make a routing decision), and the \emph{stretch} (how much longer
the routed path is compared to a shortest path in the network, where
the \emph{length} of a path denotes the number of contained edges).
Formally, a path between two vertices has \emph{multiplicative
  stretch} $c \ge 1$ if it is at most~$c$ times longer than a shortest
path between them.  An \emph{additive stretch} of $\Additive$ denotes
that the routed path contains at most $\Additive$ more vertices than a
shortest path.  A \emph{multiplicative stretch~$c$ with additive bound
  $\Additive$} denotes that the routed paths have stretch~$c$
\emph{or} additive stretch $\Additive$.  Note that this implies a
multiplicative stretch of $\max \{c, 1 + \Additive\}$.

\subsection{A Brief History of Routing}
\label{apx:routing}

In the following, we summarize the main approaches to adjusting the
trade-off between stretch and space requirements.  The query times of
the considered schemes are at most polylogarithmic.
Figure~\ref{fig:related-work} gives an overview of existing schemes.

% \todo{TODO: Always make sure that references in the figure match
%   references in the bibliography!}
\subparagraph{Routing Schemes.}

\begin{figure}[t!]
  \centering
  \includegraphics[width=\linewidth]{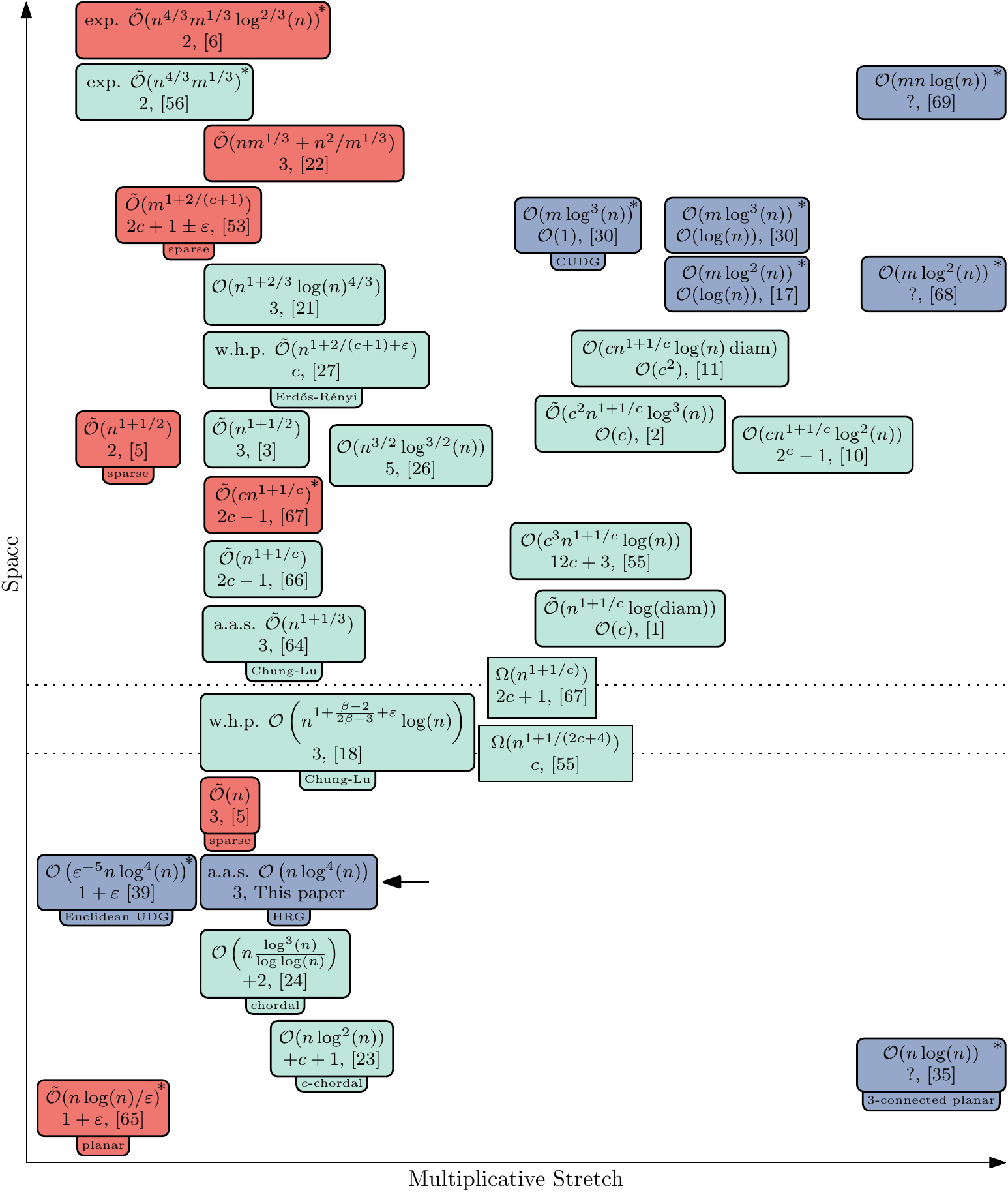}
  \caption{Distance oracles (red), routing schemes (green), and local
    routing schemes (blue) arranged by space requirements (first line)
    and multiplicative stretch (second line).  Additive stretch is
    denoted with a preceding $+$.  An asterisk denotes that the bound
    on the space requirement is adjusted to account for the total
    storage.  Lower bounds are shown with rectangular corners.}
  \label{fig:related-work}
\end{figure}

In general networks, routing with a stretch of $1$, i.e., always
routing along the shortest paths, requires storing
$\Theta(n^2 \log n)$ bits in total~\cite{gp-mrrdn-96}.  The most
commonly used approach to reducing the required space is to only store
shortest path information for certain vertex pairs.  That is, a
representation of a subgraph (typically a tree or a collection of
trees) of the original graph is stored and the routing takes place on
the subgraph.  This is usually done by selecting a set of
\emph{landmark} (or \emph{pivot}) vertices.  Then, for each vertex
only the information about how to get to the closest landmark is
stored~\cite{agm-ricsto-04, agm-ssto-06, agm-cnirms-08, ablp-i-90,
  ap-rpcsto-92, c-crms-01, egp-c-03, pr-dot-14, tz-crs-01}.  These
schemes basically partition the graph based on the landmark vertices.
A related approach starts with a partition and defines the landmarks
afterwards~\cite{pu-tsert-89, rt-c-16}.  The scheme then routes via
the landmarks closest to the source and target.  This general approach
can be optimized in several ways.  First, the network can be
partitioned with several levels of granularity, such that messages
that need to travel larger distances are routed to landmarks whose
associated vertex set is larger~\cite{ablp-i-90, pu-tsert-89}.
Improvements for shorter distances can be obtained by storing the
actual shortest path information for vertices in close vicinity of
each vertex~\cite{ablp-i-90, egp-c-03}.  Moreover, the selection of
the landmarks itself can have an impact on the performance of the
routing scheme.  In general graphs they are typically selected at
random.  A more careful selection can lead to better results on
Erd\"{o}s-R\'{e}nyi random graphs~\cite{ewg-rmsicr-08}, or when
assuming that the network has certain properties like a power-law
degree distribution~\cite{cstw-crsad-12, tyz-crrplg-09}. Similarly,
better results can be obtained on chordal graphs~\cite{d-crsgcg-05,
  dg-icrscg-02, dyl-ctsg-04}.

Closely related to routing schemes are \emph{approximate distance
  oracles}.  There, we are only interested in the length of a short
path instead of the path itself.  Again, the commonly used techniques
are based on landmarks~\cite{ag-b-13, dhz-a-96, prt-nidosg-12,
  tz-ado-05}, and compared to general networks, better results can be
obtained when assuming that the considered graphs have certain
properties like being planar~\cite{t-coradpd-04}, or being sparse
(although at the expense of an increased query time)~\cite{agh-a-11}.

In general, routing schemes and distance oracles are based on one
central data structure that holds the information required when
forwarding messages, which can become an issue with increasing network
size as it has been shown that, on general graphs, achieving a stretch
of $c \ge 1$ requires a data structure of size
$\Omega(n^{1 + 1/(2c + 4)})$~\cite{pu-tsert-89}.  One approach to
overcoming this problem is to consider local routing schemes instead.

\subparagraph{Local and Greedy Routing Schemes.}

In local routing schemes the routing information is distributed and
each vertex can only use its own information to forward messages.  One
approach to achieving this are \emph{interval routing schemes}, where
each vertex is equipped with a mapping from its outgoing edges to a
partition of the vertices in the graph and the message is forwarded
along the edge whose assigned vertex set contains the
target~\cite{egp-c-03, pu-tsert-89, sk-lirn-85}.

Another popular approach is \emph{geographic} or \emph{greedy}
routing.  There, each vertex is assigned a coordinate in a metric
space and a message is routed to a neighbor that is closer to the
target with respect to the metric.  While initially being motivated by
real-world networks with actual geographic locations~\cite{kk-g-00,
  tk-otrrd-84}, later adaptations assigned \emph{virtual
  coordinates}~\cite{rrp-grli-03}.

In addition to the previously mentioned criteria, greedy routing is
also evaluated regarding the success rate, since the virtual
coordinates may be assigned such that forwarding messages greedily
leads to a dead end.  Even simple graphs like a star with six leaves
cannot be embedded in the Euclidean plane such that greedy routing
always succeeds~\cite{pr-crgr-05}.  Worse yet, even if a graph admits
a greedy embedding into the Euclidean plane, there are graphs that
require $\Omega (n)$ bits per coordinate~\cite{adf-s-12}.  However, it
was shown that delivery can be guaranteed on every graph when
embedding it in the hyperbolic plane~\cite{k-gruhs-07}.
Unfortunately, due to the properties of hyperbolic space, this
requires high precision coordinates, which leads to an increased space
requirement~\cite{bfkk-henogr-20}.  While attempts have been made to
reduce the coordinate size~\cite{eg-s-08}, it has been shown that this
remains an open problem~\cite{bfkk-henogr-20, kk-stgn-15}.  However, in many
greedy routing schemes the space per coordinate is at most
poly-logarithmic \cite{bc-crplgas-06, cpfv-grwwms-14, eg-sggruhg-11,
  gs-sggrep-09, m-dgraahn-07, wp-srvge-09, zg-gregss-11}.  

Unfortunately, not much is known about stretch in local routing
schemes.  On Euclidean unit disk graphs there is a greedy approach
that obtains constant stretch~\cite{fpw-grbs-09} and there exists an
improvement to a stretch of $1+\varepsilon$ while storing
$\mathcal{O}(\varepsilon^{-5} \log^4(n))$ bits at each vertex, which
relies on message headers that can be adapated during the routing
process~\cite{kmrs-rudg-18}.  On graphs of bounded hyperbolicity, one
can obtain an additive stretch of
$\mathcal{O}(\log n)$~\cite{gl-dlhg-05} and on general graphs a
logarithmic bound on the multiplicative stretch is
known~\cite{cpfv-grwwms-14, fpw-grbs-09}.  However, it has been
observed that greedy routing schemes can achieve much better stretch
in practice, which we discuss in the following.

\subparagraph{Routing in Practice.}

Real-world networks rarely resemble the worst cases considered in the
above mentioned results.  More realistic insights can be obtained by
analyzing networks whose properties resemble those of real-world
graphs, like the small-world phenomenon~\cite{k-swp-00}.  In
particular, better trade-offs between stretch and space have been
obtained on sparse graphs~\cite{agh-a-11, prt-nidosg-12}, and Chung-Lu
random graphs~\cite{acl-rgmplg-01}.  There, the best known space bound
of $\mathcal{\tilde{O}}(n^{1 + 1/2})$ for a stretch of $3$ on general
graphs~\cite{tz-crs-01}, was improved to
$\mathcal{O}(n^{1 + (\beta - 2)/(2\beta - 3) + \varepsilon})$, with
high probability, for power-law exponent $\beta \in (2, 3)$ and
$\varepsilon > 0$~\cite{cstw-crsad-12}.  Experiments on internet-like
networks further indicate that the landmark-based routing schemes due
to Thorup and Zwick~\cite{tz-crs-01} yield a rather low stretch of
about $1.1$ while the information stored at the vertices is small as
well~\cite{kfy-ci-04, kcfb-cri-07}.  Similar results have been
obtained in experiments on internet topologies and random graphs with
power-law degree distributions~\cite{agh-a-11, cstw-crsad-12,
  tyz-crrplg-09}.

Additionally, it was observed that greedy routing works remarkably
well on internet graphs, when assuming an underlying hyperbolic
geometry.  There, a network is embedded into the hyperbolic plane and
a message is always forwarded to the neighbor with the smallest
hyperbolic distance to the target.  Then, delivery is not guaranteed
but experiments show that this achieves success rates of at least
$97\%$ and a stretch of about $1.1$ on internet
topologies~\cite{bpk-si-10, pkbv-gfdsf-10}.  Partly motivated by this
Krioukov et al. introduced the hyperbolic random graph (HRG) model to
represent real-world networks like the internet~\cite{kpk-h-10}.  For
a generalized version it was shown that a greedy routing can succeed
with constant probability, while achieving an average stretch of $1 +
o(1)$, almost surely~\cite{bkl-graswp-17}.  Nevertheless it remains an
open question whether, in a addition to the small stretch, greedy
routing on realistic representations of internet-like graphs can be
implemented, such that delivery is always guaranteed, while keeping
the space requirement low.  In this paper, we answer this question by
developing a greedy routing scheme that always succeeds with small
stretch.  Additionally, the space requirement is small on networks
with an underlying hyperbolic geometry.

In the following, we combine standard techniques in routing to obtain
a simple routing scheme and analyze its performance on strongly
hyperbolic unit disk graphs.  For the special case of hyperbolic
random graphs, a graph model that is used to represent complex
networks like the internet~\cite{kpk-h-10}, it improves below the
above mentioned performance lower bounds.

\subsection{A Greedy Routing Scheme}

A standard approach to routing in a decentralized setting is
\emph{greedy routing}, where the idea is to always forward a message
to a neighbor of a vertex that is closer to the target.\footnote{The
  term \emph{greedy} often refers to routing to a neighbor
  \emph{closest} to the target.  For us \emph{closer} is sufficient.}
When designing a greedy routing scheme, we therefore need to compute
distances between vertices and select a suitable neighbor with respect
to these distances.  For a graph $G = (V, E)$ let
$\dist \colon V \times V \rightarrow \mathbb{R}_{\ge 0}$ be a
semi-metric on $G$.  That is, for all $s, t \in V$ we have
$\dist(s, t) \ge 0$, $\dist(s, t) = 0$ if and only if $s = t$, and
$\dist(s, t) = \dist(t, s)$.  We say that a greedy routing scheme
routes \emph{with respect to} $\dist$, if at $s$ a message to $t$ is
forwarded to a neighbor $v$ of $s$ where $\dist(v, t) < \dist(s, t)$.
Depending on $\dist$ such a neighbor may not exist and the message
cannot be forwarded, which is called \emph{starvation}.  In contrast,
a routing scheme with guaranteed delivery is called
\emph{starvation-free} and it is known that greedy routing is
starvation-free, if at every vertex $s \neq t$ there is a neighbor $v$
with $\dist(v, t) < \dist(s, t)$ (see e.g.~\cite{zg-g-13}).  Moreover,
we say that~$\dist$ is \emph{integral} if it maps to the natural
numbers, i.e., $\dist \colon V \times V \rightarrow \mathbb{N}$.  Note
that, if $\dist$ is integral and routing with respect to $\dist$ is
starvation-free, the distance to the target decreases by at least one
in each step.  Thus, the length of the routed path between $s$ and $t$
is bounded by $\dist(s, t)$.  When this is the case, we say that
routing with respect to $\dist$ is \emph{$\dist$-bounded}.

Given a connected graph $G$, a natural choice for determining a
distance between $s$ and~$t$ is to use the length of a shortest path
between them, which we denote with $\dist_G(s, t)$.  Routing with
respect to $\dist_G$ is starvation-free and yields perfect stretch.
However, $\dist_G$ cannot be computed while simultaneously keeping the
required space and query time low~\cite{gp-mrrdn-96}.  Therefore, we
relax the constraint on routing with respect to exact graph distances
and use upper bounds instead.  This can be achieved by taking a
subgraph $G'$ of $G$ and routing on $G$ with respect to~$\dist_{G'}$.
The stretch of the resulting routing scheme depends on how well the
distances in~$G'$ approximate the distances in~$G$.  Unfortunately,
finding a subgraph with good stretch is hard in general~\cite{c-n-94,
  ps-g-89}.  However, instead of routing with respect to the distances
in a single subgraph, we can combine the distances in multiple
subgraphs.  To obtain a good stretch, it then suffices to find
low-stretch subgraphs for small parts of the graph.  To formalize
this, we use a $(c, \Additive, k)$-\emph{graph-cover} $\mathcal{C}$ of
$G$, which is a collection of subgraphs of $G$, such that for all
$s, t \in V$ there exists a connected subgraph~$G'$ in $\mathcal{C}$
with $\dist_{G'}(s, t) \le c \cdot \dist_G(s, t)$ or
$\dist_{G'}(s, t) \le \dist_G(s, t) + \Additive$, and every vertex
$v \in V$ is contained in at most $k$ graphs in $\mathcal{C}$.  We say
that~$\mathcal{C}$ has multiplicative stretch $c$ with additive bound
$\Additive$.  For two vertices $s$ and $t$ we define
$\dist_{\mathcal{C}}(s, t) = \min_{G' \in \mathcal{C}} \dist_{G'}(s,
t)$.

\begin{lemma}
  \label{lem:graph-cover-dist-stretch}
  Let $G$ be a graph and let $\mathcal{C}$ be a
  $(c, \Additive, k)$-graph-cover of $G$.  Then, greedy routing on $G$
  with respect to $\dist_{\mathcal{C}}$ has multiplicative stretch $c$
  with additive bound $\Additive$.
\end{lemma}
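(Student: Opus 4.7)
The plan is to establish two ingredients and combine them: (i) greedy routing with respect to $\dist_{\mathcal{C}}$ is starvation-free, and (ii) $\dist_{\mathcal{C}}$ is integral, so the length of the routed path from $s$ to $t$ is at most $\dist_{\mathcal{C}}(s,t)$. Once these are in hand, the defining property of the graph-cover immediately yields the claimed multiplicative stretch $c$ with additive bound $\Additive$.

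For starvation-freeness, I would fix an intermediate vertex $s \neq t$ and let $G^{\star} \in \mathcal{C}$ be a subgraph attaining the minimum, that is $\dist_{G^{\star}}(s,t) = \dist_{\mathcal{C}}(s,t)$. Such a $G^{\star}$ exists because the graph-cover property guarantees at least one connected $G' \in \mathcal{C}$ containing a finite-length path between $s$ and $t$. Taking a shortest $s$--$t$ path in $G^{\star}$ and considering its first edge, there is a neighbor $u$ of $s$ in $G^{\star}$ with $\dist_{G^{\star}}(u,t) = \dist_{G^{\star}}(s,t) - 1$. Since $G^{\star}$ is a subgraph of $G$, the vertex $u$ is also a neighbor of $s$ in $G$, and
\[
  \dist_{\mathcal{C}}(u,t) \;\le\; \dist_{G^{\star}}(u,t) \;=\; \dist_{\mathcal{C}}(s,t) - 1 \;<\; \dist_{\mathcal{C}}(s,t).
\]
Hence greedy routing never starves and, by the discussion preceding the lemma, always terminates at $t$.

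For the length bound, I would observe that every $\dist_{G'}$ takes values in $\mathbb{N}$ as a shortest-path distance in an unweighted graph, and therefore so does $\dist_{\mathcal{C}}$. Combined with the previous paragraph, each greedy step strictly decreases $\dist_{\mathcal{C}}(\cdot, t)$ by at least one, so routing with respect to $\dist_{\mathcal{C}}$ is $\dist_{\mathcal{C}}$-bounded and the routed $s$--$t$ path has at most $\dist_{\mathcal{C}}(s,t)$ edges.

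To conclude, I would invoke the graph-cover definition for the specific endpoints $s, t$: there exists some $G'' \in \mathcal{C}$ with $\dist_{G''}(s,t) \le c \cdot \dist_G(s,t)$ or $\dist_{G''}(s,t) \le \dist_G(s,t) + \Additive$. In either case $\dist_{\mathcal{C}}(s,t) \le \dist_{G''}(s,t)$, so the routed path length is at most the corresponding bound, which is exactly multiplicative stretch $c$ with additive bound $\Additive$. The only subtle point, and therefore the one I would be most careful about, is the transition from ``a shortest path in $G^{\star}$ provides progress'' to ``this progress is realizable as a step of greedy routing on $G$''; this is handled by the fact that subgraph edges are genuine edges of $G$, so the witnessing neighbor $u$ is an admissible next hop in $G$ itself.
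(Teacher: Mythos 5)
Your proposal is correct and follows essentially the same route as the paper's proof: show that $\dist_{\mathcal{C}}$ is integral, establish starvation-freeness by taking a subgraph attaining the minimum and stepping along a shortest path in it (using that its edges are edges of $G$), conclude $\dist_{\mathcal{C}}$-boundedness, and then invoke the cover's stretch guarantee for the pair $s,t$. The only cosmetic difference is that you argue integrality and strict decrease directly, while the paper phrases this as $\dist_{\mathcal{C}}$ being a minimum of integral semi-metrics; the substance is identical.
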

\begin{proof}
  Let $s \neq t$ be two vertices.  To prove the claim, we need to show
  that an $s$-$t$-path obtained by greedily routing with respect to
  $\dist_{\mathcal{C}}$ has length at most $c \cdot \dist_G(s, t)$ or
  $\dist_G(s, t) + \Additive$.  To this end, we prove that the
  resulting routing scheme is $\dist_{\mathcal{C}}$-bounded.  The
  claim then follows, due to the fact that consequently the routed
  $s$-$t$-path has length at most
  $\dist_{\mathcal{C}}(s, t) = \min_{G' \in G} \dist_{G'}(s, t)$ and
  the fact that there exists a $G' \in G$ with
  $\dist_{G'}(s, t) \le c \cdot \dist_{G}(s, t)$ or
  $\dist_{G'}(s, t) \le \dist_{G}(s, t) + \Additive$ by assumption.

  Since $\dist_{\mathcal{C}}$ is the minimum of integral semi-metrics,
  it is itself an integral semi-metric.  Therefore, it suffices to
  show that routing with respect to $\dist_{\mathcal{C}}$ is
  starvation-free, which is the case, if for every two vertices $s
  \neq t$ there exists a neighbor $v$ of $s$ in $G$ with
  $\dist_{\mathcal{C}}(v, t) < \dist_{\mathcal{C}}(s, t)$.  Consider
  the connected subgraph $G' \in \mathcal{C}$ for which $\dist_{G'}(s,
  t) = \dist_{\mathcal{C}}(s, t)$.  Then, there exists a shortest path
  from $s$ to $t$ in $G'$.  For the successor $v$ of $s$ on this path,
  it holds that $\dist_{G'}(v, t) = \dist_{G'}(s, t) - 1$ and thus
  $\dist_{\mathcal{C}}(v, t) \le \dist_{G'}(s, t) - 1 =
  \dist_{\mathcal{C}}(s, t) - 1 < \dist_{\mathcal{C}}(s, t)$.
  Finally, since $G'$ is a subgraph of $G$, it follows that $v$ is
  also a neighbor of $s$ in $G$.
\end{proof}

To show that $\dist_{\mathcal{C}}$ can be computed efficiently, we use
\emph{distance labeling schemes}~\cite{gppr-d-04}.  Such a scheme
implements a semi-metric $\dist$ by assigning each vertex a
\emph{distance label}, such that for two vertices $s, t$ we can
compute $\dist(s, t)$ by looking at their distance labels only.  The
\emph{label size} of a distance labeling scheme denotes the maximum
number of bits required to represent the label of a vertex.  The
\emph{query time} denotes the time it takes to compute $\dist$ using
the labels.  Given a graph-cover $\mathcal{C}$, a distance labeling
scheme that implements $\dist_{\mathcal{C}}$ can be obtained by
combining distance labeling schemes for the contained subgraphs.

\begin{lemma}
  \label{lem:graph-cover-dist-label-scheme}
  Let $G$ be a graph and let $\mathcal{C}$ be a
  $(c, \Additive, k)$-graph-cover of $G$ such that for every
  $G' \in \mathcal{C}$ there exists a distance labeling scheme that
  implements $\dist_{G'}$ with label size $\LabelSize$ and query time
  $q$.  Then, there exists a distance labeling scheme for $G$ that
  implements $\dist_{\mathcal{C}}$ with label size
  $\mathcal{O}(k (\LabelSize + \log k + \log n))$ and query time
  $\mathcal{O}(k q)$.
\end{lemma}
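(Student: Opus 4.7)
The plan is to build the distance labeling scheme for $\dist_{\mathcal{C}}$ by bundling, at each vertex $v$, the underlying labels of $v$ in every subgraph of $\mathcal{C}$ containing $v$, together with identifiers that allow the query routine to match the same subgraph at the two endpoints. First I would fix an injective identifier $\mathrm{id}(G')$ for every $G' \in \mathcal{C}$. Since each vertex lies in at most $k$ subgraphs, the total number of (vertex, subgraph) incidences is at most $kn$, so after discarding empty subgraphs we may assume $|\mathcal{C}| \le kn$. This lets us encode each identifier with $\lceil \log(kn) \rceil = \mathcal{O}(\log n + \log k)$ bits, which is where the $\log k + \log n$ summand in the stated label size comes from.

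For each vertex $v$, the distance label I would store is the list of pairs $(\mathrm{id}(G'), \ell_{G'}(v))$ ranging over all $G' \in \mathcal{C}$ that contain $v$, sorted by identifier, where $\ell_{G'}(\cdot)$ denotes the given $G'$-distance label. Since $v$ lies in at most $k$ subgraphs of $\mathcal{C}$ and each pair uses $\LabelSize + \mathcal{O}(\log n + \log k)$ bits, the total label size is $\mathcal{O}(k(\LabelSize + \log n + \log k))$, exactly as claimed.

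To answer a query for $\dist_{\mathcal{C}}(s,t)$, I would scan the two sorted lists in parallel via a standard linear merge to collect every identifier that appears in both; this costs $\mathcal{O}(k)$ comparisons. For each such common identifier I invoke the underlying $G'$-scheme on the associated sublabels of $s$ and $t$ in $\mathcal{O}(q)$ time to obtain $\dist_{G'}(s,t)$, and finally return the minimum of the collected values. Any $G' \in \mathcal{C}$ failing to contain one of $s,t$ contributes $+\infty$ to the minimum and is silently skipped by the merge, so the returned value equals $\min_{G' \in \mathcal{C}} \dist_{G'}(s,t) = \dist_{\mathcal{C}}(s,t)$. The total query time is $\mathcal{O}(k) + \mathcal{O}(kq) = \mathcal{O}(kq)$.

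The construction is essentially bookkeeping and I do not expect a real obstacle; the only point that needs care is bounding the identifier length by $\mathcal{O}(\log n + \log k)$ rather than $\mathcal{O}(\log |\mathcal{C}|)$, which relies on the $k$-incidence property built into the definition of a $(c, \Additive, k)$-graph-cover and ensures the label size has no hidden dependence on the total number of subgraphs.
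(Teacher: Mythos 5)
Your proposal is correct and follows essentially the same route as the paper's proof: unique graph-IDs bounded via $|\mathcal{C}| \le kn$, per-vertex bundles of (ID, sublabel) pairs sorted by ID, and a linear merge followed by at most $k$ subqueries and a minimum, giving label size $\mathcal{O}(k(\LabelSize + \log k + \log n))$ and query time $\mathcal{O}(kq)$. No gaps to report.
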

\begin{proof}
  We assign each subgraph $G' \in \mathcal{C}$ a unique
  \emph{graph-ID} in $[|\mathcal{C}|]$ and compute the distance labels
  for all vertices in $G'$.  By combining the distance labels with the
  corresponding graph-ID, we obtain an \emph{identifiable distance
    label} that can be used to uniquely identify to which graph a
  distance label belongs.  The label of a vertex $v$ is then obtained
  by collecting the identifiable distance labels of $v$ for all
  subgraphs that $v$ is contained in and sorting them by graph-ID.

  The label size can now be bounded as follows.  Since each vertex $v$
  is contained in at most $k$ subgraphs, we can conclude that
  $|\mathcal{C}| \le kn$.  Therefore, the graph-IDs can be encoded
  using $\mathcal{O}(\log k + \log n)$ bits.  Moreover, by assumption
  the distance labels in the subgraphs can be represented using
  $\LabelSize$ bits.  It follows that a single identifiable distance
  label takes $\mathcal{O}(\LabelSize + \log k + \log n)$ bits.
  Again, since every vertex is contained in at most $k$ subgraphs,
  $v$'s label consists of at most $k$ identifiable distance labels.
  Consequently, the label size is bounded by
  $\mathcal{O}(k (\LabelSize + \log k + \log n))$.

  It remains to bound the query time.  Given the collection of
  identifiable distance labels of two vertices, we can identify the
  ones with matching graph-IDs in time $\mathcal{O}(k)$, since they
  are sorted by graph-ID.  For each match we compute the distance in
  the corresponding subgraph in time $q$.  Afterwards the minimum
  distance can be found in $\mathcal{O}(k)$ time.  It follows that
  $\dist_{\mathcal{C}}$ can be computed in time $\mathcal{O}(k q)$.
\end{proof}

In order to perform a routing decision efficiently, we want to avoid
performing a linear search over all neighbors.  To this end, we need
to be able to identify a neighbor directly, which can be done by
assigning each neighbor $v$ of~$s$ a unique \emph{port}
$p_s(v) \colon N(s) \rightarrow \{1, \dots, n\}$.  Finding a neighbor
of $s$ that is closer to a target $t$ with respect to a semi-metric
$\dist$ then boils down to determining the corresponding port.  To
this end, we can use a \emph{port labeling scheme} that implements
$\dist$.  Such a scheme assigns each vertex in a graph a \emph{port
  label} such that we can determine the port of a neighbor of $s$ that
is closer to $t$ with respect to $\dist$, by only looking at the port
labels of $s$ and $t$.  The corresponding label sizes and query times
are defined analogous to how they are defined for distance labels.

Given a graph-cover $\mathcal{C}$, we can combine distance and port
labels of the subgraphs in the cover, to obtain a port labeling scheme
that implements $\dist_{\mathcal{C}}$.

\begin{lemma}
  \label{lem:graph-cover-port-label-scheme}
  Let $G$ be a graph and let $\mathcal{C}$ be a
  $(c, \Additive, k)$-graph-cover of $G$ such that for every
  $G' \in \mathcal{C}$ there exist distance and port labeling schemes
  that implement $\dist_{G'}$ with label size $\LabelSize$ and query
  time $q$.  Then, there exists a port labeling scheme for $G$ that
  implements $\dist_{\mathcal{C}}$ with label size
  $\mathcal{O}(k (\LabelSize + \log k + \log n))$ and query time
  $\mathcal{O}(k q)$.
\end{lemma}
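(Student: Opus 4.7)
The plan is to mirror the construction from Lemma~\ref{lem:graph-cover-dist-label-scheme}, enriching each per-subgraph record so that a routing decision, rather than just a distance, can be read off at query time. Concretely, I would first fix, once and for all, a port numbering $p_s\colon N(s)\to\{1,\dots,n\}$ on the host graph $G$. Since every subgraph $G'\in\mathcal{C}$ is a subgraph of $G$, the neighbors of $s$ in $G'$ are a subset of $N(s)$, and I can reuse the restriction of $p_s$ to them as a valid port numbering for $G'$. This is the crucial bookkeeping step: it guarantees that any port returned by a port labeling scheme on $G'$ names a neighbor of $s$ in $G$ using exactly the label that $G$ expects.

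Next, I would assign each $G'\in\mathcal{C}$ a unique graph-ID in $[|\mathcal{C}|]$ (with $|\mathcal{C}|\le kn$, exactly as in the proof of Lemma~\ref{lem:graph-cover-dist-label-scheme}) and, for every vertex $v$ and every $G'$ containing $v$, form an \emph{identifiable combined label} consisting of the graph-ID of $G'$, the distance label of $v$ in $G'$, and the port label of $v$ in $G'$ (the latter two obtained from the labeling schemes for $G'$ assumed in the statement, both of size $\LabelSize$). The port label of $v$ in $G$ is the concatenation of these combined labels for all $G'\ni v$, sorted by graph-ID. The size analysis is identical to the previous lemma: each combined label costs $\mathcal{O}(\LabelSize+\log k+\log n)$ bits, a vertex owns at most $k$ of them, so the total is $\mathcal{O}(k(\LabelSize+\log k+\log n))$.

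For the query, given the port labels of $s$ and $t$, I would sweep once through their sorted lists of graph-IDs in $\mathcal{O}(k)$ time to identify the subgraphs $G'$ containing both $s$ and $t$. For each match I invoke the distance labeling scheme of $G'$ in time~$q$ to obtain $\dist_{G'}(s,t)$, remembering a minimizer $G^*$ over the (at most $k$) matches; then I invoke the port labeling scheme of $G^*$ once, again in time~$q$, to extract the port of a neighbor $v$ of $s$ in $G^*$ with $\dist_{G^*}(v,t)<\dist_{G^*}(s,t)$. The total time is $\mathcal{O}(kq)$. Correctness follows because $\dist_{G^*}(s,t)=\dist_{\mathcal{C}}(s,t)$ and $\dist_{\mathcal{C}}(v,t)\le\dist_{G^*}(v,t)<\dist_{\mathcal{C}}(s,t)$, so $v$ is indeed a valid greedy successor with respect to $\dist_{\mathcal{C}}$, and by the port-numbering alignment step, the returned port correctly identifies $v$ as a neighbor of $s$ in $G$.

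The only genuine subtlety, and the part I would argue carefully, is the alignment of port numberings between $G$ and its subgraphs; everything else is a straightforward repackaging of the argument used for Lemma~\ref{lem:graph-cover-dist-label-scheme}, now carrying port information alongside the distance information.
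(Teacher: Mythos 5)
Your proposal is correct and follows essentially the same route as the paper's proof: fix a single port numbering on $G$, inherit it in every subgraph of $\mathcal{C}$, store identifiable distance-plus-port labels sorted by graph-ID, locate a subgraph $G^*$ attaining $\dist_{\mathcal{C}}(s,t)$ via the distance labels, and use its port labeling scheme, with the same size and $\mathcal{O}(kq)$ query-time accounting. The port-alignment step you single out as the key subtlety is exactly the point the paper also makes explicit.
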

\begin{proof}
  For every vertex $s$ in $G$ we fix a port assignment for the
  neighbors of $s$.  Afterwards, we assign the same ports in the
  subgraphs $G'$ of $G$ in $\mathcal{C}$ that $s$ is contained in.
  More precisely, if $v$ is a neighbor of $s$ in $G'$, then the port
  $p_s(v)$ is identical in $G$ and $G'$.  As a consequence, we can use
  a port labeling scheme in $G'$ to determine the port of a neighbor
  of $s$ in $G$.

  Now consider the distance labeling scheme described in
  Lemma~\ref{lem:graph-cover-dist-label-scheme}, where we assigned
  each subgraph $G' \in \mathcal{C}$ a unique graph-ID and computed
  distance labels for all vertices in all subgraphs to obtain
  identifiable distance labels.  In addition, we now compute port
  labels for all vertices in all subgraphs.  By combining them with
  the previously obtained identifiable distance labels, we obtain
  \emph{identifiable distance port labels}.  As before, the label of a
  vertex $v$ then consists of the collection of identifiable distance
  port labels of $v$ in all subgraphs that $v$ is contained in, sorted
  by graph-ID.

  We continue the proof by showing that, given the labels of two
  vertices $s \neq t$, we can compute the port of a neighbor of $s$ in
  $G$ that is closer to $t$ with respect to $\dist_{\mathcal{C}}$.  As
  described in the proof of
  Lemma~\ref{lem:graph-cover-dist-label-scheme}, we can use the labels
  to find the graph-ID of a subgraph $G'$ of $G$ for which
  $\dist_{G'}(s, t) = \dist_{\mathcal{C}}(s, t)$.  We then use the
  corresponding port labels of $s$ and $t$ to determine the port
  $p_s(v)$ of a neighbor $v$ of $s$ that is closer to $t$ with respect
  to $\dist_{G'}$.  Clearly, we have $\dist_{\mathcal{C}}(v, t) \le
  \dist_{G'}(v, t) < \dist_{G'}(s, t) = \dist_{\mathcal{C}}(s, t)$.
  Moreover, since $p_s(v)$ is identical in $G'$ and $G$, it follows
  that $p_s(v)$ is a suitable port in $G$.

  The label size can be bounded as follows.  By
  Lemma~\ref{lem:graph-cover-dist-label-scheme} we can encode all $k$
  identifiable distance labels stored at a vertex using
  $\mathcal{O}(k(\LabelSize + \log k + \log n))$ bits.  Since a single
  identifiable distance label is extended with a port label that takes
  at most $\LabelSize$ bits, it follows that the label size increases
  by an additive $\mathcal{O}(k \LabelSize)$, yielding a size of
  $\mathcal{O}(k(\LabelSize + \log k + \log n))$ bits.

  It remains to bound the query time.  Again, as described in the
  proof of Lemma~\ref{lem:graph-cover-dist-label-scheme}, determining
  the graph-ID of the subgraph $G'$ for which $\dist_{G'}(s, t) =
  \dist_{\mathcal{C}}(s, t)$ takes $\mathcal{O}(k q)$ time.  Computing
  the port $p_s(v)$ of a suitable neighbor $v$ of $s$ then takes an
  additional time $q$.  Consequently, the query time is $\mathcal{O}(k
  q)$.
\end{proof}

We are now ready to combine the above results to obtain our greedy
routing scheme.  To this end, we need to find
$(c, \Additive, k)$-graph-covers with small values for $c$,
$\Additive$, and $k$, as well as distance and port labeling schemes
with small label sizes and query times, as all of these properties
affect the performance of the routing scheme.  While distance labeling
schemes require large labels in general graphs~\cite{gppr-d-04},
better results can be obtained by restricting the graph-cover to only
contain trees as subgraphs.  Such a cover is then called
\emph{tree-cover}.  Tree-covers are standard in
routing~\cite{akp-besfdp-94, ap-rpcsto-92, egp-c-03, fpw-grbs-09,
  hst-fgfr-14, tczy-tcbgrgd-10, tz-crs-01}, and while it is known that
greedy routing with respect to $\dist_{\mathcal{C}}$ for a
$(c, \Additive, k)$-tree-cover $\mathcal{C}$ is starvation-free (see
e.g.,~\cite{hst-fgfr-14}), we also know that the resulting routing
scheme has stretch $c$ with additive bound $\Additive$ due to
Lemma~\ref{lem:graph-cover-dist-stretch}.  Moreover, for trees there
are distance and port labeling schemes with $\mathcal{O}(\log^2 n)$
bit labels and constant query time~\cite{fgnw-odlst-17, tz-crs-01}.
Together with Lemma~\ref{lem:graph-cover-port-label-scheme} we obtain
the following theorem.

\begin{theorem}
  \label{thm:cover-based-routing}
  Given a $(c, \Additive, k)$-tree-cover of a graph $G$, greedy
  routing on $G$ can be implemented such that the routing scheme is
  starvation-free, has stretch $c$ with additive bound $\Additive$,
  stores $\mathcal{O}(k(\log^2 n + \log k))$ bits at each vertex, and
  takes $\mathcal{O}(k)$ time for a routing decision.
\end{theorem}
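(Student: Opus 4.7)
The plan is to assemble the theorem directly from the machinery developed earlier in the subsection, with trees playing the role of the generic subgraphs in a graph-cover. The hypothesis gives us a $(c, \Additive, k)$-tree-cover $\mathcal{C}$ of $G$, which is in particular a $(c, \Additive, k)$-graph-cover, so all results from Lemmas~\ref{lem:graph-cover-dist-stretch} and~\ref{lem:graph-cover-port-label-scheme} apply. The overall strategy is: route greedily with respect to $\dist_{\mathcal{C}}$, and implement this using the port labeling scheme of Lemma~\ref{lem:graph-cover-port-label-scheme}, instantiated with the known tree labeling schemes that achieve $\mathcal{O}(\log^2 n)$-bit labels and $\mathcal{O}(1)$ query time.

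First I would invoke Lemma~\ref{lem:graph-cover-dist-stretch} applied to $\mathcal{C}$: this immediately yields that greedy routing on $G$ with respect to $\dist_{\mathcal{C}}$ is starvation-free and has multiplicative stretch $c$ with additive bound $\Additive$. So the quality guarantees on the routed path come for free from work already done, and what remains is to check that the scheme can actually be \emph{implemented} locally within the claimed resource bounds.

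Next I would cite the existing tree distance and port labeling schemes (e.g.~\cite{fgnw-odlst-17, tz-crs-01}) to fix, for every tree $T \in \mathcal{C}$, a distance and port labeling implementing $\dist_T$ with label size $\LabelSize = \mathcal{O}(\log^2 n)$ and query time $q = \mathcal{O}(1)$. Then Lemma~\ref{lem:graph-cover-port-label-scheme} gives a port labeling scheme for $G$ implementing $\dist_{\mathcal{C}}$ with label size $\mathcal{O}(k(\LabelSize + \log k + \log n)) = \mathcal{O}(k(\log^2 n + \log k + \log n))$ and query time $\mathcal{O}(k q) = \mathcal{O}(k)$. Since $\log n = \mathcal{O}(\log^2 n)$, the additive $\log n$ term is absorbed into $\log^2 n$, so the label size simplifies to $\mathcal{O}(k(\log^2 n + \log k))$ bits per vertex, matching the theorem statement.

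There is essentially no obstacle here: the statement is a bookkeeping corollary of the two key lemmas plus the known tree labeling results. The only thing to double-check is that the labels produced by Lemma~\ref{lem:graph-cover-port-label-scheme} can indeed be used to perform one greedy step in $\mathcal{O}(k)$ time, so that iterating them yields a routing decision of the claimed complexity. But this is exactly the content of the query-time analysis in that lemma: one finds the best tree for the current pair $(s,t)$ by scanning the $\mathcal{O}(k)$ matching graph-IDs in constant time each, and then performs a single tree port lookup in time $q = \mathcal{O}(1)$. Combining everything yields a starvation-free greedy scheme with stretch $c$, additive bound $\Additive$, per-vertex storage $\mathcal{O}(k(\log^2 n + \log k))$, and routing-decision time $\mathcal{O}(k)$, proving the theorem.
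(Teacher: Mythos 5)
Your proposal matches the paper's own argument: the theorem is obtained by combining Lemma~\ref{lem:graph-cover-dist-stretch} (starvation-freedom and stretch) with Lemma~\ref{lem:graph-cover-port-label-scheme} instantiated with the known tree distance and port labeling schemes of label size $\mathcal{O}(\log^2 n)$ and constant query time, exactly as you do. Your bookkeeping of the resulting label size and query time is correct, so the proposal is sound and takes essentially the same route as the paper.
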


To complete our scheme, we propose an algorithm that computes a
tree-cover with bounded stretch.  It is an adaptation of an algorithm
previously used to compute graph spanners~\cite{c-facsps-98}.  We
start with the following lemma, describing a situation that is
exploited by our algorithm.

\begin{lemma}
  \label{lem:u-v-covering-ball-tree}
  Let $G = (V, E)$ be a graph, $u, v \in V$, and let $H$ be an induced
  subgraph that contains all vertices on a shortest $uv$-path $P$ in
  $G$.  Let $T$ be a partial shortest-path tree in $H$ rooted at $t$
  that contains $u$ and $v$.  Then, for every vertex $w$ in $T$ that
  lies on $P$, $\dist_T(u, v) \le \dist_G(u, v) + 2\dist_H(t, w)$.
\end{lemma}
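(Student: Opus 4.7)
The plan is to bound $\dist_T(u,v)$ by routing through the root $t$ and then decomposing the resulting root-to-endpoint distances through the intermediate vertex $w$. The key observation enabling this is that a shortest-path tree preserves distances from its root, so distances in $T$ that involve $t$ can be translated back into distances in $H$.

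First, I would argue that $\dist_T(u,v) \le \dist_T(u,t) + \dist_T(t,v)$ by the triangle inequality in $T$ (using the unique tree path from $u$ to $v$, which visits the least common ancestor of $u$ and $v$ in $T$; concatenating the $u$-to-$t$ and $t$-to-$v$ paths can only be longer). Since $T$ is a shortest-path tree rooted at $t$ in $H$ containing both $u$ and $v$, we have $\dist_T(u,t) = \dist_H(u,t)$ and $\dist_T(t,v) = \dist_H(t,v)$. Applying the triangle inequality in $H$ through the vertex $w$ then gives
\begin{align*}
  \dist_T(u,v) \le \dist_H(u,w) + \dist_H(w,v) + 2\dist_H(t,w).
\end{align*}

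Next, I would show that $\dist_H(u,w) = \dist_G(u,w)$ and $\dist_H(w,v) = \dist_G(w,v)$. Because $H$ is a subgraph of $G$, we always have $\dist_H \ge \dist_G$. For the reverse direction, since $w$ lies on the shortest $uv$-path $P$ in $G$, the subpath of $P$ from $u$ to $w$ is itself a shortest $uw$-path in $G$, and by assumption all of its vertices lie in $H$; thus this subpath is available in $H$, giving $\dist_H(u,w) \le \dist_G(u,w)$. The analogous argument works for $\dist_H(w,v)$.

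Finally, since $w$ lies on the shortest $uv$-path $P$, we have $\dist_G(u,w) + \dist_G(w,v) = \dist_G(u,v)$, and the claim follows by substitution. No step looks like a serious obstacle; the only subtlety is making sure to justify the equality between distances in $G$ and in $H$ along the portion of $P$, which hinges on the fact that subpaths of shortest paths are shortest paths together with the assumption that all vertices of $P$ are contained in the induced subgraph $H$.
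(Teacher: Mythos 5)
Your proof is correct and follows essentially the same route as the paper's: both bound $\dist_T(u,v)$ by $\dist_H(t,u)+\dist_H(t,v)$ (the paper via the lowest common ancestor, you via the triangle inequality in $T$ plus the distance-preserving property of the shortest-path tree), then apply the triangle inequality through $w$ and use that $P$, being contained in the induced subgraph $H$, realizes $\dist_G$ there. The remaining details you fill in (subpaths of shortest paths, $\dist_H \ge \dist_G$) match the paper's argument.
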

\begin{proof}
  Let $t'$ be the lowest common ancestor of $u$ and $v$ and consider
  the paths $P_u$ and $P_v$ from $t'$ to $u$ and $v$, respectively.
  Note that $P_u$ and $P_v$ are shortest paths in $H$ as they are
  descending paths in a shortest-path tree.  Thus, $\dist_T(u, v) =
  |P_u| + |P_v| = \dist_H(t', u) + \dist_H(t', v)$.

  Observe that clearly $\dist_H(t', u) \le \dist_H(t, u)$.  Moreover,
  by the triangle inequality, we have $\dist_H(t, u) \le \dist_H(t, w)
  + \dist_H(w, u)$.  Analogously for $v$, we obtain $\dist_H(t', v)
  \le \dist_H(t, w) + \dist_H(w, v)$.  Thus, we get
  \begin{align*}
    \dist_T(u, v)
    &= \dist_H(t', u) + \dist_H(t', v) \le \dist_H(t, w) + \dist_H(w, u) + \dist_H(t, w) + \dist_H(w, v)\\
    &= \dist_H(u, v) + 2\dist_H(t, w),
  \end{align*}
  where the last equality holds as $w$ lies on a shortest $uv$-path
  $P$ in $G$, which is also a shortest $uv$-path in $H$, since $H$ is
  an induced subgraph of $G$ that contains all vertices of $P$.  For
  the same reason, we get $\dist_H(u, v) = \dist_G(u, v)$, which
  proves the claim.
\end{proof}

Consider the setting as in the above lemma, let $w$ be chosen such
that $\dist_H(t, w)$ is minimal and let $\xi = 2\dist_H(t, w) /
\dist_G(u, v)$.  Then, it holds that $\dist_T(u, v) \le (1 +
\xi)\dist_G(u, v)$.  That is, $T$ has stretch $(1 + \xi)$.  The
following algorithm computes a tree-cover with the same stretch.

Let $G$ be the input graph.  The algorithm operates in phases,
starting with phase~$0$.  For each phase $i$, we define a radius $r_i
= b^i$, for a base $b > 1$.  Then, for $a > 0$, we choose a vertex~$t$
in the current graph and compute the partial shortest-path tree with
root $t$ containing all vertices with distance at most $ (1 + a) r_i$
from $t$.  Afterwards, we delete all vertices with distance at most
$r_i$ to $t$ from the current graph.  This is iterated until all
vertices are deleted.  Afterwards, phase $i$ is done and we restore
the original input graph $G$ before starting phase $i + 1$.  This
process is stopped, once the whole graph is deleted after processing
the first tree in a phase.  The output of the algorithm is the set of
all trees computed during execution.  Since the algorithm
\textbf{pro}duces \textbf{t}ree-covers \textbf{o}f \textbf{n}etworks,
we call it \treecompalgo.

Note that \treecompalgo has several degrees of freedom.  We can choose
the parameters $a > 0$ and $b > 1$, as well as the order in which the
roots of the partial shortest-path trees are selected.  The following
lemma holds independent of the root selection strategy.

\begin{lemma}
  \label{lem:tree-cover-algo-stretch}
  The tree-cover computed by \treecompalgo has stretch $(1 + 2b/a)$
  with additive bound~$2$.
\end{lemma}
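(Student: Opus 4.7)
Given a pair of vertices $u, v$ with $d = \dist_G(u, v)$, my strategy is to identify a single phase of \treecompalgo whose trees are guaranteed to contain a good $uv$-path. The natural choice is the smallest index $i \ge 0$ with $r_i = b^i \ge d/a$: for this $i$ the tree radius $(1+a)r_i$ is just large enough to enclose both endpoints of a shortest $uv$-path once its center is within $r_i$ of any vertex of the path.

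First, I would fix a shortest $uv$-path $P$ in $G$ and, inside phase $i$, let $T_j$ be the first tree produced during that phase whose root $t_j$ deletes a vertex $w \in P$. Since $w$ is the first vertex of $P$ to be deleted, every vertex of $P$ still lies in the induced subgraph $H$ on which $T_j$ is built, so $P$ remains a shortest $uv$-path in $H$. By the deletion step, $\dist_H(t_j, w) \le r_i$, and since $u, v$ are at distance at most $d$ from $w$ along $P$ inside $H$, the triangle inequality gives $\dist_H(t_j, u), \dist_H(t_j, v) \le r_i + d \le (1+a)r_i$. Hence both $u$ and $v$ lie in $T_j$, and \Cref{lem:u-v-covering-ball-tree} applied to $t_j$ and this $w$ yields $\dist_{T_j}(u, v) \le d + 2r_i$. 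If $i = 0$ then $r_i = 1$, giving the additive bound $2$; if $i \ge 1$, then by minimality of $i$ we have $r_{i-1} < d/a$, hence $r_i < bd/a$ and $\dist_{T_j}(u, v) < (1 + 2b/a)\,d$.

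The remaining case is that phase $i$ is never executed, which forces $i \ge 1$ (phase $0$ is always run) and means the algorithm terminated in some earlier phase $i^* < i$. By the termination condition, the unique tree $T$ computed in phase $i^*$ covers the entire graph within radius $r_{i^*}$ of its root, so $\dist_T(u, v) \le 2r_{i^*}$. Since $i^* \le i - 1$, we have $r_{i^*} < d/a$ and hence $\dist_T(u, v) < 2d/a$. A short calculation using $b > 1$ (so that $a + 2b > 2$) shows $2d/a \le (1 + 2b/a)\,d$, which gives the multiplicative bound for $T$.

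I expect the main obstacle to be the first step: justifying that the entire path $P$ is still present in $H$ when $T_j$ is built, so that \Cref{lem:u-v-covering-ball-tree} can be invoked with $H$ as the induced subgraph containing a shortest $uv$-path and with a $w \in P$ satisfying $\dist_H(t_j, w) \le r_i$. The remaining pieces — the base case $i = 0$ giving the additive bound, the bound $r_i < bd/a$ from minimality, and the termination-case inequality — are routine manipulations around the defining sandwich $r_{i-1} < d/a \le r_i$.
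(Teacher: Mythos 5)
Your proof is correct and follows essentially the same route as the paper's: the same choice of the minimal phase index $i$ with $\dist_G(u,v) \le a r_i$, the same "first deleted vertex on a shortest $uv$-path" argument to get $\dist_H(t,w) \le r_i$ and membership of $u,v$ in the tree, the same application of \Cref{lem:u-v-covering-ball-tree}, and the same handling of the $i = 0$ additive case and the early-termination case. The only cosmetic difference is that you bound $\dist_T(u,v) \le d + 2r_i$ directly rather than via the intermediate quantity $\xi = 2\dist_H(t,w)/\dist_G(u,v)$ used in the paper.
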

\begin{proof}
  Let $\mathcal{C}$ be the tree-cover computed by \treecompalgo, let
  $G = (V, E)$ be the input graph, and let $u \neq v \in V$ be two
  arbitrary vertices.  We have to show that $\mathcal{C}$ contains a
  tree $T$ that includes $u$ and $v$ such that
  $\dist_T(u, v) \le (1 + 2b/a) \dist_G(u, v)$ or
  $\dist_T(u, v) \le \dist_G(u, v) + 2$.

  Let $i$ be minimal such that $\dist_G(u, v) \le a r_i$.  Assume for
  now that \treecompalgo did not stop before phase $i$; we deal with
  the other case later.  As phase $i$ continues until all vertices are
  deleted, at one point a vertex $w$ on a shortest $uv$-path in $G$ is
  deleted.  Let $H$ be the current graph before that happens for the
  first time and let $T$ be the partial shortest-path tree computed in
  $H$ rooted at $t$.  To show that $T$ is the desired tree, we aim to
  apply Lemma~\ref{lem:u-v-covering-ball-tree}.

  First note that $H$ is an induced subgraph of $G$ that contains all
  vertices on a shortest $uv$-path of $G$.  Moreover, $T$ contains $u$
  and $v$ for the following reason.  As $w$ is deleted, we have that
  $\dist_H(t, w) \le r_i$.  Moreover, as $w$ lies on a shortest
  $uv$-path, the distance from $w$ to either $u$ or $v$ cannot exceed
  $\dist_H(u, v) = \dist_G(u, v)$.  Thus, by the triangle inequality
  and the above choice of $i$, we have
  $\dist_H(t, u) \le \dist_H(t, w) + \dist_H(w, u) \le r_i + a r_i =
  (1 + a) r_i$, which implies that $u$ is a vertex of $T$.
  Analogously, $v$ is also contained in $T$.

  With this, we can apply Lemma~\ref{lem:u-v-covering-ball-tree},
  yielding stretch $(1 + \xi)$ for
  $\xi = 2\dist_H(t, w) / \dist_G(u, v)$.  To bound $\xi$, recall that
  we chose $i$ minimal such that $\dist_G(u, v) \le a r_i$.  Thus, if
  $i > 0$, then $\dist_G(u, v) > a r_{i - 1} = \frac{a}{b} r_i$.
  Together with the fact that $\dist_H(t, w) \le r_i$, we obtain
  $\xi \le 2 \frac{b}{a}$, as desired.  In the special case that
  $i = 0$ we have $r_i = 1$ and therefore $\dist_H(t, w) \le 1$.
  Thus, Lemma~\ref{lem:u-v-covering-ball-tree} directly yields
  $\dist_T(u, v) \le \dist_G(u, v) + 2$, which is covered by the
  additive bound~$2$.

  Finally, we assumed above that \treecompalgo did not stop before
  phase $i$ and it remains to consider the case where it stops in
  phase $j < i$.  In this case, let $T$ be the tree we get in phase
  $j$, which includes all vertices of $G$.  Let $t$ be the root of
  $T$.  As all vertices have distance at most $r_j$ from $t$, we get
  $\dist_T(u, v) \le 2r_j$.  Moreover, as $i$ was chosen minimal such
  that $\dist_G(u, v) \le a r_i$, we have $\dist_G(u, v) > a r_j$.
  Together with the previous inequality, this gives a stretch of
  $2/a$, which is smaller than the desired $(1 + 2b/a)$, as $b > 1$.
\end{proof}

\subsection{Performance on Strongly Hyperbolic Unit Disk Graphs}
\label{sec:hyperbolic-analysis}

Throughout the remainder of the paper, we utilize strongly hyperbolic
unit disk graphs to investigate the performance of routing on networks
with underlying hyperbolic geometry.  This is not only interesting
since routing is one of the most fundamental graph problems, but is
also particularly relevant on complex networks like the internet,
which has previously been observed to fit well into the hyperbolic
plane~\cite{bpk-si-10}.

While \treecompalgo computes a $(c, \Additive, k)$-tree-cover with
bounded stretch, the value $k$, i.e., the maximum number of trees that
a vertex is contained in, depends on the structure of the considered
graph.  In the following, we show that $k$ is small on networks with
an underlying hyperbolic geometry.  In our analysis, we consider the
\emph{radially increasing} root selection strategy that selects the
vertices in order of increasing distance to the origin of the
hyperbolic plane, and prove the following theorem.  There, $\diam(G)$
denotes the \emph{diameter} of $G$, i.e., the length of the longest
shortest path in $G.$

\begin{theorem}
  \label{thm:tree-cover-hyperbolic-udg}
  Let $G$ be a strongly hyperbolic unit disk graph with radius
  $R > 0$.  Given the disk representation of $G$, $a > 0$, and
  $b > 1$, the \treecompalgo algorithm with the radially increasing
  root selection strategy computes a $(c, \Additive, k)$-tree-cover
  of~$G$ with
  \begin{equation*}
    c = 1 + 2b/a,~\Additive = 2~\text{, and}~k = \pi e \left( \frac{1 + a}{b - 1} (b^2 \diam(G) - 1) R + 2\left( \log_b(\diam(G)) + 2 \right) \right). \qedhere
  \end{equation*}
\end{theorem}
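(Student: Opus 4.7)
The stretch $c = 1 + 2b/a$ and additive bound $\Additive = 2$ follow immediately from Lemma~\ref{lem:tree-cover-algo-stretch}, which applies to \treecompalgo under any root-selection strategy. The entire substance of the proof is therefore to bound $k$, the maximum number of trees that contain any fixed vertex $v$, when roots are selected in radially increasing order.

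The plan is to split $k$ into a contribution per phase and then sum. Since $r_i = b^i$ grows geometrically and \treecompalgo terminates once a single tree of radius $(1+a)r_i$ covers the entire graph, the number of phases is at most $\log_b(\diam(G)) + 2$. Each phase will contribute an additive constant ($+2$) to the per-vertex count beyond the geometric term, which accounts for the second summand $2(\log_b(\diam(G))+2)$ in $k$.

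The geometric heart of the proof is a per-phase bound of roughly $\pi e\,(1+a)\,r_i\,R$ trees containing $v$. If a root $t$ of such a tree exists in phase $i$, then $\dist_G(t,v) \le (1+a)r_i$, and because every edge of $G$ corresponds to hyperbolic distance at most $R$, we obtain $\dist_{\mathbb{H}^2}(t,v) \le (1+a)r_i R$. So every candidate root sits inside a hyperbolic disk of radius $(1+a)r_iR$ around $v$. I would then combine this with: (i) the radially increasing order, which puts the roots on successively higher radial shells; (ii) the deletion rule, which guarantees that $t_{j+1}$ survived removal of the graph-distance-$r_i$ ball around $t_j$ in the current subgraph $H_j$; and (iii) the angular bounds from Corollary~\ref{lem:max-angle-adjacent} together with the clique-cover idea underlying Lemma~\ref{lem:strongly-unit-disk-clique-cover} and Corollary~\ref{lem:smallerRadiusIncreasesNeighborhood}. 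The aim is an angular packing statement: in the radial shell at depth roughly $r(v)$, consecutive roots in phase $i$ must be angularly separated by enough that at most $\pi e\,(1+a)\,r_iR$ of them fit in the hyperbolic disk around $v$. Summing the per-phase bound $\pi e\,((1+a)r_iR + 2)$ over the $\log_b(\diam(G))+2$ phases, with $\sum_{i=0}^{\log_b(\diam(G))+1} b^i = (b^2\diam(G)-1)/(b-1)$, then yields the claimed expression for $k$.

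The main obstacle is the step from the algorithmic guarantee $\dist_{H_j}(t_j,t_{j+1}) > r_i$, which lives in the \emph{reduced} subgraph where paths can only become longer than in $G$, to a clean hyperbolic-geometric separation between consecutive roots; without the radially increasing order a later root could sit hyperbolically close to an earlier one while being artificially far in $H_j$ by routing through vertices that have since been deleted. Turning the radial ordering into an honest angular spread in the relevant shell, and extracting the constant $\pi e$ from the resulting circumferential packing in the hyperbolic disk, is the technical crux I expect to spend most of the effort on.
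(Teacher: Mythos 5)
Your top-level skeleton matches the paper: stretch and additive bound from \Cref{lem:tree-cover-algo-stretch}, a per-phase bound of the form $\pi e\,(R(1+a)b^i + 2)$ on the number of roots whose trees contain $v$ (the paper's \Cref{lem:bounded-labels-in-a-phase}), and a geometric-series summation over $\mathcal{O}(\log_b(\diam(G)))$ phases. However, the geometric core of your plan, as sketched, would not deliver the claimed bound. Observing that every root $t$ of $v$ in phase $i$ satisfies $\dist_{\mathbb{H}^2}(t,v) \le (1+a)r_i R$ and then packing pairwise non-adjacent points into that hyperbolic ball is far too weak: the ball's angular extent (seen from the origin) can be essentially all of $[0,2\pi)$, and a packing of points at pairwise hyperbolic distance more than $R$ inside it can have size exponential in $R$, not linear. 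The ingredient you are missing is the angular-distance argument that the radially increasing selection actually buys: when a root $\rho$ is processed, all vertices of radius smaller than $r(\rho)$ have already been deleted, so the tree path from $\rho$ to $v$ uses only vertices of radius at least $r(\rho)$; by \Cref{lem:path-bounded-angle} each such edge contributes angular distance at most $\pi e^{R/2 - r(\rho)}$, so all roots of $v$ lying in a radial band $[R/2+j, R/2+j+1]$ sit in an angular interval of width $\mathcal{O}((1+a)b^i e^{-j})$ around $v$. Combined with the lower bound $e^{-(j+1)}$ on the angular separation of non-adjacent vertices in that band (\Cref{lem:constant-radial-increase-large-angle}), the two exponentials cancel and each of the $R/2$ bands contains at most $2\pi e(1+a)b^i$ roots; the factor $R$ in $k$ comes from the number of bands, not from a single shell ``at depth roughly $r(v)$'' as you suggest.

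Your stated ``main obstacle'' is also a red herring. To separate two roots of the same phase you do not need to relate $\dist_{H}$ to the geometry at all: a later root is still present when an earlier root $t$ is processed, and $H$ is an induced subgraph, so if the two were adjacent in $G$ the later one would lie within graph distance $1 \le r_i$ of $t$ and would have been deleted. Hence roots of one phase are pairwise non-adjacent in $G$, i.e., at hyperbolic distance greater than $R$, which is exactly what feeds \Cref{lem:constant-radial-increase-large-angle}. Finally, the ``$+2$'' per phase is not a generic constant: it covers the vertices of radius at most $R/2$ (a clique, gone after the first root of the phase) and the case $R<1$, where \Cref{lem:strongly-unit-disk-clique-cover} bounds the whole graph by $\mathcal{O}(1)$ cliques and hence the number of roots per phase by a constant. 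Without the band decomposition and the path-angle bound enabled by the radially increasing order, your packing step fails, so the proposal has a genuine gap at precisely the point where the root-selection strategy must be exploited.
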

First note that the correctness of the claimed stretch immediately
follows from \Cref{lem:tree-cover-algo-stretch}.  However, bounding
$k$ is more involved.  To that end, we first compute an upper bound
that holds for a given phase and afterwards sum over all phases.

Consider the roots of the partial shortest-path trees that contain a
vertex $v$ in a given phase, which we refer to as the \emph{roots of
  $v$}.  We partition the hyperbolic disk into radial bands and
compute an upper bound on the number of roots of $v$ in each band, see
Figure~\ref{fig:bands} for an illustration.  We then utilize two key
ingredients.  First, since $v$ is contained in the partial
shortest-path trees of its roots, the length of the path between $v$
and a root is bounded, and so is the angular distance between them.
Consequently, all roots in a band lie in a bounded angular interval
(blue areas in Figure~\ref{fig:bands}).  Secondly, roots cannot be
adjacent as they would otherwise delete each other, which means that
the hyperbolic distance between them has to be sufficiently large.
For roots in the same band, this can only be achieved if their angular
distance is large.  Consequently, each root in a band reserves a
portion of the angular interval (red areas in Figure~\ref{fig:bands})
that no other root can lie in, from which we can derive an upper bound
on the number of roots that lie in the band.

\begin{figure}
  \centering
  \includegraphics{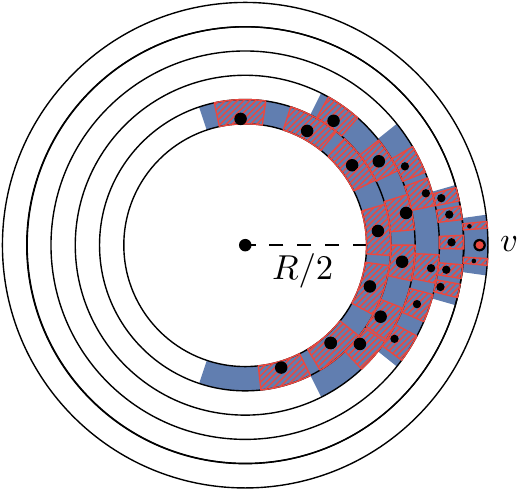}
  \caption{The hyperbolic disk is divided into radial bands.  The
    roots (black vertices) of $v$ (red dot) in a band lie in an
    angular interval $\Phi$ of bounded width (blue).  Each root
    reserves a portion of that interval (red) that no other root can
    lie in.  All vertices with radius at most $R/2$ are removed after
    processing the first root.}
  \label{fig:bands}
\end{figure}

The following lemma bounds the angular distance between a vertex $u$
and another vertex~$u_k$, assuming that there exists a path of length
$k$ between them that consists only of vertices whose radius is not
smaller than the one of $u$.  In particular, this applies to roots
of~$v$: In a given phase, the length of the paths considered in the
partial shortest-path trees is bounded.  Moreover, when the partial
shortest-path tree of a root $\rho$ of $v$ is computed, all vertices
of smaller radii than $\rho$ have been deleted (since roots are
considered in order of increasing radius), meaning the path from
$\rho$ to $v$ cannot contain vertices of smaller radius.

\begin{lemma}
  \label{lem:path-bounded-angle}
  Let $G$ be a strongly hyperbolic unit disk graph with radius $R$ and
  let $u$ be a vertex with $r(u) \ge R/2$.  Further, let
  $P = (u, u_1, \dots, u_k)$ be a path with $r(u) \le r(u_i)$ for all
  $i \in [k]$.  Then,
  $\Delta_{\varphi}(u, u_k) \le k \cdot \pi e^{R/2 - r(u)}$.
\end{lemma}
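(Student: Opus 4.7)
The plan is to bound the angular distance edge by edge along the path and then sum via the triangle inequality on the circle. First I would observe that for every consecutive pair $u_i, u_{i+1}$ on the path (taking $u_0 \coloneqq u$), both endpoints have radius at least $r(u) \ge R/2$ by hypothesis. In particular, $r(u_i) + r(u_{i+1}) \ge 2 r(u) \ge R$, which is exactly the regime in which Lemma~\ref{lem:tight-max-angle-adjacent} applies. Since $u_i$ and $u_{i+1}$ are adjacent in $G$, their angular distance is at most $\theta(r(u_i), r(u_{i+1}))$.

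Next, I would plug the radii into the upper bound of Lemma~\ref{lem:tight-max-angle-adjacent} to obtain
\begin{align*}
  \Delta_\varphi(u_i, u_{i+1}) \le \theta(r(u_i), r(u_{i+1})) \le \pi \sqrt{e^{R - r(u_i) - r(u_{i+1})}} \le \pi \sqrt{e^{R - 2 r(u)}} = \pi e^{R/2 - r(u)},
\end{align*}
where the middle inequality uses $r(u_i), r(u_{i+1}) \ge r(u)$.

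Finally, I would use that $\Delta_\varphi$ is the geodesic distance on the circle of circumference $2\pi$, and therefore satisfies the triangle inequality. Applying it repeatedly along the path yields
\begin{align*}
  \Delta_\varphi(u, u_k) \le \sum_{i=0}^{k-1} \Delta_\varphi(u_i, u_{i+1}) \le k \cdot \pi e^{R/2 - r(u)},
\end{align*}
which is the claimed bound. The only subtlety is ensuring that the hypothesis $r_1 + r_2 \ge R$ of Lemma~\ref{lem:tight-max-angle-adjacent} is met at every step, which is precisely why the assumption $r(u) \ge R/2$ is present; once this is noted, everything else reduces to plugging in and summing, so I do not anticipate a real obstacle.
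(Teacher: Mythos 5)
Your proposal is correct and follows essentially the same route as the paper's proof: bound each edge's angular distance by $\theta(r(u_{i-1}), r(u_i))$, apply the upper bound of Lemma~\ref{lem:tight-max-angle-adjacent} (justified by $r(u_{i-1}) + r(u_i) \ge 2r(u) \ge R$), and sum over the $k$ edges via the triangle inequality for angular distances.
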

\begin{proof}
  For convenience, we define $u_0 = u$.  Then, $\Delta_{\varphi}(u,
  u_k)$ can be bounded by
  \begin{align*}
    \Delta_{\varphi}(u, u_k) \le \sum_{i = 1}^{k} \Delta_{\varphi}(u_{i - 1}, u_i).
  \end{align*}
  Note that $u_{i-1}$ and $u_i$ are adjacent and recall that
  $\theta(r(u_{i - 1}), r(u_i))$ denotes the maximum angular distance
  between them, such that this is the case.  Thus, 
  \begin{align*}
    \Delta_{\varphi}(u, u_k) \le \sum_{i = 1}^{k} \theta(r(u_{i - 1}), r(u_i)).
  \end{align*}
  Since $R/2 \le r(u) \le r(u_i)$ for all $i \in [k]$ is a
  precondition of this lemma, we have $r(u_{i - 1}) + r(u_i) \ge R$
  for all $i \in [k]$.  Consequently, we can apply
  \Cref{lem:tight-max-angle-adjacent} to bound
  $\theta(r(u_{i - 1}), r(u_i))$, which yields
  \begin{align*}
    \Delta_{\varphi}(u, u_k) &\le \sum_{i = 1}^{k} \pi e^{(R - r(u_{i - 1}) - r(u_i))/2} \le \sum_{i = 1}^{k} \pi e^{(R - r(u) - r(u))/2} = k \cdot \pi e^{R/2 - r(u)},
  \end{align*}
  where the second inequality is valid since $r(u) \le r(u_i)$ for all
  $i \in [k]$.
\end{proof}

The second key ingredient is a lower bound on the minimum angular
distance between two non-adjacent vertices in a radial band of fixed
width in the hyperbolic disk.  We note that in order to obtain a bound
that is easy to work with, we aim to utilize
\Cref{lem:max-angle-adjacent}.  However, this requires that $R$ is not
too small.  For now, we assume that this requirement is met and
afterwards resolve the constraint in the analysis of the algorithm.

\begin{lemma}
  \label{lem:constant-radial-increase-large-angle}
  Let $G$ be a strongly hyperbolic unit disk graph with radius
  $R \ge 1$ and let $r \ge R/2$ be a radius.  Further, let $u, v$ be
  non-adjacent vertices with $r(u), r(v) \in [r, r + \tau]$ for
  $\tau \in [0, R - 1]$.  Then,
  $\Delta_{\varphi}(u, v) \ge e^{R/2 - (r + \tau)}$.
\end{lemma}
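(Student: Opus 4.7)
The plan is to reduce the statement directly to the lower bound on $\theta(r_1,r_2)$ established in Corollary~\ref{lem:max-angle-adjacent}, and then minimize that bound over the admissible radii in the band $[r, r+\tau]$.

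First, since $u$ and $v$ are non-adjacent, their hyperbolic distance exceeds~$R$, which, by the definition of $\theta$, means that $\Delta_\varphi(u,v) \ge \theta(r(u), r(v))$ (strictly, in fact, but we only need the weak inequality). So it suffices to find a lower bound on $\theta(r(u), r(v))$ that matches the claim. I would next check the hypotheses of Corollary~\ref{lem:max-angle-adjacent} for the pair $(r(u), r(v))$: the corollary requires $R \ge 1$, $r(u), r(v) \in (0, R]$, $r(u) + r(v) \ge R$, and $|r(u) - r(v)| \le R - 1$. The first is assumed; the membership in $(0, R]$ follows because $G$ is strongly hyperbolic unit disk, so every vertex has radius at most $R$, and $r(u), r(v) \ge r \ge R/2 > 0$; the sum condition follows from $r(u) + r(v) \ge 2r \ge R$; and the difference condition follows from $|r(u)-r(v)| \le \tau \le R - 1$.

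With the hypotheses verified, Corollary~\ref{lem:max-angle-adjacent} gives $\theta(r(u), r(v)) \ge \sqrt{e^{R - r(u) - r(v)}}$. The exponent $R - r(u) - r(v)$ is smallest (most negative) when both radii are as large as possible, i.e., at $r + \tau$, which gives $R - r(u) - r(v) \ge R - 2(r + \tau)$. Plugging in yields
\begin{equation*}
  \theta(r(u), r(v)) \ge \sqrt{e^{R - 2(r + \tau)}} = e^{R/2 - (r + \tau)},
\end{equation*}
and combining with $\Delta_\varphi(u,v) \ge \theta(r(u),r(v))$ finishes the proof.

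There is no real obstacle here; the lemma is essentially a packaging step that specializes the general bound of Corollary~\ref{lem:max-angle-adjacent} to the band setting needed in the subsequent root-counting argument. The only thing to be mindful of is to verify \emph{all} hypotheses of the corollary cleanly, in particular that the strongly hyperbolic assumption gives $r(u), r(v) \le R$, and that the band width $\tau \le R - 1$ is exactly what lets one satisfy the $|r_1 - r_2| \le R - 1$ requirement.
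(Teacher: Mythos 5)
Your proposal is correct and follows essentially the same route as the paper's proof: both use that non-adjacency forces $\Delta_\varphi(u,v) \ge \theta(r(u), r(v))$, verify the preconditions of Corollary~\ref{lem:max-angle-adjacent} from $r \ge R/2$ and $\tau \le R-1$, and then bound the exponent by taking both radii at $r+\tau$.
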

\begin{proof}
  Recall that $\theta(r(u), r(v))$ denotes the maximum angular
  distance such that~$u$ and~$v$ are adjacent.  Since the two vertices
  are not adjacent in our case, we can derive that
  $\Delta_{\varphi}(u, v) > \theta(r(u), r(v))$.  We now aim to
  apply~\Cref{lem:max-angle-adjacent} in order to obtain a lower bound
  on $\theta(r(u), r(v))$.  To this end, we first validate that its
  preconditions are met.  Since $r(u), r(v) \ge r \ge R/2$, we have
  $r(u) + r(v) \ge R$.  Moreover, by assumption we know that
  $r(u), r(v) \in [r, r + \tau]$ for $\tau \in [0, R - 1]$, which
  implies that $|r(u) - r(v)| \le \tau \le R - 1$.  Consequently, we
  can apply~\Cref{lem:max-angle-adjacent} to conclude that
  \begin{align*}
    \theta(r(u), r(v)) &\ge e^{(R - r(u) - r(v))/2} \\
                       &\ge e^{(R - 2r - 2\tau)/2} \\
                       &= e^{R/2 - (r + \tau)},
  \end{align*}
  where the second inequality is valid, since by assumption
  $r(u), r(v) \le r + \tau$.
\end{proof}

We can now combine the two key ingredients to compute an upper bound
on the number of the roots of $v$ in a given phase $i$, which we
denote by $\bm{\rho}_i(v)$. 

\begin{lemma}
  \label{lem:bounded-labels-in-a-phase}
  Let $G$ be a strongly hyperbolic unit disk graph with radius
  $R > 0$.  Let the disk representation of $G$, $a > 0$, and $b > 1$
  be given and consider phase $i$ of the \treecompalgo algorithm.
  Then, for every vertex $v$ it holds that
  $|\bm{\rho}_i(v)| \le \pi e (R (1 + a)b^{i} + 2)$.
\end{lemma}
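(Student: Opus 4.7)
The plan is to classify the roots of~$v$ in phase~$i$ by their radial coordinate and bound each radial group via the adjacency lemmas of \Cref{sec:shdg-adjacency}. Two structural facts drive the argument. First, the roots of a single phase are pairwise non-adjacent: when the algorithm processes a root~$t$ it deletes every vertex within graph distance $r_i = b^i \ge 1$ of~$t$, in particular every neighbor. Second, under the radially increasing selection, every vertex still alive when a root~$\rho$ is processed has radius at least $r(\rho)$, so the shortest $\rho$-$v$ path in the current graph (which has length at most $(1+a) b^i$ by definition of the tree) consists of vertices of radius at least $r(\rho)$.

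The first step is to handle the central root separately. At most one root~$\rho$ of~$v$ can have $r(\rho) < R/2$: any two such roots would be at hyperbolic distance at most $r(\rho_1) + r(\rho_2) < R$ by the triangle inequality through the origin, contradicting their non-adjacency. For the remaining roots, with $r(\rho) \in [R/2, R]$, I would slice this annulus into radial bands of width $\tau = 1$, treating the small-$R$ regime separately (the claimed bound already exceeds $2\pi e$ there, so direct counting suffices). Within a band $[r, r+1]$ two ingredients combine: \Cref{lem:path-bounded-angle}, applied to a shortest $\rho$-$v$ path, yields $\Delta_\varphi(\rho, v) \le (1+a) b^i \pi e^{R/2 - r}$, confining all band roots to an arc of angular width at most $\min\{2\pi,\, 2(1+a) b^i \pi e^{R/2 - r}\}$ around $\varphi(v)$; and \Cref{lem:constant-radial-increase-large-angle}, applied to the band, lower-bounds the pairwise angular distance between non-adjacent vertices by $e^{R/2 - (r+1)}$. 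The standard interval-packing estimate then bounds the number of roots in the band by $\min\{2\pi e^{r+1 - R/2},\, 2\pi e (1+a) b^i\} + 1$.

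Summing over the $O(R)$ bands splits naturally at the transition radius $r^* = R/2 + \ln((1+a) b^i)$: the bands with $r < r^*$ form a geometric series of total size $O((1+a) b^i)$, while each of the at most $R/2$ bands with $r \ge r^*$ contributes $2\pi e (1+a) b^i + 1$, giving the leading term $\pi e R(1+a) b^i$. The main obstacle, in my view, will be the bookkeeping required to fold every lower-order contribution --- the ``$+1$'' per band, the geometric tail from the small-$r$ bands, and the possible central root --- into the single additive constant $2\pi e$ in the claimed bound. The choice $\tau = 1$ is essentially forced because $e^\tau/\tau$ attains its minimum there and hence pins the dominant coefficient to exactly $\pi e$; the remaining tightening should come from the slack in \Cref{lem:constant-radial-increase-large-angle} in the regimes where $r$ is close to $R/2$ (where the first branch of the minimum is tighter) or close to~$R$ (where the arc width is much smaller than the uniform estimate), together with the observation that once $(1+a) b^i \ge e^{R/2}$ only the small-$r$ regime is non-empty and the geometric-series bound already suffices.
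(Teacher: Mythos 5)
Your route is the paper's route: the same two ingredients (\Cref{lem:path-bounded-angle} to confine the roots of a band to an angular interval around $v$, and \Cref{lem:constant-radial-increase-large-angle} with $\tau = 1$ for the minimum pairwise separation of non-adjacent roots), the same unit-width radial bands covering $[R/2, R]$, the same packing estimate per band, and a separate treatment of the inner region $D_{R/2}$ and of small $R$. The substantive problem is the final accounting, which you leave open. Your per-band bound $\min\{2\pi e^{r+1-R/2},\, 2\pi e(1+a)b^i\} + 1$ produces one leftover unit per band, i.e.\ roughly $R/2$ in total (plus the central root), and these must be absorbed into the additive allowance $2\pi e \approx 17$; for $R > 4\pi e$ this cannot work, you name it yourself as ``the main obstacle,'' and the fixes you gesture at (slack in \Cref{lem:constant-radial-increase-large-angle} near $r = R/2$ or $r = R$, the transition-radius split) are never turned into an argument. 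So as written the proposal yields $|\bm{\rho}_i(v)| \le \pi e R(1+a)b^i + \mathcal{O}(R)$ rather than the stated bound. Incidentally, the transition-radius refinement buys nothing here: the sum of your minima is always dominated by summing the uniform bound $2\pi e(1+a)b^i$ over the at most $R/2$ bands, which is all that is needed for the main term.

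The paper's accounting is blunter and leaves nothing per band to fold in: it bounds the number of roots in band $j$ by twice the maximal angular displacement from $v$ divided by the minimal separation, i.e.\ by $2(1+a)b^i \pi e^{-j} / e^{-(j+1)} = 2\pi e(1+a)b^i$, with no additive term and no minimum with $2\pi$ (each root is charged a reserved arc of width $e^{-(j+1)}$ inside the interval $\Phi$), and summing over the at most $R/2$ bands gives $\pi e R(1+a)b^i$ outright. Your ``$+1$ per band'' is exactly the boundary term this charging argument never generates. The entire $+2$, i.e.\ $2\pi e$, is then spent on two special cases only: one root accounting for the clique $V(D_{R/2})$, which is wiped out when the first root of the phase is processed, and the case $R < 1$, where \Cref{lem:strongly-unit-disk-clique-cover} covers the whole graph by at most $2\pi\sqrt{e}$ cliques and every processed root deletes at least one entire clique, so the phase has at most $2\pi\sqrt{e}$ roots in total. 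That clique-cover argument is also what your ``direct counting suffices for small $R$'' actually requires but does not supply: without it there is no a priori constant bound on the number of roots of $v$ in a phase when $R < 1$. If you adopt the paper's per-band ratio bound (or are willing to state the lemma with a slightly larger constant to cover the boundary terms you insist on), the rest of your write-up goes through.
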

\begin{proof}
  In the following, we aim to
  utilize~\Cref{lem:path-bounded-angle,lem:constant-radial-increase-large-angle},
  both of which require that the considered vertices have a radius of
  at least $R/2$ and one additionally assumes that $R \ge 1$
  Therefore, we first argue about the case where these conditions are
  not met.  First note that after the first root in a phase is
  processed, all vertices with radius at most $R/2$ are removed since
  (if they exist in the first place) they form a clique.
  Additionally, when $R < 1$, the whole graph can be covered by few
  cliques.  More precisely,
  by~\Cref{lem:strongly-unit-disk-clique-cover} a strongly hyperbolic
  unit disk graph with radius $R$ can be covered by
  $\max\{2\pi \sqrt{2}, 2 \pi e^{R/2}\}$ cliques.  In particular, for
  $R < 1$, this yields a bound of $2 \pi \sqrt{e}$.  Since processing
  each root removes at least one such clique from the graph, the
  number of roots in the phase is bounded by the number of cliques.
  It follows, considering the first clique in $D_{R/2}$ and the
  remaining cliques when $R < 1$, that we can bound the roots of $v$
  in phase $i$ as $|\bm{\rho}_i(v)| \le 1 + 2\pi\sqrt{e} \le 2\pi e$,
  which we account for with the $+2$ in the lemma statement.

  For the remaining roots of $v$ we can now assume that $R \ge 1$ and
  that all vertices have radius at least $R/2$.  Furthermore, it
  suffices to show that there are at most $\pi e R (1 + a)b^i$ such
  roots.  We cover the remainder of the disk with $R/2$ bands of
  radial width $1$, where the $j$th band (for
  $j \in \{0, \dots, R/2 - 1\}$) contains all points with radius in
  $[R/2 + j, R/2 + j + 1]$, see \Cref{fig:bands}.  The claim then
  follows if we can bound the number of roots of $v$ in a single band
  by $2 \pi e (1 + a)b^{i}$.

  \begin{figure}
    \centering
    \includegraphics[scale=1.0]{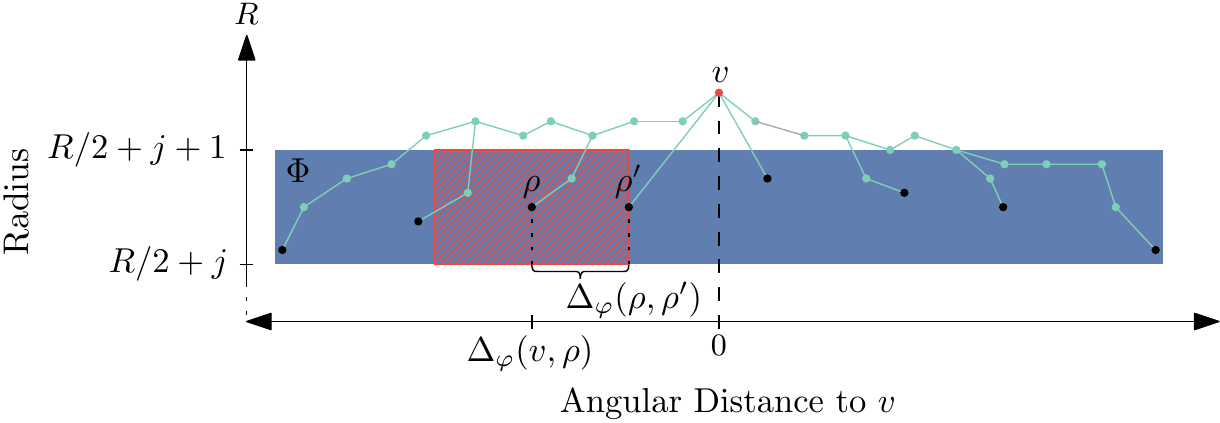}
    \caption{A vertex $v$ (red dot) and the roots (black vertices)
      that are contained in the $j$th band and are connected to $v$
      (via the green paths).  All roots lie in the angular interval
      $\Phi$ (blue region).  Other than $\rho$, no root can lie in the
      red region.}
    \label{fig:root-sequence}
  \end{figure}

  Let $\bm{\rho}_{i, j}(v)$ denote the roots of $v$ that lie in the
  $j$th band (see Figure~\ref{fig:root-sequence}).  We first bound the
  angular distance between $v$ and a root in $\bm{\rho}_{i, j}(v)$,
  and with that the width of the angular interval $\Phi$ (blue region
  in Figure~\ref{fig:root-sequence}) that contains all of them.
  Afterwards, we show that each root reserves a portion of $\Phi$ that
  no other root can be in.  An upper bound on $|\bm{\rho}_{i, j}(v)|$
  is then obtained by the quotient of the widths of the two intervals.

  Consider a root $\rho \in \bm{\rho}_{i, j}(v)$.  Since the roots are
  processed in order of increasing radius, all vertices of radius at
  most $r(\rho)$ have been removed before.  Consequently, the path
  from~$\rho$ to $v$ in the partial shortest-path tree rooted at
  $\rho$ consists only of vertices with radius at least~$r(\rho)$.
  Moreover, in phase $i$ the depth of this tree is $(1 + a)b^i$, which
  means that the path between $\rho$ and $v$ is at most this long.
  Therefore, we can apply Lemma~\ref{lem:path-bounded-angle} to
  conclude that the maximum angular distance between $v$ and a root
  $\rho$ is at most
  \begin{align*}
    \max_{\rho \in \bm{\rho}_{i, j}(v)} \Delta_{\varphi}(v, \rho) \le \max_{\rho \in \bm{\rho}_{i, j}(v)} (1 + a)b^i \cdot \pi e^{R/2 - r(\rho)} \le (1 + a)b^i \cdot \pi e^{-j},
  \end{align*}
  where the last inequality stems from the fact that $r(\rho) \ge R/2
  + j$ holds for all $\rho \in \bm{\rho}_{i, j}(v)$.  Moreover, since
  roots cannot be adjacent (as they would otherwise delete each other)
  and all roots in $\bm{\rho}_{i, j}(v)$ have their radii in $[R/2 +
  j, R/2 + j + 1]$, we can apply
  Lemma~\ref{lem:constant-radial-increase-large-angle} to conclude
  that the minimum angular distance between two roots $\rho, \rho' \in
  \bm{\rho}_{i, j}(v)$ is at least
  \begin{align*}
    \min_{\rho \neq \rho' \in \bm{\rho}_{i, j}(v)}\Delta_{\varphi}(\rho, \rho') \ge e^{R/2 - (R/2 + j + 1)} = e^{-(j + 1)}.
  \end{align*}
  Note that the angular interval $\Phi$ extends to both angular
  directions from $v$. % Thus,
  Therefore, the number of roots in $\bm{\rho}_{i, j}(v)$ can be
  bounded by
  \begin{align*}
    |\bm{\rho}_{i, j}(v)| &\le 2 \cdot \frac{\max_{\rho \in \bm{\rho}_{i, j}(v)} \Delta_{\varphi}(v, \rho)}{\min_{\rho \neq \rho' \in \bm{\rho}_{i, j}(v)} \Delta_{\varphi}(\rho, \rho')} \\
                          &\le \frac{2 (1 + a)b^{i} \cdot \pi e^{-j}}{e^{-(j + 1)}} \\
                          &= 2 \pi e(1 + a)b^i.\qedhere
  \end{align*}
\end{proof}

With this we are now ready to bound the number $k$ of trees that a
vertex is contained in, for tree-covers produced by the \treecompalgo
algorithm.

\begin{proof}[Proof of Theorem~\ref{thm:tree-cover-hyperbolic-udg}]
  First note that the values for $c$ and $\Additive$ hold for any
  graph due to~\Cref{lem:tree-cover-algo-stretch}.  It remains to show
  that the stated bound on $k$ is valid.  To that end, we make use
  of~\Cref{lem:bounded-labels-in-a-phase}, which states that $v$ is
  contained in at most $\pi e (R (1 + a)b^{i} + 2)$ trees in phase
  $i$, and sum over all phases.  Since the radius of the shortest-path
  trees that are removed from the graph in two consecutive phases
  increases by a factor of $b$ and the algorithm terminates when the
  first tree in a phase deletes the whole graph, there are at most
  $\lceil{\log_b(\diam(G))}\rceil$ phases.  Thus,
  \begin{align*}
    k &= \sum_{i = 0}^{\log_b(\diam(G)) + 1} \pi e (R (1 + a)b^i + 2) \\
      &= \pi e \left( R (1 + a) \left( \sum_{i = 0}^{\log_b(\diam(G)) + 1} b^i \right) + 2\left( \log_b(\diam(G)) + 2 \right) \right).
  \end{align*}
  Note that the remaining sum is a partial sum of a geometric series
  with $b > 1$, which can be computed as
  $\sum_{i = 0}^{x}b^i = (b^{x + 1} - 1)/(b - 1)$.  We obtain
  \begin{align*}
    k &= \pi e \left( R (1 + a) \frac{b^{\log_b(\diam(G)) + 2} - 1}{b - 1} + 2\left( \log_b(\diam(G)) + 2 \right) \right) \\
      &= \pi e \left( \frac{1 + a}{b - 1} (b^2 \diam(G) - 1) R + 2 \left( \log_b(\diam(G)) + 2 \right) \right). \qedhere
  \end{align*}
\end{proof}

Since hyperbolic random graphs are a special case of strongly
hyperbolic unit disk graphs, where vertices are distributed in a disk
of radius $R = \mathcal{O}(\log n)$ and since these graphs have a
diameter of $\mathcal{O}(\log n)$ asymptotically almost
surely~\cite{ms-k-19}, we obtain the following corollary.

\begin{corollary}
  Let $G$ be a hyperbolic random graph.  Given the disk representation
  of $G$, $a > 0$, and $b > 1$, the \treecompalgo algorithm with the
  radially increasing root selection strategy computes a
  $(c, \Additive, k)$-tree-cover of $G$ with $c = 1 + 2b/a$,
  $\Additive = 2$, and, asymptotically almost surely
  % $k = \mathcal{O}( (1 + a) b^2 / (b - 1) \cdot \log^2 n )$.
  \begin{align*}
    k = \mathcal{O}\left( \frac{(1 + a) b^2}{b - 1} \cdot \log^2 n \right).
  \end{align*}
\end{corollary}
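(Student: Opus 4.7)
The plan is to apply Theorem~\ref{thm:tree-cover-hyperbolic-udg} directly, treating the corollary as an instantiation of the general bound for the special case of hyperbolic random graphs. The stretch parameters $c = 1 + 2b/a$ and $\Additive = 2$ carry over verbatim, since they hold for any strongly hyperbolic unit disk graph and the paragraph preceding the corollary recalls that hyperbolic random graphs form a subclass of this graph class. All remaining work concerns the bound on $k$.

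The bound from Theorem~\ref{thm:tree-cover-hyperbolic-udg} depends on the graph only through the threshold radius $R$ and the graph diameter $\diam(G)$. For hyperbolic random graphs, two facts suffice. First, by definition the threshold radius is $R = 2\log(n) + \mathcal{O}(1)$, so $R = \mathcal{O}(\log n)$. Second, the cited result~\cite{ms-k-19} gives $\diam(G) = \mathcal{O}(\log n)$ asymptotically almost surely. I would substitute both into
\begin{align*}
k = \pi e \left( \frac{1 + a}{b - 1} (b^2 \diam(G) - 1) R + 2\left( \log_b(\diam(G)) + 2 \right) \right)
\end{align*}
and collect terms.

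After substitution the dominant contribution comes from the product $\diam(G) \cdot R = \mathcal{O}(\log^2 n)$ inside the first summand, yielding $\mathcal{O}\bigl(\tfrac{(1+a)b^2}{b-1}\log^2 n\bigr)$. The $-1$ in $(b^2\diam(G)-1)$ is absorbed into the $\mathcal{O}$, and the second summand contributes only $\mathcal{O}(\log\log n)$, which is dominated by the first. Pulling the constant $\pi e$ into the $\mathcal{O}$ gives exactly the stated expression for $k$.

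The whole argument is essentially bookkeeping; there is no genuine obstacle beyond citing the two asymptotic facts ($R$ and $\diam(G)$ are both $\mathcal{O}(\log n)$) and checking that the a.a.s.\ event on the diameter is the only source of randomness in the final bound, so that the statement is indeed an ``asymptotically almost surely'' claim rather than a worst-case one.
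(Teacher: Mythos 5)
Your proposal matches the paper's own (implicit) argument: the corollary is obtained by plugging $R = \mathcal{O}(\log n)$ and the a.a.s.\ bound $\diam(G) = \mathcal{O}(\log n)$ from~\cite{ms-k-19} into the bound of Theorem~\ref{thm:tree-cover-hyperbolic-udg} and absorbing constants and lower-order terms. Your bookkeeping, including the observation that the diameter bound is the only probabilistic ingredient, is exactly what the paper does.
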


Together with Theorem~\ref{thm:cover-based-routing}, it follows that
greedy routing on hyperbolic random graphs can be implemented such
that the resulting scheme is starvation-free and has stretch
$1 + 2b / a$ with additive bound $2$.  Moreover, by setting
$a = b = 2$ we obtain a multiplicative stretch of $3$, and can derive
that the scheme, asymptotically almost surely, stores
$\mathcal{O}(\log^4 n)$ bits at each vertex and takes
$\mathcal{O}(\log^2 n)$ time per query, which improves upon the
performance lower bound for general graphs.

\section{Experiments}
\label{apx:experiments}

\begin{figure}[t]
  \centering
  \input{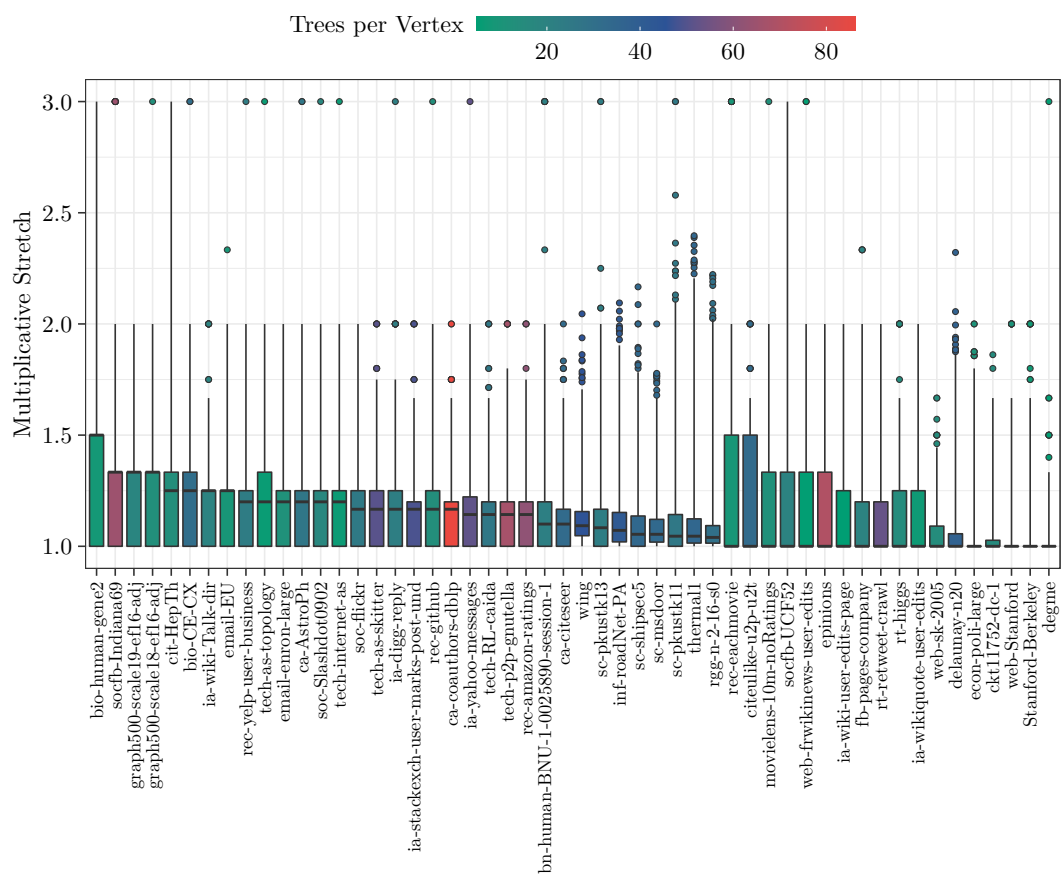}
  \caption{Multiplicative stretch when routing with a tree-cover
    obtained using \treecompalgo with $a = b = 2$.  Colors show the
    number of trees that an average vertex is contained in.  Boxes
    denote the interquartile range extending to the $25$th and $75$th
    percentile with horizontal bars showing the median.  Whiskers
    extend to $0.1\%$ and $99.9\%$, while dots show values beyond
    that.}
  \label{fig:experiments}
\end{figure}

We designed a routing scheme that utilizes hierarchical structures and
showed that it has small stretch, space requirements, and query times
on strongly hyperbolic unit disk graphs.  To evaluate how well our
results translate to real-world networks, we performed experiments on
$50$ graphs from the Network Data Repository~\cite{ra-ndrigav-15},
with sizes ranging from \SI{14}{k} to over \SI{2.3}{M} vertices.
Since we do not have unit disk representations for these, we used the
degrees of the vertices as a proxy for their place in the hierarchical
structure.
% TODO: In the long version we should discuss why degrees are a good
% proxy for radii and therefore for the hieararchy.
That is, the root selection strategy processed the vertices by
decreasing degree.  For each graph, we computed a tree-cover using the
\treecompalgo algorithm with parameters $a = b = 2$, and sampled
\SI{10}{k} vertex pairs for which the path obtained by our routing
scheme was compared to a shortest path between them.
Figure~\ref{fig:experiments} shows boxplots aggregating our
observations.

As expected, the maximum observed stretch is $3$.  However, this
stretch occurred only rarely.  In all networks most of the sampled
routes had a stretch of at most $1.5$ and in $16$ of the $50$ graphs
the median stretch was $1$.  At the same time, the number of trees
that a vertex was contained in on average remained small.  In $42$ of
the $50$ networks this number was less than $50$, even in networks
with over \SI{2.3}{M} vertices.

%%
%% Bibliography
%%

%% Please use bibtex, 

\bibliography{forest_routing}

\begin{thebibliography}{10}

\bibitem{agm-ricsto-04}
Ittai Abraham, Cyril Gavoille, and Dahlia Malkhi.
\newblock Routing with improved communication-space trade-off.
\newblock In {\em Distributed Computing, 18th International Conference, {DISC}
  2004, Amsterdam, The Netherlands, October 4-7, 2004, Proceedings}, volume
  3274, pages 305--319, 2004.
\newblock \href {https://doi.org/10.1007/978-3-540-30186-8\_22}
  {\path{doi:10.1007/978-3-540-30186-8\_22}}.

\bibitem{agm-ssto-06}
Ittai Abraham, Cyril Gavoille, and Dahlia Malkhi.
\newblock On space-stretch trade-offs: Upper bounds.
\newblock In {\em Annual ACM Symposium on Parallelism in Algorithms and
  Architectures}, volume 2006, pages 217--224, 2006.
\newblock \href {https://doi.org/10.1145/1148109.1148144}
  {\path{doi:10.1145/1148109.1148144}}.

\bibitem{agm-cnirms-08}
Ittai Abraham, Cyril Gavoille, Dahlia Malkhi, Noam Nisan, and Mikkel Thorup.
\newblock Compact name-independent routing with minimum stretch.
\newblock {\em ACM Trans. Algorithms}, 4(3), 2008.
\newblock \href {https://doi.org/10.1145/1367064.1367077}
  {\path{doi:10.1145/1367064.1367077}}.

\bibitem{as-hmf-65}
M.~Abramowitz and I.A. Stegun.
\newblock {\em Handbook of Mathematical Functions: With Formulas, Graphs, and
  Mathematical Tables}.
\newblock Applied mathematics series. Dover Publications, 1965.

\bibitem{agh-a-11}
R.~{Agarwal}, P.~B. {Godfrey}, and S.~{Har-Peled}.
\newblock Approximate distance queries and compact routing in sparse graphs.
\newblock In {\em 2011 Proceedings IEEE INFOCOM}, pages 1754--1762, 2011.
\newblock \href {https://doi.org/10.1109/INFCOM.2011.5934973}
  {\path{doi:10.1109/INFCOM.2011.5934973}}.

\bibitem{ag-b-13}
Rachit Agarwal and Philip~Brighten Godfrey.
\newblock Brief announcement: a simple stretch 2 distance oracle.
\newblock In Panagiota Fatourou and Gadi Taubenfeld, editors, {\em {ACM}
  Symposium on Principles of Distributed Computing, {PODC} '13}, pages
  110--112, 2013.
\newblock \href {https://doi.org/10.1145/2484239.2484277}
  {\path{doi:10.1145/2484239.2484277}}.

\bibitem{acl-rgmplg-01}
William Aiello, Fan Chung, and Linyuan Lu.
\newblock A random graph model for power law graphs.
\newblock {\em Experimental Mathematics}, 10(1):53--66, 2001.
\newblock \href {https://doi.org/10.1080/10586458.2001.10504428}
  {\path{doi:10.1080/10586458.2001.10504428}}.

\bibitem{adf-s-12}
Patrizio Angelini, Giuseppe Di~Battista, and Fabrizio Frati.
\newblock Succinct greedy drawings do not always exist.
\newblock {\em Networks}, 59(3):267--274, 2012.
\newblock \href {https://doi.org/10.1002/net.21449}
  {\path{doi:10.1002/net.21449}}.

\bibitem{akp-besfdp-94}
B.~{Awerbuch}, S.~{Kutten}, and D.~{Peleg}.
\newblock On buffer-economical store-and-forward deadlock prevention.
\newblock {\em IEEE Transactions on Communications}, 42(11):2934--2937, 1994.
\newblock \href {https://doi.org/10.1109/26.328973}
  {\path{doi:10.1109/26.328973}}.

\bibitem{ablp-i-90}
Baruch Awerbuch, Amotz Bar-Noy, Nathan Linial, and David Peleg.
\newblock Improved routing strategies with succinct tables.
\newblock {\em Journal of Algorithms}, 11(3):307 -- 341, 1990.
\newblock \href {https://doi.org/10.1016/0196-6774(90)90017-9}
  {\path{doi:10.1016/0196-6774(90)90017-9}}.

\bibitem{ap-rpcsto-92}
Baruch Awerbuch and David Peleg.
\newblock Routing with polynomial communication-space trade-off.
\newblock {\em SIAM J. Discrete Math.}, 5:151--162, 05 1992.
\newblock \href {https://doi.org/10.1137/0405013} {\path{doi:10.1137/0405013}}.

\bibitem{bfkk-henogr-20}
Thomas Bl\"{a}sius, Tobias Friedrich, Maximilian Katzmann, and Anton Krohmer.
\newblock Hyperbolic embeddings for near-optimal greedy routing.
\newblock {\em ACM J. Exp. Algorithmics}, 25, 2020.
\newblock \href {https://doi.org/10.1145/3381751} {\path{doi:10.1145/3381751}}.

\bibitem{bpk-si-10}
Marián Boguñá, Fragkiskos Papadopoulos, and Dmitri Krioukov.
\newblock Sustaining the internet with hyperbolic mapping.
\newblock {\em Nature Communications}, 1(1):62, 2010.
\newblock \href {https://doi.org/10.1038/ncomms1063}
  {\path{doi:10.1038/ncomms1063}}.

\bibitem{bc-crplgas-06}
Arthur Brady and Lenore Cowen.
\newblock {\em Compact Routing on Power Law Graphs with Additive Stretch},
  pages 119--128.
\newblock 2006.
\newblock \href {https://doi.org/10.1137/1.9781611972863.12}
  {\path{doi:10.1137/1.9781611972863.12}}.

\bibitem{bkl-graswp-17}
Karl Bringmann, Ralph Keusch, Johannes Lengler, Yannic Maus, and Anisur~Rahaman
  Molla.
\newblock Greedy routing and the algorithmic small-world phenomenon.
\newblock In {\em Proceedings of the ACM Symposium on Principles of Distributed
  Computing}, page 371–380, 2017.
\newblock \href {https://doi.org/10.1145/3087801.3087829}
  {\path{doi:10.1145/3087801.3087829}}.

\bibitem{c-n-94}
Leizhen Cai.
\newblock Np-completeness of minimum spanner problems.
\newblock {\em Discrete Applied Mathematics}, 48(2):187--194, 1994.
\newblock \href {https://doi.org/10.1016/0166-218X(94)90073-6}
  {\path{doi:10.1016/0166-218X(94)90073-6}}.

\bibitem{cpfv-grwwms-14}
M.~{Camelo}, D.~{Papadimitriou}, L.~{Fàbrega}, and P.~{Vilà}.
\newblock Geometric routing with word-metric spaces.
\newblock {\em IEEE Communications Letters}, 18(12):2125--2128, 2014.
\newblock \href {https://doi.org/10.1109/LCOMM.2014.2364213}
  {\path{doi:10.1109/LCOMM.2014.2364213}}.

\bibitem{cstw-crsad-12}
Wei Chen, Christian Sommer, Shang-Hua Teng, and Yajun Wang.
\newblock A compact routing scheme and approximate distance oracle for
  power-law graphs.
\newblock {\em ACM Trans. Algorithms}, 9(1), 2012.
\newblock \href {https://doi.org/10.1145/2390176.2390180}
  {\path{doi:10.1145/2390176.2390180}}.

\bibitem{ccj-u-90}
Brent~N. {Clark}, Charles~J. {Colbourn}, and David~S. {Johnson}.
\newblock {Unit disk graphs}.
\newblock {\em {Discrete Mathematics}}, 86(1-3):165--177, 1990.
\newblock \href {https://doi.org/10.1016/0012-365X(90)90358-O}
  {\path{doi:10.1016/0012-365X(90)90358-O}}.

\bibitem{c-facsps-98}
Edith Cohen.
\newblock Fast algorithms for constructing t-spanners and paths with stretch t.
\newblock {\em SIAM Journal on Computing}, 28(1):210--236, 1998.
\newblock \href {https://doi.org/10.1137/S0097539794261295}
  {\path{doi:10.1137/S0097539794261295}}.

\bibitem{c-crms-01}
Lenore~J Cowen.
\newblock Compact routing with minimum stretch.
\newblock {\em Journal of Algorithms}, 38(1):170 -- 183, 2001.
\newblock \href {https://doi.org/10.1006/jagm.2000.1134}
  {\path{doi:10.1006/jagm.2000.1134}}.

\bibitem{dhz-a-96}
D.~{Dor}, S.~{Halperin}, and U.~{Zwick}.
\newblock All pairs almost shortest paths.
\newblock In {\em Proceedings of 37th Conference on Foundations of Computer
  Science}, pages 452--461, 1996.
\newblock \href {https://doi.org/10.1109/SFCS.1996.548504}
  {\path{doi:10.1109/SFCS.1996.548504}}.

\bibitem{d-crsgcg-05}
Yon Dourisboure.
\newblock Compact routing schemes for generalised chordal graphs.
\newblock {\em J. Graph Algorithms Appl.}, 9(2):277--297, 2005.
\newblock URL:
  \url{https://www.emis.de/journals/JGAA/accepted/2005/Dourisboure2005.9.2.pdf}.

\bibitem{dg-icrscg-02}
Yon Dourisboure and Cyril Gavoille.
\newblock Improved compact routing scheme for chordal graphs.
\newblock In {\em Distributed Computing}, pages 252--264, 2002.

\bibitem{dyl-ctsg-04}
Feodor Dragan, Chenyu Yan, and Irina Lomonosov.
\newblock Collective tree spanners of graphs.
\newblock In {\em SIAM Journal on Discrete Mathematics}, volume~20, 10 2004.
\newblock \href {https://doi.org/10.1007/978-3-540-27810-8_7}
  {\path{doi:10.1007/978-3-540-27810-8_7}}.

\bibitem{egp-c-03}
Tamar Eilam, Cyril Gavoille, and David Peleg.
\newblock Compact routing schemes with low stretch factor.
\newblock {\em Journal of Algorithms}, 46(2):97 -- 114, 2003.
\newblock \href {https://doi.org/10.1016/S0196-6774(03)00002-6}
  {\path{doi:10.1016/S0196-6774(03)00002-6}}.

\bibitem{ewg-rmsicr-08}
M.~{Enachescu}, M.~{Wang}, and A.~{Goel}.
\newblock Reducing maximum stretch in compact routing.
\newblock In {\em IEEE INFOCOM 2008 - The 27th Conference on Computer
  Communications}, pages 336--340, 2008.
\newblock \href {https://doi.org/10.1109/INFOCOM.2008.76}
  {\path{doi:10.1109/INFOCOM.2008.76}}.

\bibitem{eg-sggruhg-11}
D.~{Eppstein} and M.~T. {Goodrich}.
\newblock Succinct greedy geometric routing using hyperbolic geometry.
\newblock {\em IEEE Transactions on Computers}, 60(11):1571--1580, 2011.
\newblock \href {https://doi.org/10.1109/TC.2010.257}
  {\path{doi:10.1109/TC.2010.257}}.

\bibitem{eg-s-08}
David Eppstein and Michael~T Goodrich.
\newblock Succinct greedy graph drawing in the hyperbolic plane.
\newblock In {\em International Symposium on Graph Drawing}, pages 14--25,
  2008.
\newblock \href {https://doi.org/10.1007/978-3-642-00219-9_3}
  {\path{doi:10.1007/978-3-642-00219-9_3}}.

\bibitem{fpw-grbs-09}
R.~{Flury}, S.~V. {Pemmaraju}, and R.~{Wattenhofer}.
\newblock Greedy routing with bounded stretch.
\newblock In {\em IEEE INFOCOM 2009}, pages 1737--1745, 2009.
\newblock \href {https://doi.org/10.1109/INFCOM.2009.5062093}
  {\path{doi:10.1109/INFCOM.2009.5062093}}.

\bibitem{fgnw-odlst-17}
Ofer Freedman, Pawe\l{} Gawrychowski, Patrick~K. Nicholson, and Oren Weimann.
\newblock Optimal distance labeling schemes for trees.
\newblock In {\em Proceedings of the ACM Symposium on Principles of Distributed
  Computing}, page 185–194, 2017.
\newblock \href {https://doi.org/10.1145/3087801.3087804}
  {\path{doi:10.1145/3087801.3087804}}.

\bibitem{gl-dlhg-05}
Cyril Gavoille and Olivier Ly.
\newblock Distance labeling in hyperbolic graphs.
\newblock In {\em Algorithms and Computation}, volume 3827, pages 1071--1079,
  2005.
\newblock \href {https://doi.org/10.1007/11602613_106}
  {\path{doi:10.1007/11602613_106}}.

\bibitem{gppr-d-04}
Cyril Gavoille, David Peleg, Stéphane Pérennes, and Ran Raz.
\newblock Distance labeling in graphs.
\newblock {\em Journal of Algorithms}, 53(1):85--112, 2004.
\newblock \href {https://doi.org/10.1016/j.jalgor.2004.05.002}
  {\path{doi:10.1016/j.jalgor.2004.05.002}}.

\bibitem{gp-mrrdn-96}
Cyril Gavoille and St\'{e}phane P\'{e}renn\`{e}s.
\newblock Memory requirement for routing in distributed networks.
\newblock In {\em Proceedings of the Fifteenth Annual ACM Symposium on
  Principles of Distributed Computing}, page 125–133, 1996.
\newblock \href {https://doi.org/10.1145/248052.248075}
  {\path{doi:10.1145/248052.248075}}.

\bibitem{gs-sggrep-09}
Michael~T. Goodrich and Darren Strash.
\newblock Succinct greedy geometric routing in the euclidean plane.
\newblock In {\em Algorithms and Computation}, pages 781--791, 2009.

\bibitem{gugelmann2012random}
Luca Gugelmann, Konstantinos Panagiotou, and Ueli Peter.
\newblock Random hyperbolic graphs: degree sequence and clustering.
\newblock In {\em International Colloquium on Automata, Languages, and
  Programming}, pages 573--585. Springer, 2012.

\bibitem{hst-fgfr-14}
R.~{Houthooft}, S.~{Sahhaf}, W.~{Tavernier}, F.~{De Turck}, D.~{Colle}, and
  M.~{Pickavet}.
\newblock Fault-tolerant greedy forest routing for complex networks.
\newblock In {\em 2014 6th International Workshop on Reliable Networks Design
  and Modeling (RNDM)}, pages 1--8, 2014.
\newblock \href {https://doi.org/10.1109/RNDM.2014.7014924}
  {\path{doi:10.1109/RNDM.2014.7014924}}.

\bibitem{hs-b-95}
M.L. Huson and A.~Sen.
\newblock Broadcast scheduling algorithms for radio networks.
\newblock In {\em Proceedings of MILCOM '95}, volume~2, pages 647--651 vol.2,
  1995.
\newblock \href {https://doi.org/10.1109/MILCOM.1995.483546}
  {\path{doi:10.1109/MILCOM.1995.483546}}.

\bibitem{kmrs-rudg-18}
Haim Kaplan, Wolfgang Mulzer, Liam Roditty, and Paul Seiferth.
\newblock Routing in unit disk graphs.
\newblock {\em Algorithmica}, 80(3):830–848, 2018.
\newblock \href {https://doi.org/10.1007/s00453-017-0308-2}
  {\path{doi:10.1007/s00453-017-0308-2}}.

\bibitem{kk-g-00}
Brad Karp and H.~T. Kung.
\newblock Gpsr: Greedy perimeter stateless routing for wireless networks.
\newblock In {\em Proceedings of the 6th Annual International Conference on
  Mobile Computing and Networking}, page 243–254, 2000.
\newblock \href {https://doi.org/10.1145/345910.345953}
  {\path{doi:10.1145/345910.345953}}.

\bibitem{kk-stgn-15}
Zolt{\'a}n Kir{\'a}ly and S{\'a}ndor Kisfaludi-Bak.
\newblock A succinct tree coding for greedy navigation.
\newblock Technical Report TR-2015-02, Egerv{\'a}ry Research Group, Budapest,
  2015.
\newblock {\tt egres.elte.hu}.

\bibitem{k-higqpt-20}
Sándor Kisfaludi-Bak.
\newblock Hyperbolic intersection graphs and (quasi)-polynomial time.
\newblock In {\em Symposium on Discrete Algorithms (SODA)}, pages 1621--1638,
  2020.
\newblock \href {https://doi.org/10.1137/1.9781611975994.100}
  {\path{doi:10.1137/1.9781611975994.100}}.

\bibitem{k-swp-00}
Jon Kleinberg.
\newblock The small-world phenomenon: An algorithmic perspective.
\newblock In {\em Proceedings of the Thirty-Second Annual ACM Symposium on
  Theory of Computing (STOC)}, page 163–170, 2000.
\newblock \href {https://doi.org/10.1145/335305.335325}
  {\path{doi:10.1145/335305.335325}}.

\bibitem{k-gruhs-07}
R.~Kleinberg.
\newblock Geographic routing using hyperbolic space.
\newblock In {\em Proceedings of the IEEE INFOCOM 2007 - 26th IEEE
  International Conference on Computer Communications}, page 1902–1909, 2007.
\newblock \href {https://doi.org/10.1109/INFCOM.2007.221}
  {\path{doi:10.1109/INFCOM.2007.221}}.

\bibitem{kfy-ci-04}
D.~{Krioukov}, K.~{Fall}, and X.~{Yang}.
\newblock Compact routing on internet-like graphs.
\newblock In {\em IEEE INFOCOM 2004}, volume~1, page 219, 2004.
\newblock \href {https://doi.org/10.1109/INFCOM.2004.1354495}
  {\path{doi:10.1109/INFCOM.2004.1354495}}.

\bibitem{kcfb-cri-07}
Dmitri Krioukov, k~c claffy, Kevin Fall, and Arthur Brady.
\newblock On compact routing for the internet.
\newblock {\em SIGCOMM Comput. Commun. Rev.}, 37(3):41–52, 2007.
\newblock \href {https://doi.org/10.1145/1273445.1273450}
  {\path{doi:10.1145/1273445.1273450}}.

\bibitem{kpk-h-10}
Dmitri Krioukov, Fragkiskos Papadopoulos, Maksim Kitsak, Amin Vahdat, and
  Mari\'an Bogu\~n\'a.
\newblock Hyperbolic geometry of complex networks.
\newblock {\em Phys. Rev. E}, 82:036106, 2010.
\newblock \href {https://doi.org/10.1103/PhysRevE.82.036106}
  {\path{doi:10.1103/PhysRevE.82.036106}}.

\bibitem{k-sahrg-16}
Anton Krohmer.
\newblock {\em Structures \& Algorithms in Hyperbolic Random Graphs}.
\newblock Doctoral thesis, Universit{\"a}t Potsdam, 2016.

\bibitem{m-dgraahn-07}
R.~B. {Muhammad}.
\newblock A distributed geometric routing algorithm for ad hocwireless
  networks.
\newblock In {\em Fourth International Conference on Information Technology
  (ITNG'07)}, pages 961--963, 2007.
\newblock \href {https://doi.org/10.1109/ITNG.2007.4}
  {\path{doi:10.1109/ITNG.2007.4}}.

\bibitem{ms-k-19}
Tobias Müller and Merlijn Staps.
\newblock The diameter of kpkvb random graphs.
\newblock {\em Advances in Applied Probability}, 51(2):358–377, 2019.
\newblock \href {https://doi.org/10.1017/apr.2019.23}
  {\path{doi:10.1017/apr.2019.23}}.

\bibitem{pr-crgr-05}
Christos~H. Papadimitriou and David Ratajczak.
\newblock On a conjecture related to geometric routing.
\newblock {\em Theoretical Computer Science}, 344(1):3--14, 2005.
\newblock \href {https://doi.org/10.1016/j.tcs.2005.06.022}
  {\path{doi:10.1016/j.tcs.2005.06.022}}.

\bibitem{pkbv-gfdsf-10}
F.~{Papadopoulos}, D.~{Krioukov}, M.~{Boguna}, and A.~{Vahdat}.
\newblock Greedy forwarding in dynamic scale-free networks embedded in
  hyperbolic metric spaces.
\newblock In {\em 2010 Proceedings IEEE INFOCOM}, pages 1--9, 2010.
\newblock \href {https://doi.org/10.1109/INFCOM.2010.5462131}
  {\path{doi:10.1109/INFCOM.2010.5462131}}.

\bibitem{prt-nidosg-12}
M.~{Patrascu}, L.~{Roditty}, and M.~{Thorup}.
\newblock A new infinity of distance oracles for sparse graphs.
\newblock In {\em 2012 IEEE 53rd Annual Symposium on Foundations of Computer
  Science}, pages 738--747, 2012.
\newblock \href {https://doi.org/10.1109/FOCS.2012.44}
  {\path{doi:10.1109/FOCS.2012.44}}.

\bibitem{ps-g-89}
David Peleg and Alejandro~A. Schäffer.
\newblock Graph spanners.
\newblock {\em Journal of Graph Theory}, 13(1):99--116, 1989.
\newblock \href {https://doi.org/10.1002/jgt.3190130114}
  {\path{doi:10.1002/jgt.3190130114}}.

\bibitem{pu-tsert-89}
David Peleg and Eli Upfal.
\newblock A trade-off between space and efficiency for routing tables.
\newblock {\em J. ACM}, 36(3):510–530, 1989.
\newblock \href {https://doi.org/10.1145/65950.65953}
  {\path{doi:10.1145/65950.65953}}.

\bibitem{pr-dot-14}
Mihai Pǎtraşcu and Liam Roditty.
\newblock Distance oracles beyond the thorup--zwick bound.
\newblock {\em SIAM Journal on Computing}, 43(1):300--311, 2014.
\newblock \href {https://doi.org/10.1137/11084128X}
  {\path{doi:10.1137/11084128X}}.

\bibitem{rs-r-03}
Vijay Raghavan and Jeremy Spinrad.
\newblock Robust algorithms for restricted domains.
\newblock {\em Journal of Algorithms}, 48(1):160--172, 2003.
\newblock \href {https://doi.org/https://doi.org/10.1016/S0196-6774(03)00048-8}
  {\path{doi:https://doi.org/10.1016/S0196-6774(03)00048-8}}.

\bibitem{rrp-grli-03}
Ananth Rao, Sylvia Ratnasamy, Christos Papadimitriou, Scott Shenker, and Ion
  Stoica.
\newblock Geographic routing without location information.
\newblock In {\em Proceedings of the 9th Annual International Conference on
  Mobile Computing and Networking}, page 96–108, 2003.
\newblock \href {https://doi.org/10.1145/938985.938996}
  {\path{doi:10.1145/938985.938996}}.

\bibitem{rt-c-16}
Liam Roditty and Roei Tov.
\newblock Close to linear space routing schemes.
\newblock {\em Distributed Computing}, 29(1):65 -- 74, 2016.
\newblock \href {https://doi.org/10.1007/s00446-015-0256-5}
  {\path{doi:10.1007/s00446-015-0256-5}}.

\bibitem{ra-ndrigav-15}
Ryan~A. Rossi and Nesreen~K. Ahmed.
\newblock The network data repository with interactive graph analytics and
  visualization.
\newblock In {\em AAAI}, 2015.
\newblock URL: \url{http://networkrepository.com}.

\bibitem{sk-lirn-85}
N.~Santoro and R.~Khatib.
\newblock Labelling and implicit routing in networks.
\newblock {\em The Computer Journal}, 28(1):5--8, 1985.
\newblock \href {https://doi.org/10.1093/comjnl/28.1.5}
  {\path{doi:10.1093/comjnl/28.1.5}}.

\bibitem{tk-otrrd-84}
H.~{Takagi} and L.~{Kleinrock}.
\newblock Optimal transmission ranges for randomly distributed packet radio
  terminals.
\newblock {\em IEEE Transactions on Communications}, 32(3):246--257, 1984.
\newblock \href {https://doi.org/10.1109/TCOM.1984.1096061}
  {\path{doi:10.1109/TCOM.1984.1096061}}.

\bibitem{tczy-tcbgrgd-10}
M.~{Tang}, H.~{Chen}, G.~{Zhang}, and J.~{Yang}.
\newblock Tree cover based geographic routing with guaranteed delivery.
\newblock In {\em 2010 IEEE International Conference on Communications}, pages
  1--5, 2010.
\newblock \href {https://doi.org/10.1109/ICC.2010.5502391}
  {\path{doi:10.1109/ICC.2010.5502391}}.

\bibitem{tyz-crrplg-09}
Mingdong Tang, Jing Yang, and Guoqing Zhang.
\newblock Compact routing on random power law graphs.
\newblock In {\em 8th IEEE International Symposium on Dependable, Autonomic and
  Secure Computing, DASC 2009}, pages 575--578, 12 2009.
\newblock \href {https://doi.org/10.1109/DASC.2009.133}
  {\path{doi:10.1109/DASC.2009.133}}.

\bibitem{t-coradpd-04}
Mikkel Thorup.
\newblock Compact oracles for reachability and approximate distances in planar
  digraphs.
\newblock {\em J. ACM}, 51(6):993–1024, 2004.
\newblock \href {https://doi.org/10.1145/1039488.1039493}
  {\path{doi:10.1145/1039488.1039493}}.

\bibitem{tz-crs-01}
Mikkel Thorup and Uri Zwick.
\newblock Compact routing schemes.
\newblock In {\em Proceedings of the Thirteenth Annual ACM Symposium on
  Parallel Algorithms and Architectures}, page 1–10, 2001.
\newblock \href {https://doi.org/10.1145/378580.378581}
  {\path{doi:10.1145/378580.378581}}.

\bibitem{tz-ado-05}
Mikkel Thorup and Uri Zwick.
\newblock Approximate distance oracles.
\newblock {\em J. ACM}, 52(1):1–24, 2005.
\newblock \href {https://doi.org/10.1145/1044731.1044732}
  {\path{doi:10.1145/1044731.1044732}}.

\bibitem{wp-srvge-09}
C.~{Westphal} and G.~{Pei}.
\newblock Scalable routing via greedy embedding.
\newblock In {\em IEEE INFOCOM 2009}, pages 2826--2830, 2009.
\newblock \href {https://doi.org/10.1109/INFCOM.2009.5062240}
  {\path{doi:10.1109/INFCOM.2009.5062240}}.

\bibitem{zg-gregss-11}
Huaming Zhang and Swetha Govindaiah.
\newblock Greedy routing via embedding graphs onto semi-metric spaces.
\newblock In {\em Frontiers in Algorithmics and Algorithmic Aspects in
  Information and Management}, pages 58--69, 2011.

\bibitem{zg-g-13}
Huaming Zhang and Swetha Govindaiah.
\newblock Greedy routing via embedding graphs onto semi-metric spaces.
\newblock {\em Theoretical Computer Science}, 508:26--34, 2013.
\newblock \href {https://doi.org/10.1016/j.tcs.2012.01.049}
  {\path{doi:10.1016/j.tcs.2012.01.049}}.

\end{thebibliography}

\end{document}